\documentclass[11pt]{article}
\usepackage{fullpage}
\usepackage{hdb_macros}

\renewcommand{\tilde}{\widetilde}

\title{Online Orthogonal Vectors Revisited}
\author{
    Karthik Gajulapalli%
    \thanks{Georgetown University. Email: \texttt{kg816@georgetown.edu}.}
    \and
	Alexander Golovnev%
	\thanks{Georgetown University. Email: \texttt{alexgolovnev@gmail.com}.}
	\and
	Samuel King%
    \thanks{Georgetown University. Email: \texttt{sik29@georgetown.edu}.}
    \and
    Sidhant Saraogi%
    \thanks{Georgetown University. Email: \texttt{ss4456@georgetown.edu}.}
}
\date{}
\begin{document}
\pagenumbering{roman}
\maketitle
\thispagestyle{empty}

\begin{abstract}
We prove new upper and lower bounds for the Online Orthogonal Vectors Problem ($\OnlineOV_{n,d}$). In this problem, a preprocessing algorithm receives $n$ vectors $x_1,\ldots,x_n\in\{0,1\}^d$ and constructs a data structure of size $S$. A query algorithm subsequently receives a query vector $q\in\{0,1\}^d$ and in time $T$ decides whether $q$ is orthogonal to any of the input vectors $x_i$.

We design a new deterministic data structure for $\OnlineOV_{n,d}$. In low dimensions ($d = c \log n$), our data structure matches the performance of the best known randomized algorithm due to Chan [SoCG 2017]. Furthermore, in moderate dimensions ($d=n^{\eps}$), we give the first improvement since Charikar, Indyk and Panigrahy [ICALP 2002]. Along the way, we give the first deterministic refutation of a conjecture on the hardness of \OnlineOV{} posed by Goldstein, Lewenstein and Porat [ISAAC 2017]. This data structure also extends to a number of problems, including Partial Match, Orthogonal Range Search, and DNF Evaluation. We use a novel structure-versus-randomness decomposition to design our algorithm. 

Under the Non-Uniform Strong Exponential Time Hypothesis, we also prove arbitrarily large polynomial space lower bounds for any \OnlineOV{} data structure with sublinear query time even with computationally unbounded preprocessing. These lower bounds extend to several other problems, including Polynomial Evaluation, Partial Match, Orthogonal Range Search, and Approximate Nearest Neighbors. We also prove similar lower bounds for \ThreeSUM{} with preprocessing under the Non-Uniform Hamiltonian Path Conjecture.
\end{abstract}

\newpage

\tableofcontents
\thispagestyle{empty}
\newpage
\pagenumbering{arabic}

\section{Introduction}
In the Orthogonal Vectors problem ($\OV_{n,d}$), the task is to determine whether a set of $n$ input vectors $x_1,\ldots,x_n\in\{0,1\}^d$ contains a pair of orthogonal vectors. \OV{} is a central problem in fine-grained complexity, and its associated hardness assumptions have been used to establish tight bounds for a wide range of problems~\cite{W05,V15,V18,GIKW17,CW19}. The most interesting settings for the dimension parameter $d$ 
are the low-dimensional case ($d=c\log{n}$ where $c > 1$ is a large constant) and the moderate-dimensional case ($d=n^\eps$ for some small constant $\eps>0$). The best known algorithms for \OV{} run in time $n^{2-\Omega(1/\log{c})}$ when $d=c\log{n}$~\cite{ILPS14,C14,AWY14,CW16}. 

Beyond its extensive algorithmic applications, \OV{} also finds applications in complexity lower bounds. The Orthogonal Vectors Conjecture (OVC), which posits that there are no subquadratic-time algorithms for $\OV$, is the basis for the fine-grained hardness of a number of problems in $\cc{P}$ (see, e.g., \cite{V18}). A recent beautiful result~\cite{W24} shows that OVC also implies strong circuit lower bounds.  Alternatively, a subquadratic-time algorithm for $\OV$ in $d=\omega(\log{n})$ dimensions would imply strong circuit lower bounds~\cite{W05,JMV15,ABDN18}, resolving an open question posed in 1977~\cite{V77}. 

In this paper, we study the ``online'' version of $\OV$, known as the Online Orthogonal Vectors problem ($\OnlineOV$). $\OnlineOV_{n,d}$ is a problem to be solved in two phases by a pair of algorithms $(\mathcal{P}, \mathcal{A})$. In the \emph{preprocessing phase}, the preprocessing algorithm $\mathcal{P}$ receives $n$ input vectors $x_1,\ldots,x_n\in\{0,1\}^d$ and constructs a data structure $\sigma\in\{0,1\}^S$ of size $S$. During the \emph{query} phase, the query algorithm $\mathcal{A}$, with access to $\sigma$, receives a query vector $q\in\{0,1\}^d$ and in time $T$ decides if $q$ is orthogonal to at least one input vector $x_i$. Such a pair of algorithms is called an $(S,T)$-data structure for $\OnlineOV_{n,d}$, and its preprocessing time $T_{\mathcal{P}}$ is the running time of the algorithm~$\mathcal{P}$. 

The rigorous study of \OnlineOV{} dates back to the 1970s~\cite{K73,R74}, and the problem has since found applications in diverse domains such as information retrieval, databases, and networking (see \cite[Section 6.5]{CIP02,HKP11,AK20,K73} and references therein). It~has been studied under different names, including the Subset Query Problem~\cite{CIP02} and Orthogonal Vectors Indexing~\cite{GLP17}, but typically as a variant of the \PM{} problem~\cite{K73,R74,R76,CIP02,CGL04,C17,AJW25}.\footnote{In the $\PM_{n,d}$ problem, the preprocessing algorithm receives $n$ vectors $x_1,\ldots,x_n\in\{0,1\}^d$, and the query algorithm processes a query $y\in\{0,1,*\}^d$, checking if there is an input vector $x_i$ that exactly matches $y$ in the positions where $y$ is not a wildcard ($*$). 
It is easy to efficiently reduce $\OnlineOV_{n,d}$ to $\PM_{n,d}$, and $\PM_{n,d}$ to $\OnlineOV_{n,2d}$.} 
\OnlineOV{} is essentially equivalent to the Orthogonal Range Search, DNF Evaluation, Maximum Inner Product, and variants of the Nearest Neighbors problems~\cite{I98,CIP02,AWY14,C17,CW19}. The most interesting settings of the dimension parameter for $\OnlineOV_{n,d}$ are also $d=c\log{n}$ and $d=n^\eps$, as in the case of \OV{}.

There are two trivial algorithms for $\OnlineOV_{n,d}$. The first  simply stores answers to all possible queries in space $S=2^d$ and answers each query in time $T=d$. The second algorithm stores only input vectors, using space $S=nd$, and answers each query by scanning the entire input in time $T=nd$. 

The best known deterministic algorithms for $\OnlineOV_{n,d}$ are those by~\cite{CIP02}, which run in (i) space $S=n\cdot2^{O(d\log^2(d)\sqrt{t/\log{n}})}$ and time $T=n/2^t$ for any $t$; and (ii) space $S=nd^t$ and time $T=O(nd/t)$ for $t\leq n$. Notably, when $d=c\log{n}$ for a large enough constant $c>1$, a more efficient \emph{randomized} algorithm~\cite{C17} achieves space $S=n^{1.1}$ and query time $T=n^{1-1/(c\log(c))}$. A similar running time of $T=n^{1-1/(c\log^2(c))}$ is achieved by another very recent randomized algorithm~\cite{AJW25}.

Several conjectures have emerged regarding the hardness of \OnlineOV{} in both the worst and average cases~\cite{R76,BOR99,I01,GKLP17,GLP17,LS22,JPP25}. For example, the strong version of the \emph{``curse of dimensionality'' conjecture} asserts that for $d=\omega(\log{n})$, there is no $(S,T)$-data structure for $\OnlineOV_{n,d}$ that simultaneously satisfies $S=2^{o(d)}$ and $T=o(n)$. This conjecture was refuted for all $d=2^{o(\log(n)^{1/4})}$ by the first algorithm in~\cite{CIP02}.

In the case of logarithmic dimension $d=c\log{n}$, \cite{GLP17} designed deterministic data structures for $\OnlineOV_{n,d}$ with space and time complexities $S=n^{c-0.3}$ and $T=n^{1-\delta}$ in the worst case, and $S=n^{c-0.99}$ and $T=n^{1-\delta}$ in the average case. This led them to propose the following conjecture.
\begin{mainconjecture}[{\cite[Conjecture~2]{GLP17}}]\label{conj:introGLP}
    For every $\eps>0$ there is $c>1$ such that  no $(S,T)$-data structure solves $\OnlineOV_{n,c\log{n}}$ with $S=n^{c-1}$ and $T=n^{1-\eps}$.
\end{mainconjecture}
The results of~\cite{C17,AJW25} refute this conjecture, albeit by randomized algorithms, i.e., only against oblivious adversaries whose query points are independent of the random choices made by the preprocessing algorithm of the data structure. 

There has been significant progress in proving lower bounds for restricted models~\cite{I98,P08,P11,ALRW17,AK20,AJW25}. 
In particular, \cite{P08,P11} shows that any decision tree that solves $\OnlineOV_{n,d}$ must be of size $2^{\Omega(d)}$ or have depth $\Omega(n^{1-2\delta}/d)$. And \cite{AJW25} shows that any List-of-Points data structure that solves $\OnlineOV_{n,c\log n}$ must use at least $n^{\omega(1)}$ space or have query time $\Omega(n^{1-1/\sqrt{c}})$.\footnote{A List-of-Points data structure \cite{ALRW17, AK20} is a ``data-independent'' model where the data structure cannot encode its input but rather can only store lists of data points that intersect with previously-chosen lists.}

For data structures with \emph{efficient preprocessing}, lower bounds for \OnlineOV{} can be derived from well-known assumptions, such as the Strong Exponential Time Hypothesis (SETH).\footnote{SETH asserts that for every $\eps>0$, there exists $k$ such that $k$-SAT on~$n$ variables requires time at least $(2-\eps)^n$.} One can get arbitrarily large polynomial space lower bounds~\cite{GLP17} using Williams' celebrated reduction from $k$-SAT to \OV{}~\cite{W05}. That is, under SETH, for all constants $d, \delta > 0$, there exists $c > 1$ such that there are no data structures with query time $n^{1-\delta}$ and preprocessing time (and thus space) $n^d$ for $\OnlineOV_{n,c\log{n}}$.

While fine-grained complexity assumptions have been successfully leveraged to establish lower bounds for data structures with computationally efficient preprocessing (see, e.g., ~\cite{AV14, HKNS15}), there is still no clear path to prove ``space lower bounds''---lower bounds on the space complexity of data structures with unbounded preprocessing. The framework of conjectures introduced by \cite{GKLP17} aims to address such time-space tradeoffs, but to date, it has not led to connections with other key problems in fine-grained complexity.

In the case of \emph{computationally unbounded preprocessing}, despite much effort, the best known lower bound remains $T\geq \Omega(\log n)$, even when the space of the data structure is linear~\cite{MNSW95,BOR99,BR00,JKKR03,PT06,PTW08,P11}. In fact, this bound matches the best known lower bound for \emph{any} explicit data structure problem in the cell probe model~\cite{S04,L12}. Establishing a stronger lower bound for an explicit data structure problem with unbounded preprocessing remains a major open problem in the field.

The central challenge in the field of static data structure lower bounds is the following.
\begin{mainopenproblem}\label{op:lowerbounds}
For an explicit data structure problem with input size $n$, show that any data structure which uses space $S = cn$ for some constant $c > 1$ must have query time $T \geq n^{\epsilon}$ for some constant $\varepsilon > 0$ even with computationally unbounded preprocessing. 
\end{mainopenproblem}
While proving such a lower bound unconditionally is beyond the reach of current techniques~\cite{DGW19, V19, NR20}, one can hope to prove such lower bounds under well-studied conjectures.

\subsection{Our Results}

\subsubsection{Upper Bounds}

We design new \emph{deterministic} data structures for the Online Orthogonal Vectors problem ($\OnlineOV{}$) which work in arbitrary dimension $d$. We make two interesting contributions. First, the data structure matches the performance of the best known data structures in the case where the dimension $d(n) = c \log n$. %
In contrast to previous state-of-the-art data structures~\cite{C17,AJW25}, our data structure is fully deterministic.
\begin{theorem}[Informal, c.f.~\cref{cor:chan_comparison}]\label{thm:alg_inf}
 Consider any $c > 1$, sufficiently large $n$ and any $\delta > \Omega(\log c/c)$. There is an $(S, T)$-data structure with preprocessing time $T_p$ for $\OnlineOV_{n,c \log n}$ where: 
    \begin{align*}
       T & = \tilde{O}\left(n^{1-\frac{1}{O(c\log c)}}\right)\;,   & \text{and} && S, \;T_p & = \tilde{O}(n^{1+\delta}) \;.
    \end{align*}
\end{theorem}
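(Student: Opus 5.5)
We will construct the data structure through a structure-versus-randomness decomposition of the input set $X=\{x_1,\ldots,x_n\}\subseteq\{0,1\}^d$, where $d=c\log n$. The idea is to repeatedly pull out ``structured'' pieces: subsets of inputs that share a large common support pattern, and hence form a combinatorial rectangle in which many coordinates are fixed. What remains after this peeling is ``pseudorandom'': every small set of coordinates is hit by only a small fraction of the remaining vectors.

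\textbf{Structured part.} Fix a threshold $k=\Theta(\log n/\log c)$ and a density parameter $\alpha$. We greedily search for a set $K$ of $k$ coordinates and a set $Y\subseteq X$ with $|Y|\ge \alpha^{k} n$ such that every $y\in Y$ satisfies $y|_K=1^K$. Each such cluster is stored recursively, as an instance of $\OnlineOV$ on $d-k$ coordinates.

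A query $q$ with $q|_K\neq 0$ cannot be orthogonal to any vector in $Y$, so that cluster is skipped. Otherwise we recurse into the cluster. Because the dimension of a cluster drops by $k$ while its size is only guaranteed to shrink by a factor $\alpha^k$, a direct-table bound of the form $2^{d-k}$ per cluster becomes affordable once the ratio $d/\log|Y|$ has dropped sufficiently. Balancing the recursion depth against the quantity $\alpha^k$ is what produces space $n^{1+\delta}$ with $\delta=\Theta(\log c/c)$.

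\textbf{Pseudorandom part.} After peeling, no $k$-set of coordinates is contained in the supports of more than $\alpha^k n$ remaining vectors. For a query $q$ we split on its weight.
\begin{itemize}
\item If $|q|\ge k$, choose any $k$ ones of $q$. A vector $x$ orthogonal to $q$ must be zero on all $k$ of these coordinates. Using the pseudorandom condition, argued through a density or counting step, the number of such candidates is at most roughly $n(1-\Omega(1/c))^{k}$, which is $n^{1-1/O(c\log c)}$. We store, for every $k$-subset $K$ (there are $\binom{d}{k}\le n^{O(\log c/c)\cdot\ldots}$ of them, to be calibrated), the list of vectors vanishing on $K$. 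These candidates are then scanned explicitly.
\item If $|q|<k$, the query has few ones. We precompute answers for all queries of weight below $k$, which costs space $\binom{d}{<k}=n^{O(\log c/c)}$.
\end{itemize}

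The step I expect to be the main obstacle is the pseudorandom bound. Low co-degree on $1$-patterns does not immediately say that few vectors vanish on a given $k$-set. I would first try to reformulate the peeling so that it is done on $0$-patterns, i.e.\ on clusters of vectors sharing a common zero set. Then, after peeling, every $k$-set of coordinates is avoided by few vectors, and this is exactly the list scanned at query time. The bound on the time $T$ would then follow from the peeling threshold $\alpha^k n$ with $\alpha$ a constant bounded away from $1$. The structured clusters, now vectors sharing $k$ common zeros, are handled by restricting to the remaining coordinates and recursing. Finally, we would set $k$ and $\alpha$ so that $\binom{d}{k}\le n^{\delta}$, which forces $k=\Theta(\delta\log n/\log c)$, and verify that the total time is $\tilde O(n\alpha^{k})=\tilde O(n^{1-1/O(c\log c)})$.
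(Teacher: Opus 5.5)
Once you switch to peeling on common zero sets, your algorithm is the paper's. It has three parts: a table for queries of weight $<k$; candidate lists $Y_K$ for the remainder, whose size bound $<\alpha^k n$ holds by construction of the peeling, so no density or counting argument is needed; and recursion on each cluster with its $k$ shared zero coordinates deleted. What is missing is the analysis of this recursion, which is the real content of the proof, and the mechanism you sketch only works under a condition you never state.

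First, with zero-pattern clusters nothing can be pruned: a query must descend into \emph{every} cluster. The time bound therefore has to come from level-by-level accounting. With the threshold a fixed fraction $\alpha^k$ of the \emph{current} size, the clusters at one depth are disjoint, so each level costs at most $\alpha^k nd$ for scanning plus $O(d)$ per cluster.

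Second, a worst-case cluster has $n'=\alpha^k n$ and $d'=d-k$, so $d'/\log n'\le c$ if and only if $\alpha\ge 2^{-1/c}$. If $\alpha$ is a constant independent of $c$, as ``a constant bounded away from $1$'' suggests, this ratio grows at every level and the accounting never closes. For $\alpha=1/2$, tabulating the up to $2^{jk}$ clusters of dimension $d-jk$ at depth $j$ costs $2^{d}=n^{c}$ bits. Recursing down to singletons instead produces, in the worst case, $\Omega(n)$ leaves that every query must visit. It is this constraint, not any property of pseudorandom sets, that forces $\alpha^k\ge 2^{-k/c}=n^{-\Theta(\delta/(c\log c))}$ for $k=\Theta(\delta\log n/\log c)$. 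That is why the worst-case exponent is $1/O(c\log c)$ rather than the average-case $1/O(\log c)$.

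The paper sits exactly at the boundary $\alpha=2^{-1/c}$. At level $i$ it takes $m=n^{1-1/i}$ and $t=d/i$. Since $d$ and $\log n$ both shrink by the factor $1-1/i$, the ratio $d/\log n$ stays exactly $c$ and never drops, and both $t=d_0/i_0$ and $m/n=n_0^{-1/i_0}=2^{-t/c}$ are the same at every level. The recursion terminates after $i_0=2c\log c/\delta$ levels not because the ratio has fallen but because the dimension has been whittled down to $d_0/i_0$. At that point there are at most $n_0^{1-1/i_0}$ clusters of size $n_0^{1/i_0}$, so full tables cost $n_0^{1+(c-1)/i_0}$ bits in total. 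The inductive bounds $T\le 2idn^{1-1/i}$ and $S\le\binom{d}{\le d/i}\,idn^{1-1/i}$ of \cref{thm:worst_case_algorithm_parameterized} package this telescoping rigorously.

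Your ``ratio drops'' picture can also be made rigorous, but only for $2^{-1/c}<\alpha<1$ with $1-\alpha=\Theta(1/c)$. For example, with $\alpha=2^{-1/(2c)}$ the worst-case ratio reaches about $1$ after $\Theta(c\log c/\delta)$ levels, and a table per cluster is then linear in its size; this loses only a constant in the exponent. In either form, this calculation is the heart of the worst-case bound, and the proposal does not carry it out.
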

\cref{thm:alg_inf} gives the first \emph{deterministic} refutation of \cref{conj:introGLP}, while the previous results~\cite{C17,AJW25} refuted the conjecture against oblivious adversaries only. (However, the data structures designed by \cite{C17,AJW25} work for more general problems than \OnlineOV{}.) In fact, \cref{thm:alg_inf} refutes an even weaker conjecture \cite[Conjecture~13]{GKLP17} stating that no $(S,T)$-data structure solves $\OnlineOV_{n, c\log{n}}$ with $S=n^{2-\Omega(1)}$ and $T=n^{1-\eps}$.

Our second main contribution is to provide improved data structures for $\OnlineOV$ in arbitrary dimension $d$. 
\begin{theorem}[Informal, c.f.~\cref{cor:cip_comparison}]\label{thm:infCIP}
    Consider any $n, d$ and $\epsilon < 1/2$, then there exists an $(S, T)$-data structure for $\OnlineOV_{n, d}$ with preprocessing time $T_p$ such that:
    \begin{align*}
        T & \leq n^{1-\eps}d\;, & \; S & = n^{1-\eps}d\,2^{O(\eps d\log(1/\eps))} \;,
        & T_p & \leq n^{1+3\eps}d\,2^{O(\eps d\log(1/\eps))}\;.
    \end{align*}
\end{theorem}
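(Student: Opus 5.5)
The plan is to make the ``structure versus randomness'' decomposition concrete. I will partition the input into at most $n^{1-\eps}$ \emph{clusters} (the structured part) plus a leftover (the pseudorandom part). Each cluster gets a lookup table of size $2^{O(\eps d\log(1/\eps))}$ that answers ``is $q$ orthogonal to some vector of this cluster?'' in time $O(d)$. The query then scans the clusters, giving $T\le n^{1-\eps}d$ and $S=n^{1-\eps}d\,2^{O(\eps d\log(1/\eps))}$.

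The natural notion of cluster is a set $C$ of vectors together with a \emph{core} $A\subseteq[d]$ and a pattern $a\in\{0,1\}^A$ such that every $x\in C$ agrees with $a$ on $A$, where $m=d-|A|=O(\eps d\log(1/\eps))$. A query $q$ is then handled in two steps. First, in time $O(d)$, check that $q$ is orthogonal to $a$ on $A$. Second, look up $q|_{[d]\setminus A}$ in a precomputed table of size $2^m$ recording whether some $x\in C$ is orthogonal to $q$ off $A$. Storing $A$, $a$ and the table costs $d+2^m$ bits per cluster, which is within budget.

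The preprocessing is greedy. While some subcube of codimension $d-m$ (that is, $d-m$ fixed coordinates and $m$ free ones) contains at least $n^{\eps}$ remaining vectors, I extract those vectors as a cluster and remove them. This yields at most $n^{1-\eps}$ clusters. To find such subcubes efficiently, I enumerate candidate free sets $[d]\setminus A$ only among sets generated from pairs or small groups of input points. This is where I expect the $n^{1+3\eps}$ factor and the extra $2^{O(\eps d\log(1/\eps))}$ term in $T_p$ to come from; the $\log(1/\eps)$ arises from a binomial $\binom{d}{\eps d}$-type count.

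The main obstacle is the pseudorandom leftover: the vectors that lie in no dense subcube. Plain pigeonhole only guarantees dense subcubes when $d\lesssim\log n$, so for larger $d$ the leftover can be almost everything. My first attempt would be to exploit spreadness. If no subcube with $m$ free coordinates contains $n^\eps$ leftover vectors, then for any query $q$ I would argue that the leftover vectors orthogonal to $q$ are few and can be localized. Concretely, I would bucket the leftover vectors by their restriction to a small set of coordinates and recurse on dimension. The goal is to bound the recursion so that the total number of buckets scanned is $n^{1-\eps}$, with each bucket answered by a table of size $2^{O(\eps d\log(1/\eps))}$.

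If this fails, the fallback is to relax ``agree on $A$'' to ``within Hamming distance $\eps d$ of a center $c$'', giving clusters that are Hamming balls rather than subcubes. A ball cluster would be handled by enumerating the $\binom{d}{\le\eps d}=2^{O(\eps d\log(1/\eps))}$ possible deviation sets from $c$ and storing one table entry per deviation pattern. The condition $\eps<1/2$ would be used exactly here, to keep these ball volumes nontrivially bounded. I expect making this leftover argument work uniformly in $d$ to be the crux of the proof.
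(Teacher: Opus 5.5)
Your proposal has a genuine gap, and it is where you suspect: the leftover part is never handled, and the cluster notion you chose cannot make it small.

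Subcube clusters, and likewise Hamming-ball clusters, constrain both the zeros and the ones of the vectors. Orthogonality is one-sided: $x\perp q$ only forces $x$ to vanish on the support of $q$. Take random inputs with $d$ a large multiple of $\log n$ (let alone $d=n^{\Omega(1)}$). A subcube with $O(\varepsilon d\log(1/\varepsilon))$ free coordinates, or a ball of radius $\varepsilon d$, then contains $n2^{-\Omega(d)}\ll 1$ input vectors in expectation. A union bound shows none reaches $n^{\varepsilon}$, so your greedy phase extracts nothing and the leftover is essentially all of $X$. The spreadness lemma you hope for (``the leftover vectors orthogonal to $q$ are few'') is also false as stated. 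For $q=e_1$, about half of a random input is orthogonal to $q$, however spread the input is. So the whole difficulty is pushed into the leftover, for which you give no bound, and your Hamming-ball fallback fails for the same reason.

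The missing idea is to measure structure only through common zeros, and to split queries by weight. The paper sets $t=d/i$ and $m=n^{1-1/i}$. It uses \texttt{PseudorandomPartition} to greedily peel off groups of $m$ vectors that vanish on a common $t$-set $C_j$. The remainder $X'$ then has fewer than $m$ vectors vanishing on any $t$-set. Queries are handled as follows:
\begin{itemize}
\item A query with $\|q\|_0<t$ is answered from a bitmap of $\binom{d}{<t}$ bits; this is what handles $q=e_1$.
\item A query with $\|q\|_0\ge t$ picks a $t$-set $C$ on which $q$ is all ones. Every vector of $X'$ orthogonal to $q$ lies in the stored list $Y_C=\{x\in X': x|_C=0\}$, which has fewer than $m$ elements.
\end{itemize}
The structured groups are not answered by lookup tables, since they are constrained only on $C_j$ and remain arbitrary elsewhere. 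Instead, the coordinates $C_j$ are dropped and the algorithm recurses with $n'=m$, $d'=d-d/i$ and parameter $i-1$. The recursion bottoms out at $i=1$ (full bitmap) or $n=1$.

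Induction gives $T\le 2idn^{1-1/i}$, $S\le\binom{d}{\le d/i}idn^{1-1/i}$ and $T_p\le\binom{d}{\le d/i}idn$. Taking $i=\lfloor\log n/(\varepsilon\log n+\log\log n)\rfloor\approx 1/\varepsilon$ yields the statement. The factor $2^{O(\varepsilon d\log(1/\varepsilon))}$ is exactly $\binom{d}{\le d/i}$. The $n^{1+3\varepsilon}$ in $T_p$ is only a loose bound on $idn=idn^{1-1/i}\cdot n^{1/i}$, not the cost of enumerating candidate subcubes from pairs of points.
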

\cref{thm:infCIP} improves substantially on the best data structures in moderate dimensions provided by \cite{CIP02} which have $S \leq n\cdot2^{O(\sqrt{\eps}d \log^2 d)}$ and query time $n^{1-\eps} \cdot d$. This allows us to refute the strong version of the curse of dimensionality (which requires $2^{o(d)}$ space and $o(n)$ query time) for any $d = n^{o(1)}$.\footnote{Note that we can pick $\varepsilon = 2\log d / \log n = o(1)$. Then $T \leq n/d = o(n)$. Furthermore, note that $\varepsilon \cdot \log(1/\varepsilon) = o(1)$ and therefore $S = O(2^{o(d)})$.}

Finally, we can also combine our data structures with known reductions from various online problems to $\OnlineOV$ to obtain new deterministic data structures for these problems (\cref{apps:deterministic_applications}). These problems include Partial Match, DNF-Evaluation, Subset Query, Containment Query, and Approximate Orthogonal Range Searching (See~\cref{section:ds_reductions} for definitions). We also provide a randomized data structure for the regular version of Orthogonal Range Searching in this manner (\cref{apps:ors}). All of our data structures can be easily modified to work for \emph{reporting} versions of the problem, i.e., returning all the orthogonal vectors in the input as opposed to identifying if one exists, with minimal overhead.

\paragraph{Techniques.} Our main insight is a simple but powerful observation about the structure of random input vectors. We are able to leverage this with a \emph{structure versus randomness} approach to design our data structure. While our data structure is simple, the analysis of its performance is somewhat tricky. 

Our algorithm also has the benefit of being \emph{combinatorial}, i.e., it does not resort to algorithms for fast matrix multiplication. In fact, our algorithm matches the performance of the best known combinatorial algorithm for \OV{} even in the ``offline'' setting~\cite{C14}. %

In the process of designing a worst-case data structure, we also design new data structures for $\OnlineOV{}$ when the $n$ input vectors $X \subseteq \{0, 1\}^d$ are sampled such that each entry in each vector is set to $0$ independently with probability $p$ (let's call this distribution $\mathcal{B}_p$). Here, we highlight the interesting case where $d = c \log n$ and $p = 1/2$ but our data structure works for arbitrary values of $d$ and~$p$. 

\begin{theorem}[Informal, c.f. \cref{cor:chan_comparison_avg}]
    Let $n \geq 1$, $c > 1, \delta > \Omega(\log c/c)$. 
    Then, there is an average-case  $(S, T)$-data structure for $\OnlineOV_{n,c \log n}$ over $\mathcal{B}_{1/2}$ with preprocessing time $T_p$ where:
    \begin{align*}
            T &\leq \tilde{O}\left(n^{1-\frac{1}{O(\log c)}}\right)\;, && \text{and} &
            S, \; T_p &\leq \tilde{O}(n^{1+\delta})\;.
    \end{align*}
\end{theorem}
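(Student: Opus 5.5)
We propose a recursive decision-tree data structure over the coordinates, and we will exploit the fact that random input vectors from $\mathcal{B}_{1/2}$ are ``well spread.'' The preprocessing builds a tree of depth $k = \Theta(\delta \log n/\log c)$. At each node we hold a set $X'$ of surviving input vectors and a block of $b$ fresh coordinates, with $b$ chosen so that $kb \le d = c\log n$. We branch on the pattern a query exhibits on that block. A query $q$ with support pattern $P\subseteq[b]$ on the block can only be orthogonal to vectors $x$ that are $0$ on $P$. So the child labelled $P$ stores $\{x\in X' : x|_P = 0\}$.

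Rather than branching on all $2^b$ patterns, we only branch on patterns with $|P|\ge b/2$ ones. We call these \emph{heavy} queries; each one shrinks the surviving set by a factor $2^{-|P|}\le 2^{-b/2}$ in expectation, since every $x$ is $0$ on $P$ with probability $2^{-|P|}$. Queries that are \emph{light} on a block, i.e.\ have fewer than $b/2$ ones there, are rare only if the query is itself dense. We therefore handle the cases as follows.
\begin{itemize}
\item If the query has Hamming weight at least $w$, we reorder coordinates so that the query's ones are spread across blocks. Concretely, we precompute trees for a family of coordinate partitions. As an alternative, we use blocks of a single coordinate, with the branching factor bounded by the number of heavy patterns, which is $\sum_{j\ge b/2}\binom{b}{j}$.
\item If the query has weight below $w$, the number of input vectors orthogonal to it is large anyway. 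In this case we directly check against a precomputed table indexed by the support of $q$, i.e.\ the set of all low-weight supports, which takes $\binom{d}{\le w}$ space. With $w = O(\log n/\log c)$ this is $n^{O(1)}\cdot$ small, and with the right constant it is $n^{\delta}$.
\end{itemize}

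The key steps, in order, are as follows. (1) Fix the light/heavy threshold $w=\alpha \log n/\log c$ and bound the table size $\binom{c\log n}{w}\le (ec\log c/\alpha)^{w}$ by $n^{\delta}$. This is where the constraint $\delta=\Omega(\log c/c)$-type requirements enter; we will tune $\alpha$ so that the bound comes out correct. (2) For heavy queries, descend the tree, processing the query's ones one at a time. Each step restricts to vectors that are $0$ on one more coordinate, which halves the surviving set in expectation. After $w$ steps the expected surviving size is $n2^{-w}=n^{1-\alpha/\log c}$, and we brute-force over it. This gives the claimed $T=\tilde O(n^{1-1/O(\log c)})$. (3) Bound the space. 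The tree indexed by ordered sequences of $w$ coordinates has $d^{w}$ nodes, which is too many. Instead we index nodes by \emph{sets} of at most $w$ coordinates and store, at each node, only its list of surviving vectors. Total space is $\sum_{|A|\le w} n2^{-|A|}$ in expectation, i.e.\ $n\sum_j\binom{d}{j}2^{-j}$. We choose $\alpha$ so that this is $n^{1+\delta}$. (4) Convert expectations into bounds that hold with high probability over $\mathcal{B}_{1/2}$, or in expectation as the average-case definition requires, using Chernoff bounds and a union bound over nodes.

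We expect the main obstacle to be the interplay between steps (1) and (3). Storing lists for all sets of size at most $w$ costs $\binom{c\log n}{w}$, and this must be at most $n^{\delta}$. Meanwhile the query savings $2^{w}$ must be $n^{1/O(\log c)}$. Setting $w=\beta\log n/\log c$ gives $\binom{c\log n}{w}\approx (c\log c/\beta)^{\beta \log n/\log c}=n^{\beta(1+o(1))+\beta\log(1/\beta)/\log c}$. This is at most $n^{\delta}$ for $\beta=\Theta(\delta)$. So the savings are $n^{\Theta(\delta)/\log c}$, and with $\delta$ a constant this gives the stated exponent. We will verify this calculation carefully. The light-query case is absorbed, because a query of weight below $w$ is itself a stored set $A=\mathrm{supp}(q)$, and its answer is just whether that stored list is nonempty.
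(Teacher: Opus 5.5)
Your steps (1)--(4) are essentially the paper's proof, which instantiates \cref{thm:average} with $\eps=\delta/(4\log c)$. The paper stores the answers to all queries of weight below $t\approx\eps\log n$; you instead keep the lists for all sets of size at most $w$ and test non-emptiness, which is equivalent. It also stores the candidate list $Y_C=\{x: x|_C=\vec 0\}$ for every $t$-set $C$, scans the list indexed by $t$ of the query's ones, and uses Chernoff plus a union bound to control $\max_C|Y_C|$ in expectation, with the same binomial estimate $\binom{c\log n}{\le t}\le n^{\delta}$ under $\delta\ge 2e\log c/c$.

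The block-tree and heavy-pattern machinery in your opening paragraphs is unnecessary, and you rightly abandon it in step (3). You also never bound the preprocessing time, but it is immediate: building all lists takes $\binom{d}{\le w}\cdot nd=\tilde O(n^{1+\delta})$ time.
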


The simplicity of the algorithm combined with its substantially better guarantees than our worst-case algorithm (where $T = \tilde{O}(n^{1-1/O(c \log c)})$) raises the possibility of designing better data structures using an improved structure versus randomness observation. 
In particular, we can use our data structure to design an average-case algorithm for ``offline'' $\OV_{n, c \log n}$ in time $O(n^{2- \Omega(1/\log c)})$ (\cref{apps:avg-ov}), which comes close to the best known algorithms for the offline average-case version of the problem~\cite{KW19,AAZ25}.

\subsubsection{Lower Bounds}

We make progress towards \cref{op:lowerbounds} by proving strong lower bounds for data structures with computationally unbounded preprocessing, under two non-uniform conjectures.
Our first set of results proves arbitrary polynomial lower bounds for a host of popular online data structure problems under the Non-Uniform SETH (NUSETH) conjecture.\footnote{A non-uniform version of SETH was first considered by \cite{CGIMPS16} where they considered the Non-Uniform Nondeterministic Strong Exponential Time Hypothesis.}

\begin{conjecture}[NUSETH]\label{conj:nuseth_informal}
    For every $\eps>0$, there exists $k$ such that no non-uniform algorithm solves $k$-SAT in time $2^{(1-\eps)n}$ using $2^{(1-\eps)n}$ bits of advice.\footnote{See \cref{def:non-uniform-alg} for a formal treatment of a non-uniform algorithm.}
\end{conjecture}

Our first result shows that the hardness of $\OnlineOV$ in the high dimensional regime holds even for instances with constant input sparsity. Specifically, under NUSETH any batch data structure\footnote{For a formal definition of a batch data structure, see \cref{def:batch_ds}.} for Sparse-$\OnlineOV_{n, n^{\delta}}$ with computationally unbounded preprocessing cannot achieve both fixed polynomial space and sub-linear amortized query time.  Note that since a batch data structure is a stronger model of computation, the lower bound on batch data structures also holds for standard $(S, T)$-data structures.

\begin{theorem}[Informal, c.f. \cref{thm:batchDNF}]\label{thm:seth_lb_informal}
    Under \rm{NUSETH}, for all constants $c,\alpha>0$ and $\delta,\gamma\in(0,1)$, there exists constant~$w$ such that no batch data structure with computationally unbounded preprocessing can answer a set of $n^\alpha$ queries of $w$-Sparse-$\OnlineOV_{n,d}$ for $d=n^\delta$ in space $n^c$ and time $n^{1-\gamma}$ per query.
\end{theorem}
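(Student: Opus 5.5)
We prove the contrapositive: from a batch data structure for $w$-Sparse-$\OnlineOV_{n,d}$ with $d=n^\delta$, space $n^c$ and amortized query time $n^{1-\gamma}$ on $n^\alpha$ queries, we build a non-uniform $k$-SAT algorithm running in time $2^{(1-\eps)N}$ with $2^{(1-\eps)N}$ bits of advice. Here $N$ is the number of variables and $\eps>0$ depends only on $c,\alpha,\gamma$. Taking $k$ as given by NUSETH for this $\eps$ then yields a contradiction.

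\textbf{Reduction.} We use a split-and-list reduction in the style of Williams~\cite{W05}, tuned so that one side is tiny and stored as advice, while the other side supplies the queries.

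First, apply the sparsification lemma to the $k$-CNF $F$. This gives $2^{\eps' N}$ formulas, each with $O_{k,\eps'}(N)$ clauses. A clause of width at most $k$ touches at most $k$ variables.

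Next, partition the variables into a small block $A$ of size $a=\beta N$ and a large block $B$ of size $(1-\beta)N$. Choose $\beta$ so that $n:=2^{a}$ input vectors correspond to the assignments of $A$.

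Index coordinates by clauses, so $d=O(N)=O(\log n)$. We can pad up to $d=n^\delta$ with dummy coordinates, or we can split coordinates more finely. For each partial assignment $\rho_A$, the vector $x_{\rho_A}$ has a $1$ in coordinate $C$ iff $\rho_A$ fails to satisfy the part of $C$ on $A$. Similarly define $q_{\rho_B}$.

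Then $\langle x,q\rangle=0$ iff every clause is satisfied by one of the two halves. This is exactly the condition that $F$ is satisfiable via $(\rho_A,\rho_B)$.

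\textbf{Sparsity.} Sparsity of $x$ is the main obstacle, since a partial assignment may falsify many clause-parts. We need every input vector to have at most $w$ ones for a constant $w$.

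The first attempt is to choose $A$ so that few clauses touch $A$. After sparsification each variable has bounded degree, so taking $A$ to be a set of $\beta N$ variables makes each $x$ have support on $O(\beta N)$ coordinates. That is not constant.

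The fix is to encode each clause $C$ by several coordinates, one per falsifying partial assignment of $C\cap A$ (at most $2^k$ of them). Then we move the dependence to the query side. The input vector gets a single $1$ per clause touching $A$, marking which local assignment it has. The query gets $1$s on all local assignments $\sigma$ of $C\cap A$ that, together with $\rho_B$, fail $C$.

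Sparsity per input is still the number of clauses touching $A$. To make it constant, I would instead split into $N/\ell$ blocks and use a multi-block encoding. I expect the paper instead reduces from DNF/CNF evaluation (\cref{thm:batchDNF}), where each input vector corresponds to a single term of width $w$. The cleanest version is as follows: the inputs are the clauses themselves (each with $\le k$ literals, hence $w=2k$-sparse over $2N$ literal coordinates), and the queries are assignments. Then $q$ is orthogonal to some clause-vector iff the assignment falsifies that clause.

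Batch queries over $n^\alpha$ assignments of blocks, combined via a self-reduction, handle the $2^N$ assignments. One also needs $d=n^\delta$ to account for $N$ variables with $n$ clauses, so instances must be built by padding with $N\approx n^\delta$ variables and grouping. This is the delicate parameter step.

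\textbf{Accounting.} The advice is the data structure built for the fixed formula family, of size $n^c$. The queries enumerate assignments in batches of $n^\alpha$, at cost $n^{1-\gamma}$ each. The advice must be $\le 2^{(1-\eps)N}$, which constrains how $n$ relates to $N$. Choosing $n=2^{\Theta(N)/c}$ and $\eps$ small relative to $\gamma$ closes the argument.
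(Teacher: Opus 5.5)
There is a genuine gap, and it sits at the core of the reduction rather than in the parameters. In every version you write down, the preprocessed input depends on the formula. In the split-and-list version the coordinates are the clauses of $F$, and $x_{\rho_A}$ records which clause-parts $\rho_A$ falsifies. In your ``cleanest version'' the input vectors are the clauses themselves.

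But the data structure has to serve as the non-uniform advice, and advice may depend only on the input length, not on which $k$-CNF is given. There are far too many formulas to store one data structure per formula, and the phrase ``the data structure built for the fixed formula family'' hides exactly this problem. You need a \emph{universal} instance. The input vectors must be the $n=2^{a}$ assignments to the small block, encoded with no reference to $F$. All information about $F$ and the fixed assignment $\rho_B$ must live in the query.

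Your ``cleanest version'' also fails for an independent reason. A query is orthogonal to some clause-vector iff the assignment falsifies a clause. So deciding satisfiability needs one query per full assignment, i.e.\ $2^N$ queries, and the running time is at least $2^N$ regardless of the data structure.

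The second missing idea is how to make such a universal instance $O(1)$-sparse; this is exactly what the allowance $d=n^\delta$ pays for. Indexing coordinates by all $\le k$-clauses over $A$ (the table of \cite{AV21}) is universal but dense. Your local-assignment refinement has one $1$ per clause touching $A$, and $\binom{a}{\le k}$ ones once made formula-independent. Your passing suggestion of a multi-block encoding is the right instinct, but it is never specified or carried out.

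The paper (Lemma~\ref{lem:poly_form_OV}) works as follows:
\begin{itemize}
\item Partition the $a$ variables of the small block into a \emph{constant} number $\theta=k/\delta$ of blocks, each of linear size.
\item Index coordinates by pairs $(B,\mu_B)$, where $B$ is a $k$-subset of blocks and $\mu_B$ is an assignment to the $\delta a$ variables in $B$.
\item The vector $X_\mu$ of an assignment $\mu$ has a $1$ exactly at $(B,\mu|_B)$ for each $B$. So it is $\binom{k/\delta}{k}$-sparse, and the dimension is $\binom{\theta}{k}2^{\delta a}=O(n^\delta)$.
\item Each clause touches at most $k$ blocks, so it is assigned to a $k$-set $B$ containing them. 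The query has a $1$ at $(B,\mu_B)$ iff some clause assigned to $B$ is falsified by $\mu_B$.
\item Then $q\perp X_\mu$ iff $\mu$ satisfies the formula.
\end{itemize}

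With this instance your outer accounting goes through. The $2^{(1-\beta)N}$ residual formulas give queries, each built in $\widetilde{O}(n^\delta)$ time, and the advice is $n^c=2^{c\beta N}$. One takes $\beta=\min(1/(2c),1/(\alpha+1))$, where the second term guarantees at least $n^\alpha$ queries for the batch model. One also takes $\eps=\min(1/2,\beta(1-\delta),\beta\gamma)$. Note that $\eps$ must depend on $\delta$ because of the query-construction cost $2^{(1-\beta(1-\delta))N}$. Finally, $w=\binom{k/\delta}{k}$ is determined by the $k$ that NUSETH supplies for this $\eps$.
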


We remark that our lower bound framework is quite robust, as it also recovers the previous lower bound of Abboud and Williams ~\cite{AV14} in the low dimensional regime  $(d = \poly (\log n))$. In fact, \cref{thm:seth_lb_informal} can be viewed as a generalization of the lower bound in \cite{AV21} that additionally takes sparsity into account (see \cref{rem:ov_low_dim}).

\begin{corollary}[Informal, c.f. \cref{cor:av_21} and also \cite{AV21}]
    Under {NUSETH}, for all constants $c,\alpha>0$, $\gamma\in(0,1)$, there exists a constant $\rho>0$ such that no batch data structure with computationally unbounded preprocessing can answer a set of $n^\alpha$ queries of $\OnlineOV_{n,d}$ for $d=O(\log (n)^\rho)$ in space $n^c$ and amortized time $n^{1-\gamma}$ per query.
\end{corollary}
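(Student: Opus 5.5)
The plan is to derive this corollary from \cref{thm:seth_lb_informal}, or more precisely from the reduction behind it, by compressing the sparse high-dimensional instances into dense low-dimensional ones. A $w$-sparse vector in $\{0,1\}^{n^\delta}$ is determined by its at most $w$ support coordinates, each of which takes $\delta\log n$ bits to name. This suggests re-encoding the sparse $\OnlineOV$ instances into dimension roughly $\poly(w,\log n)$ while preserving orthogonality. With $w$ constant, that dimension is $(\log n)^{\rho}$ for a constant $\rho$ depending on $c,\alpha,\gamma$ through $w$.

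The steps, in order, are as follows.
\begin{enumerate}
\item Fix $c,\alpha,\gamma$ and take the constant $w$ and a suitable $\delta$ given by \cref{thm:seth_lb_informal}. Then $w$-Sparse-$\OnlineOV_{n,n^\delta}$ has no batch data structure with space $n^c$ and amortized time $n^{1-\gamma}$ on $n^\alpha$ queries.
\item Give a deterministic, non-uniform-friendly map $\phi,\psi$ from $w$-sparse vectors to $\{0,1\}^{d'}$ such that $\langle x,q\rangle=0$ if and only if $\langle \phi(x),\psi(q)\rangle=0$.
\item Conclude that a batch data structure for $\OnlineOV_{n,d'}$ yields one for the sparse problem with the same space and an $\poly(w\log n)$ additive overhead per query. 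This contradicts the first step once $d'=O((\log n)^\rho)$.
\end{enumerate}

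For the second step, orthogonality asks whether two sets of size at most $w$ in a universe $[n^\delta]$ are disjoint. Write each coordinate $i\in[n^\delta]$ in binary with $b=\delta\log n$ bits. Then $x_i q_i=1$ for some $i$ exactly when some pair $(i\in\mathrm{supp}(x), j\in\mathrm{supp}(q))$ has $i=j$.

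Equality of two $b$-bit strings is not directly an inner-product-zero condition, so I would avoid an exact gadget for it. Instead I would go back into the \cite{W05}-style construction used to prove \cref{thm:seth_lb_informal}. There the sparse coordinates index clauses or partial assignments of a $k$-CNF split into blocks, and the disjointness structure is a product over $O(w)$ blocks. Each block contributes a constant-size (in $\log n$) gadget. Following \cite{AV21}, I would trade the $n^\delta$ dimension for a $2^{O(w)}\cdot\poly\log n$ dimension by letting each vector coordinate correspond to a tuple of clause and block choices rather than to individual clauses.

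The main obstacle is precisely this encoding: disjointness of sparse sets in a large universe has a large communication-style dimension in general, so a generic embedding into $\poly\log$ dimension is impossible. The argument must instead exploit the specific structure of the NUSETH reduction, namely that the number of clauses after sparsification is $O(\text{vars})$, so $d=O(\log n)$ per block. That directly gives $d=O(\log n)^{\rho}$ with $\rho$ depending on the sparsification constant. I would first try to redo the reduction in this form and check that the non-uniform advice bound $2^{(1-\eps)n}$ still absorbs $n^c$ space.
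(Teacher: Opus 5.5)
Your Steps 1--3 cannot be carried out, and your last paragraph essentially concedes this. Step 2 asks for maps $\phi,\psi$ into $\{0,1\}^{d'}$ with $d'=\mathrm{poly}(w,\log n)$ that preserve orthogonality between $w$-sparse inputs and queries. In \cref{thm:seth_lb_informal}, however, only the inputs are sparse: the queries $q_\varphi$ are arbitrary, so this is not disjointness of two small sets. Already for $w=1$ you would need $q_i=1$ iff $\langle \phi(e_i),\psi(q)\rangle\neq 0$. Then $\psi$ is injective on $\{0,1\}^{n^\delta}$, which forces $d'\ge n^\delta$. Hence the dense statement is not a black-box consequence of the sparse one.

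Your fallback paragraph does not supply a replacement. It never says what the low-dimensional input vectors and queries are, and the mechanisms it names do not work:
\begin{itemize}
\item Constant $w$ together with polylogarithmic dimension is impossible in the block construction of \cref{lem:poly_form_OV}. There the dimension is $\binom{\theta}{k}2^{mk/\theta}$ while the sparsity is $\binom{\theta}{k}$. Polylogarithmic dimension therefore forces $\theta=\Omega(m/\log m)$ blocks, and hence non-constant sparsity.
\item The instance of \cite{AV21} has exactly one coordinate per individual clause, which is the opposite of what you describe.
\item Sparsification plays no role. The input must be independent of the formula, since that is what lets the data structure serve as advice. So what matters is the number of \emph{possible} width-$k$ clauses, not the number of clauses in a particular formula.
\end{itemize}

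The missing idea is the explicit universal dense instance, which is what the paper uses. Equivalently, it is \cref{lem:poly_form_OV} with singleton blocks, i.e.\ $\theta$ equal to the number of variables. The construction goes as follows.
\begin{itemize}
\item Branch on the first $(1-\beta)m$ variables, as in the proof of \cref{thm:batchDNF}.
\item Take one input vector per assignment $\mu\in\{0,1\}^{\beta m}$. It has one coordinate per possible $k$-clause on $\beta m$ variables, set to $1$ iff $\mu$ falsifies that clause.
\item The query for $\phi_u$ is the indicator vector of its clauses.
\end{itemize}
Then $\mu$ is orthogonal to the query iff $\mu$ satisfies $\phi_u$. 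The input depends only on $\beta m$ and $k$, and the dimension is $\binom{\beta m}{k}2^k=O((\log N)^k)$ for $N=2^{\beta m}$.

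The accounting of \cref{thm:batchDNF} then goes through with $\beta=\min(1/(2c),1/(\alpha+1))$ and, for example, $\varepsilon=\min(1/2,\beta\gamma/2)$. Each query now costs only $\mathrm{poly}(m)$ to write down. NUSETH supplies $k$ for this $\varepsilon$, and you can take $\rho=k$.
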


As a consequence of strong connections between \OnlineOV{} and other well-studied problems including \PM{}, \SubsetQuery{}, \ORS, $(1+\eps)$-\ANN{}, we are able to immediately extend our lower bounds to these problems as well.\footnote{We refer the reader to \cref{section:ds_reductions} for formal definitions of these problems.}

\begin{corollary}\label{cor:intro_strong_nuseth_lb}
Under {NUSETH}, for all constant $c,\alpha>0$, $\gamma\in(0,1)$, there exists a constant $\rho>0$ such that no batch data structure with computationally unbounded preprocessing can answer a set of $n^\alpha$ queries in space $n^c$ and amortized time $n^{1-\gamma}$ per query for any of the following problems with $d=O(\log(n)^\rho)$:
\begin{align*}& \PM_{n, d}, \; \SubsetQuery_{n, d}, \; \ContainmentQuery_{n, d}, \\
    & \ORS_{n, d, 1}, \; \DNFEval_{n,d}, \; \OnlineIP_{n, d, 0},\\
     & (1+\eps)\text{-}\ANN_{n, d, p, \infty}\;.
    \end{align*}
\end{corollary}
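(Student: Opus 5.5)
The plan is to derive \cref{cor:intro_strong_nuseth_lb} from the $\OnlineOV$ lower bound in low dimension (the informal corollary above, i.e.\ \cref{cor:av_21}), using query-preserving reductions from $\OnlineOV_{n,d}$ to each listed problem. The key point is that each reduction must be \emph{data-structure friendly}. Concretely, an $\OnlineOV$ instance $x_1,\ldots,x_n\in\{0,1\}^d$ must map to an instance of the target problem with $n'=\poly(n)$ points in dimension $d'=\poly(d)$ (in fact $O(d)$ in most cases). Each $\OnlineOV$ query $q$ must map, in time $\poly(d)$, to $O(1)$ (or $\poly(d)$) queries of the target problem, and the answer must be recoverable from their answers. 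The preprocessing side may be arbitrary, since the lower bound already allows unbounded preprocessing.

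Given such a reduction, suppose a batch data structure for the target problem answers $n'^{\alpha'}$ queries in space $n'^{c'}$ and amortized time $n'^{1-\gamma'}$. Then we get a batch data structure for $\OnlineOV_{n,d}$ with space $n^{O(c')}$ and amortized time $\poly(d)\cdot n^{1-\gamma'}$, which is $n^{1-\gamma'/2}$ for $d=\polylog n$. This contradicts \cref{cor:av_21} once we choose the parameters there: take $c$ to be the resulting space exponent, $\gamma=\gamma'/2$, and $\alpha$ so that $n^{\alpha}$ $\OnlineOV$ queries become at most $n'^{\alpha'}$ target queries. That choice then supplies $\rho$. The exponent bookkeeping is straightforward as long as $n'$ is polynomial in $n$ and $d'$ is polylogarithmic.

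The individual reductions I would use, all recalled in \cref{section:ds_reductions}, are as follows.
\begin{itemize}
\item For $\PM$: map $q$ to $y$ with $y_j=0$ where $q_j=1$ and $y_j=*$ otherwise.
\item For $\SubsetQuery$ and $\ContainmentQuery$: complement the coordinates, since $q\perp x$ iff $\mathrm{supp}(x)\subseteq\overline{\mathrm{supp}(q)}$, or the dual form.
\item For $\DNFEval$: view each $x_i$ as the term $\bigwedge_{j:x_{ij}=1}\neg z_j$ evaluated at $z=q$.
\item For $\OnlineIP_{n,d,0}$: this is the case of inner product exactly $0$, which is identical to $\OnlineOV$.
\item For $\ORS_{n,d,1}$: the points are the $x_i$, and the query is the box $\prod_j [0, 1-q_j]$.
\item For $(1+\eps)$-$\ANN$ in $\ell_\infty$: use the standard embedding that makes orthogonal pairs distance at most $1$ and non-orthogonal pairs distance at least $(1+\eps)$, e.g.\ by a coordinate gadget where $x_{ij}=q_j=1$ forces some coordinate to differ by a larger amount.
\end{itemize}

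I expect the main obstacle to be the $\ANN$ reduction. It must keep $d'=\polylog n$ and yield a constant-factor gap in $\ell_\infty$ (and for general $\ell_p$ with the stated parameters). It also must not blow up the number of points or queries by more than a polynomial factor, so that the amortized sublinearity and the batch size $n^\alpha$ survive. If a single-query embedding is unavailable, I would first try an $\ell_\infty$ gadget mapping $x_{ij}\mapsto 2x_{ij}$ and $q_j\mapsto -q_j$ shifted appropriately. Failing that, I would reduce via $\ORS$, since a box query is an $\ell_\infty$ ball after rescaling coordinates. The remaining reductions are routine.
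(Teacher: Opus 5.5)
Your proposal is correct and follows the paper's route: the paper derives this corollary by combining \cref{cor:av_21} with the batch-compatible reductions of \cref{lem:ov_reduces_to_everything}, which are exactly the dimension- and query-preserving reductions from $\OnlineOV_{n,d}$ that you spell out, with the same parameter bookkeeping. Your gadgets all check out, and your $\ell_\infty$ idea works: for instance $x_j\mapsto 2x_j$, $q_j\mapsto 1-2q_j$ gives distance exactly $1$ for orthogonal pairs and $3$ otherwise, and the returned point is verified for orthogonality in $O(d)$ time. Only $\ell_\infty$ is required here, not general $\ell_p$.
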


Several non-uniform versions of SETH and related conjectures have been studied in previous works (see, e.g., \cite{GHLORS12,CEF12,CGIMPS16,ABGS21,AV21}). While it may appear that  {NUSETH} is a strong assumption, we remark that our current techniques cannot even refute the (potentially) much stronger conjecture NUNSETH which posits that there is no $(2-\eps)^n$-time non-uniform \emph{non-deterministic} algorithm that decides $k$-UNSAT for every~$k$.

We now introduce a new conjecture against non-uniform algorithms that we call the Non-Uniform HamPath Conjecture, which loosely states that any non-uniform algorithm solving Hamiltonian Path on $n$ vertices requires time at least $2^n$.\footnote{We refer the reader to \cref{sec:fgc} for a discussion on this conjecture.}

\begin{conjecture} [Non-Uniform HamPath Conjecture, Informal]\label{conj:nuhampath_informal}
    For every constant $\eps\in (0,1)$, no non-uniform algorithm solves \HamPath{} on graphs with $n$ vertices in time $2^{(1-\eps)n}$ using $2^{(1-\eps)n}$ bits of advice.
\end{conjecture}

We use \cref{conj:nuhampath_informal} to prove a lower bound for the \ThreeSUM{} with preprocessing problem. In this problem, the preprocessing algorithm receives three sets: $\mathcal{X}_1, \mathcal{X}_2 , \mathcal{X}_3$ each consisting of $N$ integers, and preprocesses them into a data structure $\sigma\in\{0,1\}^S$ of size~$S$. Then, the query algorithm receives a query $\mathcal{X}_1' \subseteq \mathcal{X}_1, \mathcal{X}_2' \subseteq \mathcal{X}_2, \mathcal{X}_3' \subseteq \mathcal{X}_3$, and decides in time~$T$ if there exist $x_1 \in \mathcal{X}_1', x_2 \in \mathcal{X}_2', x_3 \in \mathcal{X}_3'$ such that $x_1 + x_2 = x_3$.

A line of research~\cite{BW09,CL15,CVX23,KPS25} culminated in an $(S, T)$-data structure for \ThreeSUM{} with preprocessing with $S=T=\widetilde{O}(N^{3/2})$. While this data structure is already very efficient, it is an open question if it can be further improved. We do not have any unconditional lower bounds better than the trivial ones requiring linear time or linear space.
Under the Non-Uniform \HamPath{} Conjecture, we now show that any data structure for \ThreeSUM{} with computationally unbounded preprocessing must require either super-linear space or super-linear query time.
 
\begin{theorem}[Informal, c.f. \cref{thm:hampath}]
    Under \cref{conj:nuhampath_informal}, there is no $(N^{1.087}, N^{1.087})$-data structure for \ThreeSUM{} with computationally unbounded preprocessing.
\end{theorem}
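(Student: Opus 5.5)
The plan is a reduction from \HamPath{} on $n$ vertices to \ThreeSUM{} with preprocessing. The non-uniform advice will be the data structure built for a fixed, input-independent instance, and each \HamPath{} instance will be answered by a few queries. Since the preprocessing is unbounded but the data structure depends only on $n$, the data structure $\sigma$ can be hardwired as advice of length $S$. Any $(S,T)$-data structure with $S,T \le 2^{(1-\eps)n}$ (up to the number of queries) then contradicts \cref{conj:nuhampath_informal}.

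\textbf{Universal instance.} I will encode the problem via a meet-in-the-middle decomposition. A Hamiltonian path splits at its midpoint into two halves, each a path on about $n/2$ vertices sharing an endpoint. Let $\mathcal{X}_1$ and $\mathcal{X}_2$ contain one integer for each pair (vertex subset $A$ of size about $n/2$, endpoint $v$), and let $\mathcal{X}_3$ hold target values. The integers will be built so that $x_1 + x_2 = x_3$ exactly when the two subsets are complementary apart from the shared vertex and the endpoints match. I will use digit blocks with enough padding to prevent carries, encoding the characteristic vector of $A$ together with a one-hot endpoint code. Then $N \approx \binom{n}{n/2} n \approx 2^{n}\,\mathrm{poly}(n)$.

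\textbf{Queries.} On input graph $G$, the query algorithm computes $\mathcal{X}_1' \subseteq \mathcal{X}_1$, namely the pairs $(A,v)$ for which $G[A]$ has a Hamiltonian path ending at $v$, and $\mathcal{X}_2'$ similarly. The difficulty is that computing these subsets by DP costs about $2^{n}$, and merely writing them down already costs $N \ge 2^{n}$. So a direct reduction with subsets of size $2^{n/2}$ cannot yield a bound like $N^{1.087}$.

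\textbf{Refined split.} I will therefore split the path into more pieces, or unbalanced ones. Say the first part covers $\alpha n$ vertices and the remaining two parts cover $(1-\alpha)n/2$ each, with the first part taken care of by the query algorithm enumerating it. I will balance the cost of describing the query subsets (at most about $N$), the query time $T$, and the enumeration overhead, so that the total non-uniform time is $2^{(1-\eps)n}$ whenever $S, T \le N^{1+\eta}$ for small $\eta$. The constant $1.087$ should come out of this optimization over binomial coefficients, using $\binom{n}{\beta n} \approx 2^{H(\beta)n}$: we need $N^{1.087}$ times the number of queries to stay below $2^{n}$.

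\textbf{Main obstacle.} The crux is designing the encoding and query scheme so that each query's subsets are computable in time about $N$, and the outer enumeration times $N^{1+\eta}$ is $2^{(1-\eps)n}$. I would first try three equal-length pieces with their own subset and endpoint encodings, then tune the lengths to extract the $1.087$ exponent.
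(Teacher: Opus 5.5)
You set up the non-uniform framework correctly: hardwire the data structure of a $G$-independent instance as advice. You also see correctly why splitting into two halves fails ($N\approx 2^n$). But the proposal stops exactly where the proof has to happen. The instance and the queries are never pinned down, and the one concrete refinement you describe does not work. Suppose the first piece of $\alpha n$ vertices is enumerated by the algorithm that issues the queries. Then you issue up to $\binom{n}{\alpha n}$ queries, each costing $T\ge N\ge\binom{n}{(1-\alpha)n/2}$. For $\alpha<1/2$ the total is at least $2^{(H(\alpha)+H((1-\alpha)/2))n}$. This exponent is at least $1$: it is concave in $\alpha$, equals $1$ at $\alpha=0$, and is about $1.81$ at $\alpha=1/2$. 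For $\alpha\ge 1/2$, the dynamic program that finds the first pieces already takes about $2^n$ time. So no $\alpha$ beats $2^n$, even if $T=N$. The missing idea is that only polynomially many queries may be issued. All exponential work must be a single dynamic-programming pass of cost about $N$, plus $\poly(n)$ runs of the query algorithm.

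The construction you mention only as something you ``would first try'' is in fact the whole proof, and it needs to be specified and checked. Take $\mathcal{X}_1=\mathcal{X}_2=\{\Int(S): |S|=n/3\}$ and $\mathcal{X}_3=\{2^n-1-\Int(S): |S|=n/3\}$, so $N=\binom{n}{n/3}\le 2^{H(1/3)n}$. Keep the endpoints out of the integers. For each tuple $(v_1,\dots,v_4)\in[n]^4$, issue one query whose $i$-th subset consists of the sets $S\ni v_i$ admitting a simple path through exactly $S$ from $v_i$ to a neighbor of $v_{i+1}$. These are encoded as $\Int(S)$ for $i=1,2$ and as $2^n-1-\Int(S)$ for $i=3$. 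All these subsets come from one run of Bellman--Held--Karp restricted to $|S|\le n/3$, in time $\binom{n}{\le n/3}\poly(n)$.

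A solution $x_1+x_2=x_3$ means $\Int(S_1)+\Int(S_2)+\Int(S_3)=2^n-1$. Three numbers of Hamming weight $n/3$ can sum to a number of weight $n$ only without carries. Hence the $S_i$ are disjoint and the three paths glue into a Hamiltonian path; your padded digit blocks would enforce the same thing. The advice has length $N^\delta$, and the time is $2^{H(1/3)n}\poly(n)+n^4N^\delta=2^{\delta H(1/3)n}\poly(n)$. This is $2^{(1-\gamma)n}$ precisely when $\delta<1/H(1/3)\approx 1.0889$. That bound, not a tuning of piece lengths, is where $1.087$ comes from. Unequal pieces cannot do better, since the largest piece governs both $N$ and the cost of the dynamic program.
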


In fact our result generalizes  to \kSUM{} (\cref{thm:hampath}) for any $k \geq 3$ and to the offline version of \kSUM{}.

\subsection{Proof Overview}

\subsubsection{Algorithms}
Our algorithms are tailored for the $\OnlineOV$ problem. For this section, we focus on the case where the input $X \subseteq \{0, 1\}^{d}$ contains $|X| = n$ vectors of dimension $d := c \log n$.  First, we describe a simple randomized algorithm which is efficient on average when the input vectors are picked independently and uniformly at random. 

\paragraph{Average-case algorithm.} We begin with the simple observation that, when $X$ is chosen at random, it is unlikely that there exists a large subset $X' \subseteq X$ and a large set of coordinates $S \subseteq [c\log n]$ such that every vector $x \in X'$ satisfies $x|_S = \mathbf{0}$.
Let $Y_S := \{x \in X \colon\; x|_S = \mathbf{0}\}$. To quantify the observation, we fix some appropriate $\eps > 0$ such that, with high probability, we expect that $|Y_S| < n^{1-\eps}$ for an arbitrary $S$ of certain size~$t$. Note that $\mathbb{E}_{X}[|Y_S|] = n \cdot \frac{1}{2^{|S|}}$. Using the Chernoff and union bounds, $|Y_S| < n^{1-\eps}$ for all $|S| = t\approx \eps \log n$ with high probability.

Armed with this observation, our average-case algorithm is simple to state. In the discussion below, we will focus on the space complexity of the data structure and the query time of the online phase. In \cref{sec:algs}, we show that the preprocessing algorithm is also efficient. In the preprocessing phase, $\texttt{AverageOVPre}$ uses $\binom{d}{\leq t}$ space to store the answers to all possible sparse queries of Hamming weight $\leq t$. Furthermore, it stores the lists of relevant inputs $Y_S$ for all $S \subseteq [c \log n]$ with $|S| = t$. This takes space at most $\binom{c \log n}{t} \cdot n^{1-\eps}$ with high probability. In the online phase, $\texttt{AverageOVOnl}$, if the query~$q$ has Hamming weight $\|q\|_0 < t$, it takes constant time to check the answer among the stored answers. Otherwise, the algorithm picks the first set $S$ of $t$ $1$'s of $q$. These coordinates must be $0$ in any orthogonal input vector and therefore belong to $Y_S$. Now, in time $|Y_S| < n^{1-\eps}$ it is easy to check if one of the vectors in $Y_S$ is orthogonal to~$q$. 

To summarize, the space required by $\texttt{AverageOVPre}$ is at most $\binom{c \log n}{\eps \log n} \cdot n^{1-\eps} \leq n^{1+\eps\log(ec/\eps)}$, and the query time of $\texttt{AverageOVOnl}$ is at most $\Ot(n^{1-\eps})$. %

\paragraph{Pseudorandomness.} For an arbitrary set of input vectors $X$, there is no guarantee that $X$ behaves like a random input. For example, every vector in $X$ could have its first $\frac{c\log n}{2}$ coordinates be $0$. Then one of our stored lists has size $|Y_{[c\log n/2]}| = n$, leading to our average-case algorithm having poor query time. However, violating the assumption that $Y_S$ is small for a particular $S$ leads to finding some structure in our problem. For example, in the above case, we could reduce the dimension of the problem by half. We formalize this intuition by defining a simple notion of pseudorandomness. For any $\eps > 0$, we say that $X$ is $(n^{1-\eps}, t)$-pseudorandom if for every $S \subseteq [c \log n]$ with $|S| = t$, we have that $|Y_S| \leq n^{1-\eps}$. While we cannot expect $X$ itself to be pseudorandom, we find a decomposition of $X = X' \sqcup X_1 \sqcup \dots \sqcup X_{n^{\eps}}$ such that: 
\begin{itemize}
    \item For each $X_i$, we have $|X_i| = n^{1-\eps}$. Additionally, we find a corresponding $S_i  \subseteq [c \log n]$, $|S_i|=t$, such that each $x \in X_i$ satisfies $x|_{S_i} = \mathbf{0}$. For the purpose of finding orthogonal vectors in $X_i$, we can ignore the coordinates in $S_i$, i.e. $X_i$ has some structure. 
    \item The remaining part $X'$ of $X$ is $(n^{1-\eps}, t)$-pseudorandom. 
\end{itemize}

\paragraph{Worst-case Algorithm.} We now describe our worst-case algorithm. First, using a greedy procedure, we find an appropriate decomposition of $X = X' \sqcup X_1 \sqcup \dots \sqcup X_{n^{\eps}}$ as above. As in the average-case algorithm, in the preprocessing phase, $\texttt{OVPre}$ (computes and) stores: 
\begin{enumerate}
    \item the answers to all possible sparse queries $q$ with $\|q\|_0 < t$ in space $\binom{c \log n}{< t}$, and
    \item for each $S \subseteq [c \log n]$ with $|S| = t$, a list of candidate orthogonal vectors $Y_S$ from $X'$ with $|Y_S| \leq n^{1-\eps}$ in total space $\binom{c \log n}{t} \cdot n^{1-\eps}$.
\end{enumerate}
From this information, as in the average-case algorithm, we can answer all sparse queries and check any queries against the candidate vectors in the pseudorandom part $X'$.
However, this leaves us with all the remaining candidate vectors in the structured parts $X_1, \dots, X_{n^{\eps}}$.
For each set of vectors $X_i$, we can drop the coordinates in $S_i$ as they are always $0$. 
In other words, each $X_i$ defines a subproblem of $\OnlineOV$ with $n^{1-\eps}$ input vectors in $c \log n - t$ dimensions.
Therefore, $\texttt{OVPre}$ recursively creates a data structure for each of the $n^\eps$ sets $X_i$. 
This recursion continues until (A)~the subproblem contains just one vector, in which case the algorithm just stores that vector, or (B)~the dimension gets small enough, in which case the algorithm stores the answers to all possible queries.
We can then use induction to argue that each recursive data structure requires space $\approx \binom{c\log n}{t}\cdot n^{1-2\eps}$ for $\binom{c\log n}{t} \cdot n^{1-\eps}$ space total across all recursively-constructed data structures, about matching the amount of space needed for the pseudorandom part.

During query time, if $\|q\|_0 < t$, $\texttt{OVOnl}$ simply looks up the answer.
Otherwise, it finds the first $t$ non-zero coordinates of $q$ and checks the corresponding list $Y_S$ of candidates from $X'$, which takes time at most $n^{1-\eps}$. 
Finally, it determines whether there are any orthogonal vectors in each $X_i$ using the recursively-constructed data structures. 
By induction, we argue that each of these subproblems takes time at most $\approx n^{1-2\eps}$ for $n^{1-\eps}$ time total across all of the structured parts. 

For technical reasons, we need to vary the parameters $\eps$ and $t$ at different levels of recursion.
To do this, we introduce a parameter $i$ to track how many levels of recursion you are above the base case ($i=1$ is at the base case), and we set $\eps = 1/i$ and $t = c \log n / i$.
Thus, setting the top level of recursion to be at $i \approx c \log c$, we get query time $T \leq n^{1-\Omega\left(\delta/c\log c\right)}$ and space $S \leq O(n^{1+\delta})$ where $\delta > 0$ is an arbitrarily small constant.
Moreover, $\texttt{OVPre}$ is time efficient, with preprocessing time $T_p \leq O(n^{1+\delta})$.

\subsubsection{Space Lower Bounds}\label{sec:space}
\paragraph{Space Lower Bounds from NUSETH.}

The general framework for proving data structure space lower bounds (under SETH) involves reducing an instance of $k$-SAT to an instance of some online problem. For example, this is the framework used by \cite{GLP17} who apply the classical reduction from SETH to $\OV$  \cite{W05} to prove that there is no $\alpha, \delta > 0$ such that for all $c \geq 1$, there is a $(N^{\alpha}, N^{1-\delta})$ data structure for $\OnlineOV_{N, c \log N}$ with preprocessing time $T_{\P} \leq N^{\alpha}$.

We wish to prove space lower bounds on $(S, T)$-data structures with \emph{computationally unbounded preprocessing}. However, since our preprocessing algorithm is not efficient, the above framework no longer works. It is unclear how to leverage SETH to prove such a lower bound. The first attempt to try and fix this is to assume a stronger non-uniform hypothesis like NUSETH. We could encode the space used by the data structure as advice for a non-uniform algorithm that solves $k$-SAT efficiently and thus breaks NUSETH. However, this assumption alone is still not enough to get a lower bound for $\OnlineOV_{n, c \log n}$. This is because in the known reductions from $k$-SAT on $n$ variables to $\OnlineOV_{n, c \log n}$, every $k$-CNF formula $\varphi$ gets mapped to a different instance of $\OnlineOV_{n, c \log n}$. As a result there is no single non-uniform advice string because the preprocessing algorithm creates a new data structure for each $k$-CNF formula. 

Abboud and Williams \cite{AV21} give one approach to address this challenge by constructing a universal table that encodes the satisfiability information for all possible $k$-CNF formulas over $n$ variables. Specifically, consider a table where each row corresponds to a Boolean assignment to the $n$ variables, and each column corresponds to one of the $\binom{n}{k} \cdot 2^k$ possible $k$-CNF clauses. The columns are ordered according to the lexicographic order of their corresponding clauses. For each row (i.e. assignment), we place a $0$ in columns corresponding to clauses satisfied by the assignment, and a $1$ otherwise.\footnote{We thank Virginia Vassilevska Williams and an anonymous SODA reviewer for pointing out~\cite{AV21} to us.}

We use this table as the input instance $\mathcal{D}$ to the $\OnlineOV$ problem, consisting of $N = 2^n$ Boolean vectors, each of dimension $d = \binom{n}{k} \cdot 2^k$. Given a $k$-CNF formula $\varphi$, we define its indicator vector $I_{\varphi} \in \bit^d$ such that the $i$-th bit of $I_{\varphi}$ is set to $1$ if and only if $\varphi$ contains the $i$-th clause in lexicographic order.
It follows that $\varphi$ is satisfiable if and only if $I_{\varphi}$ is orthogonal to some vector in $\mathcal{D}$. Moreover, the input instance $\mathcal{D}$ is fixed and independent of the particular $k$-CNF formula being queried.

The key idea now is that $\mathcal{D}$ defines a single instance of \OnlineOV{} which can be used to test satisfiability of \emph{every} $k$-SAT formula on $n$-variables. If there was an efficient data structure for fast \OnlineOV{}, then we could transform our data structure into a fast non-uniform algorithm for $k$-SAT. Suppose that for some $\delta > 0$, there is an $(N^{1.99}, N^{1-\delta})$-data structure with unbounded preprocessing for \OnlineOV{}. We apply a standard branching argument based on the downward reducibility of $k$-SAT, which lets us take a $k$-SAT formula $\varphi$ on $n$ variables and reduce it to $2^{n/2}$ formulas each on $n/2$ variables. Alternatively, one can view this as a reduction to \OnlineOV{}, with $N = 2^{n/2}$ input vectors and $K = 2^{n/2}$ query points. We can compute the $K$ query points in time $O^*(2^{n/2})$. Furthermore, all the queries can be solved in time $K \cdot N^{1-\delta} = N^{2-\delta}$. This gives us a non-uniform algorithm that uses $N^{1.99} = 2^{0.995n}$ bits of advice and runs in time $N^{2 - \delta} \ll 2^n$, which refutes NUSETH. Moreover one can adjust the branching factor to prove arbitrarily large polynomial lower bounds on space.

While this already implies strong data structure lower bounds, we can further strengthen the result by combining the above reduction with the polynomial formulation framework introduced in~\cite{BGKMS23}. This allows us to establish arbitrarily large polynomial lower bounds even for instances of \OnlineOV{} with constant input sparsity, albeit at the cost of a sub-linear blowup in the dimension of the input vectors.

Applying the framework of polynomial formulations to $k$-SAT on $n$ variables, one can extract an explicit construction of a special instance  $\mathcal{D}_{N,w}^{\delta}$ of \OnlineOV{}  with the following properties (see \cref{def:hardest_OnlinneOV}):

\begin{itemize}
    \item [1.] $\mathcal{D}_{N, w}^{\delta}$ is an instance of $\OnlineOV_{N, d}$ with $N$ input vectors where $N = 2^{n}$, and dimension $d = N^{\delta}$ for a constant $\delta \in (0, 1]$. Each input vector is $w$-sparse with $w = \binom{k/\delta}{k} = O(1)$, and one can construct $\mathcal{D}_{N, w}$ in time $\Ot(N)$.

    \item [2.] There exists a $2^{(1-\gamma)n}$ algorithm that takes as input a $k$-SAT formula $\varphi$ on $n$ variables and outputs a query vector $q_{\varphi} \in \bit^{N^{\delta}}$ such that there exists a $v \in \mathcal{D}_{N,w}$ such that $v \perp q_{\varphi}$ if and only if $\varphi$ is satisfiable. The running time of this algorithm is $\Ot(N^\delta)$.
\end{itemize}

We refer to our specially constructed $\mathcal{D}_{N,w}$ as the hardest \OnlineOV{} instance. The remainder of the argument then follows directly from the previously described reduction.

\paragraph{Space Lower Bounds from Non-Uniform HamPath.} 
In this section we will show how to use a different non-uniform conjecture (\cref{conj:nuhampath_informal}) to prove a lower bound for an $(S,T)$-data structure for \ThreeSUM{} with preprocessing (\cref{def:3sum_pre}). For ease of exposition we will work with the equivalent version of \ThreeSUM{} where given 3 sets $X_1$, $X_2$ and $X_3$, and a fixed number $t$, the task is to check if there exist $x_1 \in X_1$, $x_2 \in X_2$ and $x_3 \in X_3$ such that $x_1 + x_2  + x_3 = t$.

We begin by showing a uniform reduction from Hamiltonian Path to the offline version of \ThreeSUM{}. Let $G$ be a graph on $n$ vertices.\footnote{Without loss of generality, one can assume that $n$ is a multiple of~$3$.} We use the fact that any Hamiltonian Path in $G$ can be partitioned into $3$ disjoint simple paths of length $n/3$.
Let $\mathcal{S}_{n/3}$ be the collection of all sets of size $n/3$. Now for a set $S\in\mathcal{S}_{n/3}$, and vertices $u,v\in[n]$, we can define the following variable,
    \[
    D[S, u, v] = 
\begin{cases}
    1 & \text{if $u\in S$ and $\exists$ a simple path from $u$ to a neighbor of~$v$ using all vertices in $S$,} \\
    0  & \text{otherwise.}
\end{cases}
\]

The key observation is that $G$ has a Hamiltonian Path if and only if there exists a collection of disjoint sets $S_1$, $S_2$ and $S_3 \in \mathcal{S}_{n/3}$, and a set $(a,b,c,d)$ of nodes in $G$ such that :
    \begin{equation} \label{eq:const_1}
        D[S_1, a, b] \wedge D[S_2, b, c] \wedge D[S_3, c, d] = 1\;,
    \end{equation}
    and
  \begin{equation} \label{eq:const_2}
      \Int{(S_1)} + \Int{(S_2)} + \Int{(S_3)} = 2^n -1 \;.\footnote{$\Int{(S)}$ is the integer value in $[0, 2^n -1]$ whose binary representation matches the $n$-bit indicator of the set $S$.}
  \end{equation}

We now reduce Hamiltonian Path to \ThreeSUM{} with preprocessing as follows.

\begin{description}
    \item[\textbf {Preprocessing Phase.}]
    In the preprocessing step we define the following sets that are independent of the graph $G$ and just depend on $n$:
    \[\mathcal{X}_1 = \mathcal{X}_2  = \mathcal{X}_3 = \{\Int{(S)} :  S \in \mathcal{S}_{n/3} \}\;.\]
    \item [\textbf{Query Phase.}]
    Now for every quadruple of vertices $(a, b, c, d)$ we construct the following queries to our \ThreeSUM{} instance with $t = 2^n - 1$:
        \begin{align*}& \mathcal{X}'_1 = \{\Int{(S_i)} \mid S_i \in \mathcal{S}_{n/3} \text{ and } D[S_i, a, b] = 1\}, \\
    & \mathcal{X}'_2 = \{\Int{(S_i)} \mid S_i \in \mathcal{S}_{n/3} \text{ and } D[S_i, b, c] = 1\}, \\
     & \mathcal{X}'_3 = \{\Int{(S_i)} \mid S_i \in \mathcal{S}_{n/3} \text{ and } D[S_i, c, d] = 1 \}\;.
    \end{align*}
By construction we have that $\mathcal{X}'_1 \subseteq \mathcal{X}_1$, $\mathcal{X}'_2 \subseteq \mathcal{X}_2$, and $\mathcal{X}'_3 \subseteq \mathcal{X}_3$ and hence denotes a valid \ThreeSUM{} with preprocessing query. Our \ThreeSUM{} query returns $1$ if only if there exists $x_1 \in \mathcal{X}'_1$, $x_2 \in \mathcal{X}'_2$ and $x_3 \in \mathcal{X}'_3$ such that $x_1 + x_2 + x_3 = 2^n -1$. 
    
\end{description}

Our \ThreeSUM{} instance now encodes the constraints \cref{eq:const_1,eq:const_2}. We report that the graph has a Hamiltonian Path if and only if the \ThreeSUM{} returns $1$ for any query. 

We now analyze our reduction. We construct a \ThreeSUM{} with preprocessing instance of size $N = |\mathcal{S}_{n/3}| \leq \binom{n}{n/3} \leq 2^{nH(1/3)}$. We construct $n^4$ queries to our \ThreeSUM{} instance, and the cost of constructing each query is $\Ot(N)$. Now we show that our reduction implies that there is no $(N^{1.08}, N^{1.08})$ data structure for \ThreeSUM{} with preprocessing, because we can convert such a data structure into a fast non-uniform algorithm $\mathcal{A}$ for Hamiltonian Path. Assume towards a contradiction that such a data structure exists, then our algorithm $\mathcal{A}$ encodes the $S$ bits from the data structure as advice, with $S \leq 2^{1.08(nH(1/3))} \ll 1.99^n$, and $\mathcal{A}$ mimics the reduction above and runs in time $O(n^4 \cdot 2^{1.08(nH(1/3))}) \ll 1.99^n$, which refutes \cref{conj:nuhampath_informal}.

\section{Preliminaries}
We denote the sets of integers and positive integers by $\Z$ and $\Z^+$, respectively. For a positive integer~$d$, by $[d]$ we denote the set $\{1,\ldots,d\}$.  
In this paper, we exclusively work with binary strings, i.e., vectors from $\{0, 1\}^d$ for some integer $d \geq 1$. For a vector $x\in \{0, 1\}^d$, we denote the Hamming weight of $x$ as $\norm{x}_0$. For $i\in[d]$, we use $x[i]$ to denote the $i$th coordinate of $x$. We say that $x,y\in\{0,1\}^d$ are orthogonal, and denote this by $x \perp y$, if $\langle x , y \rangle = \sum_{i \in [d]} x[i]\cdot y[i] =  0$, where the sum is taken over $\mathbb{Z}$.
For a subset $C \subseteq [d]$, $x |_C\in\{0,1\}^{|C|}$ is the vector $x$ restricted to the coordinates in $C$. We also use $\overline{C}$ to denote the complement of $C$.
We use the notation $\widetilde{O}(\cdot)$ to suppress polylogarithmic factors.

We will use the following simple bounds for binomial coefficients.
\begin{fact}
    For all $1 \leq k \leq n/2$, 
    \[\binom{n}{\leq k} \leq \left(\frac{en}{k}\right)^k\;.\]
\end{fact}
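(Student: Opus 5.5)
The plan is the standard comparison with a geometric series, applied to the identity
\[
\binom{n}{\le k}=\sum_{j=0}^{k}\binom{n}{j}.
\]
Set $x = k/n \in (0,1/2]$. The key inequality is that for every $x\in(0,1]$,
\[
\sum_{j=0}^{k}\binom{n}{j}\le \sum_{j=0}^{k}\binom{n}{j}x^{j-k}.
\]
This holds term by term, because $x^{j-k}\ge 1$ whenever $j\le k$ and $x\le 1$.

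Next, extend the right-hand sum to all $j$ from $0$ to $n$ and apply the binomial theorem:
\[
\sum_{j=0}^{k}\binom{n}{j}x^{j-k}\le x^{-k}\sum_{j=0}^{n}\binom{n}{j}x^{j}=x^{-k}(1+x)^n.
\]
Then use $1+x\le e^{x}$, which gives
\[
x^{-k}(1+x)^n\le \left(\frac{n}{k}\right)^k e^{xn}=\left(\frac{n}{k}\right)^k e^{k}=\left(\frac{en}{k}\right)^k.
\]

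No step here is a real obstacle. The only choice that needs care is picking $x=k/n$: this choice balances the factors $x^{-k}$ and $e^{xn}$ so that they combine into $(en/k)^k$. The hypothesis $k\le n/2$ is not actually needed for this argument; $1\le k\le n$ suffices, because $x\le 1$ is all the first step uses. In the paper the condition $k\le n/2$ just reflects the regime in which the fact is applied.

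If a more elementary route is preferred, there is an alternative. Use $\binom{n}{j}\le n^j/j!$ and $n^j\le n^k$ for $j\le k$, so that
\[
\binom{n}{\le k}\le \frac{n^k}{k^k}\sum_{j\le k}\frac{k^j}{j!}\le \left(\frac{n}{k}\right)^k e^k .
\]
Here the middle step uses $n^j/j! = (n/k)^j\,k^j/j!\le (n/k)^k\,k^j/j!$, which holds since $n/k\ge 1$. The last step uses $\sum_{j\ge 0}k^j/j!=e^k$. I would present the first argument as the proof.
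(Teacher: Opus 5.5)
Your proof is correct: the term-by-term bound $x^{j-k}\ge 1$, extending to the full binomial sum, and $1+x\le e^{x}$ with $x=k/n$ give exactly $(en/k)^k$. The paper states this as a standard fact without proof, so there is no argument of the paper's to compare against. You are right that $k\le n/2$ is unnecessary. Your argument also goes through verbatim for non-integer $k$, reading $\binom{n}{\le k}$ as $\sum_{j\le\lfloor k\rfloor}\binom{n}{j}$, and this is how the paper actually applies the bound, e.g.\ with $k=d/i$ or $k=t$.

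One small slip in your alternative route: the stated justification ``$n^j\le n^k$'' is not what the displayed step uses. The operative inequality is $(n/k)^j\le (n/k)^k$, as your following sentence correctly says.
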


Let $H$ denote the binary entropy function where $H(p) = -p \log p - (1-p)\log (1-p)$ for $0<p<1$. Then,
\begin{fact}
    For all $1 \leq k \leq n$, 
    \[\binom{n}{k} \leq 2^{H(k/n)n}\;.\]
\end{fact}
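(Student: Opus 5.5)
The plan is to reduce the claim to a single binomial identity and then bound each binomial coefficient through the entropy function, since the statement is the standard entropy bound $\binom{n}{k} \leq 2^{H(k/n)n}$ for $1 \leq k \leq n$.

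First I would treat the boundary case. When $k=n$ the left-hand side equals $1$. Extending $H$ continuously with $H(1)=0$, the right-hand side is also $1$. From now on I assume $1\le k<n$, and I set $p=k/n\in(0,1)$.

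Next I would use the binomial theorem:
\[
1=(p+(1-p))^n=\sum_{j=0}^n\binom{n}{j}p^j(1-p)^{n-j}\ge \binom{n}{k}p^k(1-p)^{n-k}.
\]
Rearranging gives
\[
\binom{n}{k}\le p^{-k}(1-p)^{-(n-k)}=2^{-k\log p-(n-k)\log(1-p)}.
\]
Substituting $k=pn$, the exponent becomes
\[
n\bigl(-p\log p-(1-p)\log(1-p)\bigr)=nH(p),
\]
which is exactly the claimed bound.

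The argument has essentially one step, so there is no real obstacle. The only subtlety is the endpoint $k=n$, where $H(1)$ is undefined under the stated domain $0<p<1$. I handle it either by the continuous extension or by reading the bound as $2^0=1$.
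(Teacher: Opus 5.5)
Your proof is correct and complete. The paper states this bound as a standard Fact and gives no proof, and your binomial-theorem argument with $p=k/n$ is the textbook proof. You were also right to flag the endpoint $k=n$: it falls outside the paper's stated domain $0<p<1$ for $H$, and the convention $H(1)=0$ turns the bound into $1\le 1$.

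Your route also gives a stronger statement for free. If $k\le n/2$, then $p/(1-p)\le 1$, so every term with $j\le k$ satisfies $p^j(1-p)^{n-j}\ge p^k(1-p)^{n-k}$. The same computation therefore yields $\binom{n}{\le k}\le 2^{H(k/n)n}$, which is the cumulative form the paper actually invokes in the proof of Theorem~\ref{thm:hampath}.
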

We will also make use of the Chernoff bound, stated here for convenience.
\begin{lemma}[Chernoff Bound]
Let $X_1, \dots, X_n \in \{0, 1\}$ be independent random variables.
Let $X = X_1 + \dots + X_n$ and $\mu = E(X)$. For all $\beta \geq 0$, 
    \[\Pr[X \geq (1+\beta) \mu] \leq \exp\left(-\frac{\beta^2\mu}{(2+\beta)}\right)\;.\]
\end{lemma}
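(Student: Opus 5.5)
We follow the standard moment-generating-function (exponential Markov) argument. Write $p_i = \Pr[X_i = 1]$, so $\mu = \sum_i p_i$. If $\beta = 0$ the bound reads $\Pr[X\ge\mu]\le 1$ and is trivial, so assume $\beta>0$. For any $\lambda > 0$, Markov's inequality applied to $e^{\lambda X}$ gives
\[
\Pr[X \geq (1+\beta)\mu] \leq e^{-\lambda(1+\beta)\mu}\,\mathbb{E}\left[e^{\lambda X}\right].
\]
By independence, $\mathbb{E}[e^{\lambda X}] = \prod_i \mathbb{E}[e^{\lambda X_i}] = \prod_i \left(1 + p_i(e^\lambda - 1)\right)$. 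Using $1+x \leq e^x$, this is at most $\exp\left(\mu(e^\lambda - 1)\right)$.

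Next we optimize over $\lambda$ by setting $\lambda = \ln(1+\beta) > 0$. This yields
\[
\Pr[X \geq (1+\beta)\mu] \leq \exp\bigl(-\mu\,[(1+\beta)\ln(1+\beta) - \beta]\bigr).
\]
It therefore suffices to show the elementary inequality $(1+\beta)\ln(1+\beta) - \beta \geq \beta^2/(2+\beta)$ for all $\beta \geq 0$.

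This analytic inequality is the only step requiring care. We reduce it to the bound $\ln(1+\beta) \geq 2\beta/(2+\beta)$. Given that bound,
\[
(1+\beta)\cdot\frac{2\beta}{2+\beta} - \beta = \frac{\beta(2+2\beta - 2 - \beta)}{2+\beta} = \frac{\beta^2}{2+\beta},
\]
as needed. To prove $\ln(1+\beta) \geq 2\beta/(2+\beta)$, let $f(\beta) = \ln(1+\beta) - 2\beta/(2+\beta)$. Then $f(0) = 0$, and
\[
f'(\beta) = \frac{1}{1+\beta} - \frac{4}{(2+\beta)^2} = \frac{(2+\beta)^2 - 4(1+\beta)}{(1+\beta)(2+\beta)^2} = \frac{\beta^2}{(1+\beta)(2+\beta)^2} \geq 0.
\]
Hence $f \geq 0$ on $[0,\infty)$, which completes the argument.
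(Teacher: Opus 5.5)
Your proof is correct. The paper does not prove this lemma; it only quotes the Chernoff bound as a standard tool. Your argument is exactly the standard derivation behind the stated form: exponential Markov with $\lambda=\ln(1+\beta)$, then the elementary inequality $\ln(1+\beta)\ge 2\beta/(2+\beta)$, whose derivative check you carried out correctly.
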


\subsection{Fine-Grained Complexity Conjectures}\label{sec:fgc}
In the $k$-SAT problem, given a $k$-CNF formula~$\varphi$, the task is to check if $\varphi$ has a satisfying assignment. The following hypothesis, introduced by Impagliazzo, Paturi, and Zane~\cite{IPZ98,IP99}, is now widely used in the field of fine-grained complexity for proving quantitative bounds on the complexity of algorithmic problems.
\begin{definition}[Strong Exponential Time Hypothesis (SETH)~\cite{IPZ98,IP99}]
    For every constant~$\eps > 0$ there exists $k \in \Z^+$ such that no deterministic algorithm solves $k$-SAT on formulas with $n$ variables in time $2^{(1 - \eps)n}$.
\end{definition}

In this work, we study the non-uniform version of SETH. Therefore, we begin with the definition of non-uniform algorithms.
\begin{definition}[Non-Uniform RAM]\label{def:non-uniform-alg}
A non-uniform RAM algorithm is a RAM algorithm $\A$ and an infinite family of advice strings $\{a_i\}_{i\in\Z^+}$. Such an algorithm solves a problem $\mathcal{P}$ in time $T$ and using $S$ bits of non-uniform advice if for all sufficiently large~$n$, (i)~the length of $|a_i|\leq S$, and (ii)~for every input $x$ of length~$n$, $\A(x, a_n)$ solves $\mathcal{P}$ on input~$x$ in time~$T$.
\end{definition}

Now, equipped with the definition of non-uniform algorithms, we are ready to state the Non-Uniform SETH.

\begin{conjecture}[Non-Uniform SETH (NUSETH)]\label{conj-nuseth}
For every constant $\eps \in (0,1)$ there exists $k \in \Z^+$ such that no non-uniform RAM algorithm solves $k$-SAT on formulas with~$n$ variables in time $2^{(1-\eps)n}$ using $2^{(1-\eps)n}$ bits of advice.
\end{conjecture}

\paragraph{On the choice of the computational model in NUSETH.}
Various non-uniform versions of SETH and related conjectures have been studied in previous works (see, e.g.,~\cite{GHLORS12,CEF12,CGIMPS16,ABGS21}). There are two natural choices for the computational model in the conjecture: non-uniform circuits and non-uniform RAM algorithms, and the former model would lead to a weaker conjecture.\footnote{Note that even for the extremely well-studied SETH conjecture, there is no clear agreement on the computational model. In particular, a $2^{n/2}$ RAM algorithm for SAT would not necessarily imply a Turing machine solving SAT in fewer than $2^n$ steps. Therefore, such an algorithm would refute SETH stated for RAM algorithms but not the Turing machines version of SETH.} For simplicity, in this paper, we choose to adapt the stronger version of the conjecture: no non-uniform RAM algorithm solves $k$-SAT in $2^{(1-\eps)n}$ for all~$k$. The point of this work is to show that when we construct a non-trivial data structure for {\OnlineOV} (or a number of related problems), we will obtain a faster-than-$2^n$ non-uniform algorithm for SAT. This will be great regardless of the definition of NUSETH!

Finally, we state the non-uniform version of the conjecture regarding the hardness of Hamiltonian Path. We say that a path in a graph is simple if it does not repeat any vertices. In the Hamiltonian Path problem (\HamPath), given a directed graph~$G$ on~$n$ vertices, the goal is to check if $G$ has a simple path that visits every vertex. While many special cases of \HamPath{} admit efficient algorithms (including Bj{\"o}rklund's celebrated algorithm for Undirected \HamPath{}~\cite{B10}), the best known algorithm for the general case of the problem is due to Bellman, and Held and Karp~\cite{B62, HK62}, and dates back 60 years. This algorithm runs in time $O(2^n \cdot \poly(n))$, and despite much effort, no faster algorithms are known even in the non-deterministic and non-uniform settings. 
Since progress has been hard on this problem we state the following non-uniform conjecture for Hamiltonian Path.

\begin{conjecture}[Non-Uniform \HamPath{} Conjecture]\label{conj-nuhampath} For every constant $\eps\in (0,1)$,  no non-uniform RAM algorithm solves {\HamPath} on graphs with $n$ vertices in time $2^{(1-\eps)n}$ using $2^{(1-\eps)n}$ bits of advice.
\end{conjecture}

We will use the following version of the dynamic programming algorithm for \HamPath{} from \cite{B62, HK62}.
\begin{proposition}[Implicit in~\cite{B62,HK62}]\label{prop:hk} There is a deterministic algorithm that for every $k:=k(n)$ outputs all simple paths of length at most~$k$ in a given $n$-vertex graph in time $\binom{n}{\leq k}\cdot\poly(n)$.
\end{proposition}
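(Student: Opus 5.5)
The plan is a Bellman--Held--Karp style dynamic program over vertex subsets of size at most $k$. Read ``simple path of length at most $k$'' as a simple path on at most $k$ vertices; a bound counting edges only shifts $k$ by one and changes nothing below. The output should be the table $P[S,u,v]$ for all $S\subseteq[n]$ with $|S|\le k$ and all $u,v\in S$. The entry $P[S,u,v]$ equals $1$ if and only if there is a simple path that starts at $u$, ends at $v$, and whose vertex set is exactly $S$. This table is precisely what later reductions use, for example $D[S,u,v]$ in the \HamPath{} reduction, which checks $P[S,u,w]$ over the neighbors $w$ of $v$. If explicit paths are wanted, we also store one witness predecessor per true entry and reconstruct each path in $O(n)$ time. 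If literally all paths must be listed, we enumerate them by backtracking through the table; this costs polynomial time per path output, and I would state that caveat.

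The recurrence is as follows. The base case is $P[\{u\},u,u]=1$. For $|S|\ge 2$ we set
\[
P[S,u,v]=\bigvee_{w\in S\setminus\{v\},\,(w,v)\in E} P[S\setminus\{v\},u,w].
\]
Correctness follows by induction on $|S|$. A simple path from $u$ to $v$ with vertex set $S$ consists of its last edge $(w,v)$ preceded by a simple path from $u$ to $w$ with vertex set $S\setminus\{v\}$. Conversely, appending $v\notin S\setminus\{v\}$ to such a path keeps it simple. We process the sets in order of increasing size.

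For the running time, the table has $\binom{n}{\le k}\cdot n^2$ entries, and each entry is an OR over at most $n$ terms. It remains to do the indexing efficiently. I would enumerate all subsets of size at most $k$ in lexicographic order, giving $\binom{n}{\le k}\cdot\poly(n)$ generation time, and store the entries in a dictionary or a combinatorial ranking keyed by the $n$-bit indicator of $S$, so that each lookup costs $\poly(n)$. The total is then $\binom{n}{\le k}\cdot\poly(n)$. Directed versus undirected graphs make no difference.

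The only real obstacle is bookkeeping. A naive array indexed by all $2^n$ subsets would cost $2^n$ time, so we need subset ranking (the combinatorial number system) or hashing to restrict storage to subsets of size at most $k$. Beyond that, the argument is the standard one.
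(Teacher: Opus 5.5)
Your proposal is correct and follows the paper's proof: the paper defines $D[S,u,v]$ as the indicator of a simple path from $u$ to $v$ through exactly the vertices of $S$. It computes each entry in linear time from entries on smaller sets, by the same naive dynamic program over all $S$ with $|S|\le k$. Your explicit recurrence, your reading of ``outputs all simple paths'' as computing this table (listing every path can exceed the time bound), and your use of deterministic subset ranking instead of a $2^n$-size array are details the paper leaves implicit.
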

\begin{proof}
    For a set of vertices $S\subseteq[n]$ and a pair of distinct vertices $u,v\in S$, let $D[S, u, v]=1$ if there exists a simple path starting at the vertex $u$, passing through all vertices of~$S$, and ending at the vertex~$v$, and~$0$ otherwise. Each value $D[S, u,v]$ can be computed in linear time given the values of $D[S', u, v']$ for smaller sets~$S'$. Therefore, the na\"{i}ve dynamic programming algorithm computing $D[S,u,v]$ for all sets $S$ of size $|S|\leq k$ runs in time $\binom{n}{\leq k}\cdot\poly(n)$.
\end{proof}

\subsection{Data Structure Problems}

We define a data structure problem as a triplet $(\mathcal{D}, \mathcal{Q} , \mathcal{F})$, which corresponds to possible inputs $D\in\mathcal{D}$,  $D \in \bit^N$, a collection of queries $\mathcal{Q}$, and a function $\mathcal{F}: \mathcal{D} \times \mathcal{Q}\rightarrow \bit$ which encodes the answers to all the queries $q \in \mathcal{Q}$ for inputs from $\mathcal{D}$. 

\begin{definition}[Data Structures]\label{def:s-t-datastructure}
    An $(S, T)$-data structure for a problem $(\mathcal{D}, \mathcal{Q}, \mathcal{F})$ is a pair of deterministic algorithms $(\mathcal{P}, \mathcal{A})$ that solve the problem in two phases.

    \begin{description}
        \item{\bf Preprocessing phase.} In the first phase, the preprocessing algorithm $\P$ receives an input $D$, and preprocesses it into a data structure $\sigma\in\{0,1\}^S$ consisting of~$S$ bits.
        
        \item{\bf Query phase.} In the second phase, the query algorithm $\A$ runs in time $T$, and receives a query $q \in \mathcal{Q}$ and (query access to) the data structure~$\sigma$, and outputs $1$ if and only if $\mathcal{F}(D, q) = 1$.
    \end{description}

    In addition, we say that the data structure $(\mathcal{P}, \mathcal{A})$ has preprocessing time $T_\mathcal{P}$ if $\mathcal{P}$ runs in time at most $T_\mathcal{P}$.    
\end{definition}
We generally wish to simultaneously minimize all three computational parameters: the space $S$, the query time $T$, and the preprocessing time $T_\mathcal{P}$. However, in our lower bounds we will allow the preprocessing algorithm to be computationally unbounded, which only makes our results stronger. In the upper bounds, we will present algorithms with efficient preprocessing.

Our lower bounds will hold even for stronger ``batch'' versions of the data structure problems, where the query algorithm is given a number of queries simultaneously and is asked to solve them all together. For a number of problems, the batch versions are easier to solve. (A~prominent example is the Matrix-Vector Multiplication problem, where the batch version, Matrix-Matrix Multiplication, is known to have efficient algorithms). 
\begin{definition}[Batch Data Structures]\label{def:batch_ds}
        A batch data structure for a problem $(\mathcal{D}, \mathcal{Q}, \mathcal{F})$ solving $M$ queries in space~$S$ and time~$T$ per query is a pair of deterministic algorithms $(\mathcal{P}, \mathcal{A})$ that solve the problem in two phases.

    \begin{description}
        \item{\bf Preprocessing phase.} In the first phase, the preprocessing algorithm $\P$ receives an input $D$, and preprocesses it into a data structure $\sigma\in\{0,1\}^S$ consisting of~$S$ bits.
        
        \item{\bf Query phase.} In the second phase, the query algorithm $\A$ runs in time $M\cdot T$, and receives $M$ queries $q_1,\ldots,q_M \in \mathcal{Q}$ and (query access to) the data structure~$\sigma$, and outputs $a_1,\ldots,a_M$ such that $a_i=1$
        if and only if $\mathcal{F}(D, q_i) = 1$.
    \end{description}

    In addition, we say that the data structure $(\mathcal{P}, \mathcal{A})$ has preprocessing time $T_\mathcal{P}$ if $\mathcal{P}$ runs in time at most $T_\mathcal{P}$.   
\end{definition}

We will also be interested in average-case data structures. We are particularly interested in data structures which are efficient over some distribution of inputs $\mathcal{B}$.

\subsection{Online Orthogonal Vectors}
In this section, we formally define the Orthogonal Vectors problem (OV) and its data structure variant $\OnlineOV$. 

\begin{definition}[Orthogonal Vectors ($\OV_d$)]
Given two sets of vectors $A, B \subseteq \{0, 1\}^d$ of size $|A|=|B| = n$, determine if there exists $x \in A$ and $y \in B$ such that $x \perp y$. 
\end{definition}

While $\OV_d$ admits a trivial algorithm running in time $O(n^{2}d)$, no truly sub-quadratic algorithm is known.
In fact it was shown in \cite{W05} that refuting the following conjecture would break SETH.

\begin{conjecture}[OV Conjecture]\label{conj-ovc} There is no $\Ot(n^{2-\eps})$-time algorithm solving $OV_{c\log n}$ for all constant $c>0$.
\end{conjecture}

We will be interested in the data structure version of Orthogonal Vectors which we define below.

\begin{definition}[Online Orthogonal Vectors  ($\OnlineOV_{n,d}$)]
For a dimension parameter $d:=d(n)$, $\OnlineOV_{n,d}$  is a data structure problem $(\mathcal{D}, \mathcal{Q}, \mathcal{F})$ where:
\begin{itemize}
    \item $\mathcal{D}=\{D\colon D \text{ is a set of vectors }\{x_1, \dots, x_n\}, \text{ where } x_i \in\{0,1\}^{d}\}$.
    \item $\mathcal{Q}$ is the set of all vectors in $\bit^d$.
    \item  $\mathcal{F}(D, q) = 1$ if and only if there exists $x_i \in D$ such that $x_i \perp q$.
\end{itemize}

In the special case where for every $x_i \in \mathcal{D}$, $\norm{x_i}_0 = c$, we call this problem \emph{$c$-Sparse-$\OnlineOV_{n, d}$}.
\end{definition}

We will be primarily interested in two parameter regimes---when $d = c \log n$ for a constant $c > 0$ ($\OnlineOV_{n, c \log n}$) and when $d = n^{\eps}$ for a small constant $0 < \eps < 1$ ($\OnlineOV_{n, n^{\eps}}$).

We will also be interested in the average-case version of $\OnlineOV_{n, d}$. We are primarily interested in the case where the input vectors are drawn i.i.d.~from the $p$-biased distribution of $n$-bit vectors.

\begin{definition}[Average-Case Online Orthogonal Vectors  ($\OnlineOV_{n,d}$)]
    An average-case $(S, T)$-data structure for $\OnlineOV_{n, d}$ over a distribution of inputs $\mathcal{B}$ is a pair of deterministic algorithms $(\mathcal{P}, \mathcal{A})$ solving each input~$D$ of the problem in space~$S_D$, worst-case query time $T_D$, and preprocessing time $T_{\mathcal{P},D}$ such that  $\mathbb{E}_{D \sim \mathcal{B}}[S_D] \leq S$, $\mathbb{E}_{D \sim \mathcal{B}}[T_D] \leq T$. Furthermore, the preprocessing time is defined as $\mathbb{E}_{D \sim \mathcal{B}}[T_{\mathcal{P}, D}]$.
\end{definition}

\subsection{Other Data Structure Problems} \label{section:ds_reductions}

We will work with the following related data structure problems.
\begin{itemize}
    \item $\PM_{n,d}$: Preprocess~$n$ vectors $(x_1,\ldots,x_n)$ from $\{0,1\}^d$, and for a query vector~$q\in\{0,1, *\}^d$, output~$1$ if and only if there exists~$i$ such that for all~$j$: $q[j]=*$ or $q[j]=x_i[j]$.

    \item $\SubsetQuery_{n,d}$: Preprocess $n$ sets $S_1, \hdots S_n \subseteq [d]$, and for a query $q \subseteq [d]$ output $1$ if and only if there exists an $i$ such that $q \subseteq S_i$.

    \item $\ContainmentQuery_{n,d}$: Preprocess $n$ sets $S_1, \hdots S_n \subseteq [d]$, and for a query $q \subseteq [d]$ output $1$ if and only if there exists an $i$ such that $S_i \subseteq q$.

    \item $\ORS_{n,d,u}$: 
    Let $u$ be an integer. Preprocess $n$ points $(x_1, \dots, x_n)$ from $[u]^d$, and for a query vector $q=(\ell_1,r_1,\ldots,\ell_d,r_d)$ that represents a rectangle in $[u]^d$ output $1$ if there is an $x_i$ contained in the rectangle, i.e., if there exists $i$ such that for all $j \in [d]$: $\ell_j\leq x_i[j] \leq r_j$.

    \item $(1+\eps)$-$\ORS_{n,d,u}$: A generalization of $\ORS_{n, d, u}$ where every query rectangle is approximated by a larger rectangle containing the query rectangle where each side is blown up by a factor of $(1 + \eps)$. %

    \item $\DNFEval_{n,d}$:
    Let $\phi$ be a DNF formula with $n$ clauses and $d$ variables. Preprocess $\phi$, and for a query $q \in \bit^d$, output $1$ if and only if $\phi(q) = 1$.

    \item $\OnlineIP_{n, d, k}$: Preprocess $n$ vectors $(x_1, \hdots, x_n)$ from $\bit^d$, and for a query vector $q \in \bit^d$, output $1$ if and only if there exists $i$ such that $\langle x_i, q\rangle = k$. 

    \item $(1+\eps)$-$\ANN_{n, d, u, p}$: Let $u$ be an integer. Preprocess $n$ points $(x_1, \dots, x_n)$ from $[u]^d$, and for a query vector $q \in [u]^d$ output $x_i$\footnote{We abuse notation here and describe a search problem instead of a decision problem.} such that $\norm{x_i - q}_{p} \leq (1 + \eps) \dot \min_{j}\norm{x_j - q}_p$. 
\end{itemize}

\begin{lemma}[\cite{I98,CIP02, AWY14, CW19}]\label{lem:ov_reduces_to_everything}
If there is an $(S,T)$-data structure with preprocessing time $T_{\mathcal{P}}$ for any of the following problems, then there is an $(O(S), O(T))$-data structure with preprocessing time $O(T_{\mathcal{P}})$ for $\OnlineOV_{n, d}$.
    \begin{align*}& \PM_{n, d}, \; \SubsetQuery_{n, d}, \; \ContainmentQuery_{n, d}\;, \\
    & \ORS_{n, d, 1}, \; \DNFEval_{n,d}, \; \OnlineIP_{n, d, 0}\;, \\
     & (1+\eps)\text{-}\ANN_{n, d, p, \infty}\;.
    \end{align*}
\end{lemma}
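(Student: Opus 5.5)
The plan is to give, for each listed problem, a reduction from $\OnlineOV_{n,d}$. The preprocessing map will send the $\OnlineOV$ input $x_1,\ldots,x_n$ to an instance of the target problem with the same number of points and dimension (up to constants). The query map will send $q$ to one query (or $O(1)$ queries) of the target problem, computable in $O(d)$ time, whose answer equals $\mathcal{F}(D,q)$. Composing such a reduction with the assumed $(S,T)$-data structure yields space $O(S)$, query time $O(T+d)$, and preprocessing time $O(T_{\mathcal{P}}+nd)$. Since any nontrivial data structure reads its query ($T\ge d$) and its input ($T_{\mathcal{P}}\ge nd$), these become $O(T)$ and $O(T_{\mathcal{P}})$.

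The individual reductions are as follows.
\begin{itemize}
\item \emph{Partial match.} Keep the $x_i$. Map $q$ to $y$ with $y[j]=0$ if $q[j]=1$ and $y[j]=*$ otherwise. Then $x_i$ matches $y$ iff $x_i\perp q$.
\item \emph{Subset query.} Let $S_i=\{j: x_i[j]=0\}$ and let the query set be $\{j:q[j]=1\}$. Orthogonality means exactly that the query set is contained in $S_i$.
\item \emph{Containment query.} Let $S_i=\{j:x_i[j]=1\}$ and let the query set be $\{j:q[j]=0\}$.
\item \emph{Orthogonal range search.} Take coordinates in $\{0,1\}$ and the rectangle $[0,0]$ on coordinates where $q[j]=1$ and $[0,1]$ elsewhere. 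I would interpret $u=1$ as the domain $\{0,1\}$; with the $(1+\eps)$ approximate version the same integral rectangles still work because nothing is blown up across integers.
\item \emph{DNF evaluation.} Use one term $\bigwedge_{j:x_i[j]=1}\neg v_j$ per input vector. Then $\phi(q)=1$ iff some $x_i\perp q$.
\item \emph{Inner product.} $\OnlineIP_{n,d,0}$ is literally $\OnlineOV$.
\end{itemize}

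The hardest case is $(1+\eps)$-\ANN{} in $\ell_\infty$, since it is a search problem and the approximation must not blur the gap. I would follow \cite{I98}: keep the points $x_i\in\{0,1\}^d$, and map $q$ to the point $z$ with $z[j]=-1$ (or $0$ after shifting the universe) where $q[j]=1$, and $z[j]=1/2$ (scaled to integers) where $q[j]=0$. Then $\|x_i-z\|_\infty$ equals the smaller value when $x_i\perp q$ and the larger value otherwise. After integer scaling the two values differ by a constant factor (e.g., distance $1$ versus $2$ after shifting), so for $\eps<1$ any $(1+\eps)$-approximate answer $x_i$ is orthogonal to $q$ iff an orthogonal vector exists. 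Finally, check the returned $x_i$ against $q$ in $O(d)$ time to convert the search answer into a decision.

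The main obstacle is making this gap construction fit the stated universe and integer constraints while keeping the overhead constant. If the scaling requires universe $\{0,1,2\}$, I would simply absorb that into the constants.
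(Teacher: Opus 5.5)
Your reductions are correct and are the standard ones from the works the paper cites for this lemma (the paper gives no proof of its own beyond \cite{I98,CIP02,AWY14,CW19}), including the Indyk-style $\ell_\infty$ gap gadget for the ANN case. Three points you treat as implicit are left implicit by the statement as well: reading $u=1$ as the Boolean domain; restricting to small $\varepsilon$ for ANN, where placing the query at $-1/2$ rather than $-1$ on the support of $q$ gives distances $1/2$ versus $3/2$ and so covers all $\varepsilon<2$; and absorbing the $O(d)$ and $O(nd)$ translation costs into $T$ and $T_{\mathcal{P}}$.
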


\begin{lemma}[\cite{CIP02, AWY14}]\label{lem:everything_is_ov}
If there is an $(S,T)$-data structure with preprocessing time $T_{\mathcal{P}}$ for $\OnlineOV_{n, d}$, then there is an $(O(S), O(T))$-data structure with preprocessing time $O(T_{\mathcal{P}})$ for each of the following problems.
    \begin{align*} 
     & \PM_{n, d/2}, \;\DNFEval_{n, d/2}, \; \SubsetQuery_{n, d}, \\ & \ContainmentQuery_{n, d}, \; (1+\eps)\text{-}\ORS_{n, O(d\eps/\log u), u}\; .
    \end{align*}
\end{lemma}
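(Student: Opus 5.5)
Each listed problem is handled by the same scheme. A fixed map sends every input object to a vector of $\{0,1\}^d$, and the resulting $n$ vectors are given to the assumed $\OnlineOV_{n,d}$ preprocessing algorithm. A second fixed map sends every query to a vector $q'\in\{0,1\}^d$, chosen so that the original answer is $1$ iff some stored vector is orthogonal to $q'$. The encodings take $O(d)$ time per vector, which is absorbed since $T$ is at least the query-reading time. Space and preprocessing time therefore change only by constant factors.

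\textbf{Partial Match and DNF Evaluation.} For $\PM_{n,d/2}$, I map each input $x\in\{0,1\}^{d/2}$ to $x'=(x,\bar x)\in\{0,1\}^d$. For a query $q\in\{0,1,*\}^{d/2}$, I set $q'[d/2+j]=1$ if $q[j]=1$, set $q'[j]=1$ if $q[j]=0$, and leave all other coordinates $0$. Then $\langle x',q'\rangle=0$ holds iff $x[j]=1$ whenever $q[j]=1$ and $x[j]=0$ whenever $q[j]=0$, which is exactly a match.

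For $\DNFEval_{n,d/2}$, the stored vectors are the $n$ terms and the query is the assignment. A term $C$ becomes $c\in\{0,1\}^d$ with $c[j]=1$ for each positive literal $x_j$ and $c[d/2+j]=1$ for each negative literal $\neg x_j$. The assignment $a$ becomes $a'=(\bar a,a)$. Then $c\perp a'$ iff every positive literal has $a_j=1$ and every negative literal has $a_j=0$, i.e.\ iff $a$ satisfies $C$. Hence $\phi(a)=1$ iff some stored vector is orthogonal to $a'$.

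\textbf{Subset and Containment Query.} I identify sets with their indicator vectors in $\{0,1\}^d$. For $\SubsetQuery_{n,d}$, I store $\overline{S_i}$ and query with $q$; then $q\subseteq S_i$ iff $q\cap\overline{S_i}=\emptyset$, i.e.\ iff $q\perp\overline{S_i}$. For $\ContainmentQuery_{n,d}$, I store $S_i$ and query with $\bar q$; then $S_i\subseteq q$ iff $S_i\perp\bar q$.

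\textbf{Approximate ORS.} This is the step needing care. I discretize each coordinate value in $[u]$ to a level $\lambda(v)=\lfloor\log_{1+\eps}v\rfloor$, treating $v=0$ as a separate bottom level. This gives $L=O(\log u/\eps)$ levels per coordinate. A point coordinate is encoded by the thermometer code $\tau\in\{0,1\}^L$ with $\tau_a=1$ iff $\lambda(v)\ge a$, together with its complement; this uses $2L$ bits per coordinate. The total dimension is $O(d'\log u/\eps)$, so the $d'=O(d\eps/\log u)$ coordinates fit into $d$ bits.

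A query side $[\ell_j,r_j]$ is widened to whole levels, $\lambda(\ell_j)\le\lambda\le\lambda(r_j)$. I put a $1$ on the complement-bit for level $\lambda(\ell_j)$, which forces $\lambda(x[j])\ge\lambda(\ell_j)$. I put a $1$ on the thermometer bit for level $\lambda(r_j)+1$, which forces $\lambda(x[j])\le\lambda(r_j)$. Orthogonality then holds exactly when every coordinate lies in the widened range. Every true point of the query rectangle is accepted. Every accepted point lies in a rectangle whose endpoints differ from the originals by at most a factor $(1+\eps)$, which is the allowed approximation.

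The main obstacle I expect is this last check: making the rounding precise at level boundaries and at the value $0$, and confirming that the enlargement meets the definition's per-side $(1+\eps)$ blow-up. If it does not, I would rescale with $\eps/2$ in place of $\eps$, which costs only a constant in the dimension.
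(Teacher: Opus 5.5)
Your reductions for $\PM_{n,d/2}$, $\DNFEval_{n,d/2}$, $\SubsetQuery_{n,d}$ and $\ContainmentQuery_{n,d}$ are correct. They are the standard ones behind the citation; the paper gives no proof of its own.

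The approximate $\ORS$ part is where the worry you flag at the end is real. Your rounding proves a \emph{relative} guarantee: an accepted point satisfies $\ell_j/(1+\eps) < x[j] < (1+\eps)r_j$ for every $j$. If $(1+\eps)$-$\ORS$ allows the endpoints to move multiplicatively, which is the notion that rounding to powers of $1+\eps$ is designed for, your argument is complete. The paper's definition, however, literally asks for each \emph{side} to be blown up by a factor $1+\eps$, i.e.\ the width $r_j-\ell_j$ may grow by a factor $1+\eps$. Your construction does not achieve that. For $\ell_j=r_j=u/2$ the blown-up side is still a single point, but your accepted range $\{v:\lambda(v)=\lambda(u/2)\}$ contains $\Theta(\eps u)$ values, so the structure may answer $1$ when it must answer $0$. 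Replacing $\eps$ by $\eps/2$ does not help, because your slack is proportional to the magnitude $r_j$, not to the width $r_j-\ell_j$.

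Under the side-length reading you need an encoding that adapts to the width of the query interval. This can still be done with $O(\log u/\eps)$ bits per coordinate.
Keep two copies of each coordinate, $x[j]$ and $x[j]+\sigma$ with $\sigma=(0101\cdots01)_2$. Then every interval of width $w$ lies inside one aligned dyadic block of size $2^s=O(w+1)$ in at least one copy.
Encode each copy in two parts:
\begin{itemize}
\item its binary expansion together with its complement, using $O(\log u)$ bits, so a query can pin down the block by partial-match constraints;
\item for each scale $s$, a thermometer code with complement of the point's position inside its level-$s$ block, at granularity $\max(1,\Theta(\eps 2^s))$, using $O(1/\eps)$ bits per scale.
\end{itemize}
A query fixes the block in the good copy and puts two $1$s into the thermometer code of scale $s$. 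This accepts exactly the outward rounding of $[\ell_j,r_j]$ to that granularity. After fixing constants, its width is at most $(1+\eps)(r_j-\ell_j)$, and it is exact when $r_j-\ell_j=O(1/\eps)$.
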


We remark that the reductions of \cref{lem:ov_reduces_to_everything,lem:everything_is_ov} hold for the batch versions of the problems as well.

We will also use the following more complicated and randomized reduction from \ORS{} to \OnlineOV{}.\footnote{We say that $(\P, \A)$ is a randomized $(S, T)$-data structure if both the preprocessing and query algorithms $\P$ and $\A$ are allowed to use randomness.}

\begin{lemma}[\cite{CIP02}]
    If there is an $(S, T)$-data structure for $\OnlineOV_{n, d'}$ where $d' = O(d^2\log^2u)$, then there is an $(O(S), O(T))$-randomized data structure for $\ORS_{n,d, u}$.
\end{lemma}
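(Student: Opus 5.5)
The statement is the reduction from $\ORS_{n,d,u}$ to $\OnlineOV_{n,d'}$ with $d'=O(d^2\log^2 u)$. The plan is to encode each one-dimensional interval-membership test as a short orthogonality condition, and then combine the $d$ coordinates by concatenation. A point $x_i\in[u]^d$ lies in the box $q=(\ell_1,r_1,\ldots,\ell_d,r_d)$ iff, for every $j$, both $\ell_j\le x_i[j]$ and $x_i[j]\le r_j$ hold. Since orthogonality of concatenated vectors is exactly the conjunction of orthogonality on every block, it suffices to represent each one-sided comparison ``$a\le b$'', with $a$ on the data side and $b$ on the query side (or the reverse), as orthogonality of two short Boolean vectors. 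The final data vector is the concatenation over all $j$ of the two gadgets applied to $x_i[j]$, and the query vector is the concatenation of the matching gadgets for $\ell_j$ and $r_j$. The data structure for $\ORS$ then stores the $\OnlineOV$ structure built on the encoded points and answers a query by encoding it and making one $\OnlineOV$ query.

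For the comparison gadget I would use the binary-trie decomposition. Write $a,b$ in binary with $L=\lceil\log u\rceil$ bits. Then $a\le b$ fails iff there is a level $k$ where $a$ and $b$ share their first $k-1$ bits and $a_k=1$, $b_k=0$. So $a\le b$ holds iff, for every prefix $p$ of length $k-1$, it is not the case that $a$ has prefix $p1$ and $b$ has prefix $p0$. Put one coordinate for each trie node $p1$: on the data side it is $1$ iff $a$ has prefix $p1$, and on the query side it is $1$ iff $b$ has prefix $p0$. Orthogonality is then exactly $a\le b$. The drawback is that this uses $O(u)$ coordinates.

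Reducing the gadget to polylogarithmic dimension is the main obstacle, and it is where randomness enters. Each side has only $L$ active coordinates, namely one per level. My first attempt is to hash the trie nodes at each level into $O(L)$ buckets using a random hash function chosen during preprocessing, and to set a bucket to $1$ if any active node falls into it. Since each level has exactly one active node on each side, a spurious collision at level $k$ happens with probability $O(1/L)$. This gives $O(L^2)$ coordinates per comparison, hence $O(d\log^2 u)$ per point overall. Repeating the construction independently $O(d)$ times for amplification, and concatenating, gives $d'=O(d^2\log^2u)$, matching the stated dimension.

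Hashing only introduces false positives, that is, wrong ``not orthogonal'' answers, and no false negatives. What has to be controlled is the probability that some in-box point is declared non-orthogonal, summed over $n$ points and all coordinates. So I would fix the hash functions in preprocessing and argue via a union bound that the answer to any fixed query is correct with high probability. That correctness guarantee, holding against oblivious queries, is exactly the sense of ``randomized data structure'' used here. Space and time are inherited as $O(S)$ and $O(T)$, since each encoding costs $O(d')$ time, which I assume is dominated by $T$.
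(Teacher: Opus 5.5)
There is a genuine gap in your amplification step, and it concerns the direction of the error. As you correctly note, hashing can only turn an orthogonal pair into a non-orthogonal one: a collision can make an in-box point look out-of-box, never the reverse. Concatenating independent copies gives a vector pair that is orthogonal iff \emph{every} copy is orthogonal, i.e.\ an AND of the copies. For an in-box point, the success probability of the concatenation is therefore the product of the per-copy success probabilities. So repetition by concatenation increases the error rather than reducing it.

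The base construction really does need amplification. With $B=O(L)$ buckets per level, each level after the first differing bit has collision probability about $1/B$. In the worst case (e.g.\ $a=0111\cdots1$, $b=1000\cdots0$) a single comparison therefore fails with constant probability. A point must survive all $2d$ comparisons, so it survives only with probability $e^{-\Theta(d)}$, and after $O(d)$ concatenated copies only with probability $e^{-\Theta(d^2)}$. Amplifying by OR instead would need separately queried structures (or replicated points), which breaks the $O(S)$, $O(T)$ bounds, and would need $e^{\Theta(d)}$ copies anyway.

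The fix spends the same dimension budget differently. Do not repeat; instead hash each level into $B=\Theta(dL)$ buckets. Each comparison then uses $LB=\Theta(dL^2)$ coordinates, and the $2d$ comparisons give $d'=\Theta(d^2\log^2 u)$. For a fixed query and a fixed in-box point, a union bound over the $2dL$ (comparison, level) pairs bounds the failure probability by $2dL/B$, a small constant; pairwise independence within a level suffices. Equivalently, you can hash all $O(du)$ trie coordinates jointly into $\Theta(d^2\log^2u)$ buckets and union-bound over the $O(d\log u)\times O(d\log u)$ pairs of active coordinates, which is where $d^2\log^2 u$ naturally comes from.

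Your last paragraph also needs correcting. You cannot union-bound over all $n$ points: that would require per-point failure probability below $1/n$, which is impossible at this dimension. You also do not need to. Because the error is one-sided, a ``no'' answer is always correct. A ``yes'' instance is answered correctly as long as one fixed witness point in the box survives, which happens with constant probability. So the construction yields a one-sided-error structure with constant success probability per oblivious query, not ``high probability''. Boosting the success probability would require independent copies queried separately, at extra cost.
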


\section{Algorithms}\label{sec:algs}

In this section, we provide new average-case and worst-case deterministic data structures for $\OnlineOV_{n, d}$. Using known reductions from other data structure problems to $\OnlineOV$ (see \cref{section:ds_reductions}), we also find new data structures for those problems. 
\subsection{Average-Case Algorithm}
We begin this section by presenting a simple algorithm for the average-case version of the \OnlineOV{} problem. In~\cref{def:pseudorandom}, we identify a specific property of random instances of \OnlineOV{} that enables the design of an efficient algorithm, as described in \cref{thm:average}. Then, in \cref{sec:worst-case}, we extend these ideas to develop an algorithm for the worst-case version of the problem.

\begin{definition}\label{def:pseudorandom}
    Let $X \subseteq \bit^d$ be a set of $n = |X|$ vectors of dimension $d$.
    We say that $X$ is \textit{$(m,t)$-pseudorandom} if for all subsets of vectors $X' \subseteq X$ of size $|X'| = m$ and all subsets of coordinates $C \subseteq [d]$ of size $|C| = t$, there exists some $x \in X'$ with $x|_C \not= \vec{0}$.
\end{definition}

\noindent We show that a random input to the \OnlineOV{} problem is pseudorandom with high probability.

\begin{lemma}\label{lem:pseudorandom}
    Let $n$, $d := d(n)$, $m \leq n$, and $t \leq d$ all be positive integers.
    Let $X \subseteq \bit^d$, $|X| = n$, with each entry of the vectors of $X$ sampled to be zero independently with probability $p \in (0,1)$.
    Furthermore, assume $m \geq 3np^t$.
    Then
    \begin{equation*}
        \Pr_X[X \text{ is } (m,t)\text{-pseudorandom}] \geq 1 - \binom{d}{t}\cdot \exp(-m/3).
    \end{equation*}
\end{lemma}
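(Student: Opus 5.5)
We reformulate pseudorandomness in terms of the sets $Y_C := \{x \in X : x|_C = \vec{0}\}$. The property in \cref{def:pseudorandom} fails exactly when some $C \subseteq [d]$ with $|C| = t$ admits a subset $X' \subseteq X$ of size $m$ on which every vector vanishes on $C$. Equivalently, it fails when $|Y_C| \geq m$ for some such $C$. So $X$ is $(m,t)$-pseudorandom if and only if $|Y_C| < m$ for every $C$ of size $t$. The plan is to bound $\Pr[|Y_C| \geq m]$ for a single fixed $C$ by the Chernoff bound, and then take a union bound over the $\binom{d}{t}$ choices of $C$.

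Fix $C$ with $|C| = t$. For each of the $n$ vectors $x_j$ (sampled independently), let $Z_j$ be the indicator that $x_j|_C = \vec{0}$. Because the $t$ entries on $C$ are independently zero with probability $p$, we have $\Pr[Z_j = 1] = p^t$. The $Z_j$ are independent, and $|Y_C| = Z := \sum_j Z_j$ has mean $\mu = np^t$.

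The hypothesis gives $m \geq 3\mu$, so we write $m = (1+\beta)\mu$ with $\beta = m/\mu - 1 \geq 2$. The Chernoff bound stated above then yields
\[
\Pr[Z \geq m] \leq \exp\!\left(-\frac{\beta^2\mu}{2+\beta}\right).
\]
It remains to show $\frac{\beta^2\mu}{2+\beta} \geq m/3 = \frac{(1+\beta)\mu}{3}$, i.e., $3\beta^2 \geq (1+\beta)(2+\beta) = \beta^2 + 3\beta + 2$. This is equivalent to $2\beta^2 - 3\beta - 2 \geq 0$, i.e., $(2\beta+1)(\beta-2) \geq 0$, which holds precisely because $\beta \geq 2$. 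This calculation is the only delicate step, and it is exactly where the constant $3$ in the assumption $m \geq 3np^t$ is used. If $\mu = 0$ the event $Z \geq m \geq 1$ is impossible, so the bound holds trivially.

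A union bound over the $\binom{d}{t}$ sets $C$ gives $\Pr[X \text{ not } (m,t)\text{-pseudorandom}] \leq \binom{d}{t}\exp(-m/3)$, which is the claim.
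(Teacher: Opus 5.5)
Your proof is correct and follows essentially the same route as the paper: you reformulate the failure of pseudorandomness as $|Y_C| \geq m$ for some $C$ of size $t$, apply the Chernoff bound with $\beta = m/(np^t) - 1 \geq 2$ to a single fixed $C$, and take a union bound over the $\binom{d}{t}$ choices. The only cosmetic difference is that you verify $\beta^2\mu/(2+\beta) \geq m/3$ directly via $(2\beta+1)(\beta-2) \geq 0$, whereas the paper chains $\beta/(2+\beta) \geq 1/2$ with $\beta \geq 2m/(3np^t)$; also, your $\mu = 0$ case cannot actually arise, since $p > 0$ and $n \geq 1$.
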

\begin{proof}
    For $C \in \binom{[d]}{t}$, let $Y_C = \{x \in X : x|_C=\vec{0}\}.$
    Then by definition, 
    \begin{equation*}
        \Pr_X[X \text{ is not } (m,t)\text{-pseudorandom}] = \Pr_X[\exists \, C \in \binom{[d]}{t} : |Y_C| \geq m].
    \end{equation*}
    Furthermore, by the independence of the entries of the vectors of $X$, a union bound gives us
    \begin{equation*}
        \Pr_X[\exists \, C \in \binom{[d]}{t} : |Y_C| \geq m] \leq \binom{d}{t} \cdot \Pr_X[|Y_{C'}| \geq m],
    \end{equation*}
    where $C'$ is some arbitrarily chosen subset of $[d]$.
    Note that $\E[|Y_{C'}|] = n p^t$.
    Thus, by a Chernoff bound with $\beta = \frac{m}{np^t}-1 \geq 2$,
    \begin{equation*}
        \Pr_X[|Y_{C'}| \geq m = np^t\cdot(1+\beta)] \leq \exp(-\beta^2np^t/(2+\beta)) \leq \exp(-\beta np^t/2).
    \end{equation*}
    Because $\beta \geq \frac{2m}{3np^t}$, we have $\Pr_X[|Y_{C'}| \geq m] \leq \exp(-\beta np^t/2) \leq \exp(-m/3)$.

    Altogether, this gives us
    \begin{equation*}
        \Pr_X[X \text{ is not } (m,t)\text{-pseudorandom}] \leq \binom{d}{t}\cdot \exp(-m/3). \qedhere
    \end{equation*}
\end{proof}

Now we show that when the input is pseudorandom, our algorithms can solve \OnlineOV{} efficiently.
Additionally, even when the input is not pseudorandom, our algorithms do not use too much time and space.
This gives us good bounds on the expected space and query time. We are interested in the case where the inputs $X \subseteq \{0, 1\}^d, |X| = n$ are sampled from the distribution $\mathcal{B}_p$ where each entry in every vector is set to $0$ independently with probability $p$.

\begin{algorithm}[!ht]
\DontPrintSemicolon
\KwIn{A set of $n$ vectors $X \subseteq \bit^{d}$, pseudorandom parameter $1 \leq t \leq d$}
\KwOut{An advice string $\sigma=\hat{\sigma} \concat \sigma'$}

$\hat{\sigma} \gets$ empty string \;
\ForEach(\tcp*[f]{Bitmap for all sparse queries}){$q \in \bit^{d}$ with $\norm{q}_0 < t$}{
    \If{$\exists \, x \in X$ such that $x \perp q$}{
        $\hat{\sigma} \gets \hat{\sigma} \concat 1$ \;
    }
    \Else{
        $\hat{\sigma} \gets \hat{\sigma} \concat 0$ \;
    }
}

$\sigma' \gets$ empty string \;
\ForEach(\tcp*[f]{Store a list of candidate vectors for each $C$}){$C \in \binom{[d]}{t}$}{
    $Y_C \gets \{x \in X : x|_C = \vec{0}\}$ \;
    $\sigma' \gets \sigma' \concat Y_C$ \;
}

\Return $\hat{\sigma} \concat \sigma'$ \;
\caption{\texttt{AverageOVPre}}\label{alg:AverageOVPre}
\end{algorithm}

\begin{algorithm}[!ht]
\DontPrintSemicolon
\KwIn{A query $q \in \bit^{d}$, (query access to) an advice string $\sigma=\hat{\sigma} \concat \sigma'$,  pseudorandom parameter $1 \leq t \leq d$}
\KwOut{A bit indicating whether there exists some $x \in X$ such that $x \perp q$}

\If{$\norm{q}_0 < t$}{
    $b \gets$ answer for $q$ from $\hat{\sigma}$ \tcp*[f]{Look up answer in bitmap} \;
    \Return $b$ \;
}

$C \gets$ arbitrary subset of $t$ coordinates for which $q|_C = \vec{1}$ \;
$Y \gets$ corresponding $Y_C$ in $\sigma'$ \;
\For(\tcp*[f]{Check corresponding candidate vectors}){$x \in Y$}{
    \If{$x \perp q$}{
        \Return $1$ \;
    }
}

\Return $0$\;
\caption{\texttt{AverageOVOnl}}\label{alg:AverageOVOnl}
\end{algorithm}

\begin{theorem}\label{thm:average}
    Let $n\geq 1$, $d:=d(n)$, $0 < \eps < 0.99$. Furthermore assume $\log_{1/p}(2n^\eps) \leq d \leq 2^{n^{1-\eps'}}$ for some $\eps' > \eps$. Then for large enough $n$, the pair of deterministic algorithms 
    $(\emph{\texttt{AverageOVPre}}(\cdot, t), \emph{\texttt{AverageOVOnl}}(\cdot, \cdot, t))$ is an average-case $(S, T)$-data structure for $\OnlineOV_{n,d}$ over $\B_p$ with preprocessing time $T_p$ where
    \begin{align*}
        T &\leq O(n^{1-\eps} d), & 
        S &\leq \binom{d}{\leq t} \cdot n^{1-\eps} d, &\text{and} &&
        T_p  &\leq  O\left(\binom{d}{ \leq t} \cdot nd\right), 
    \end{align*}
    with $t = \log_{1/p}(6n^\eps)$.
    Furthermore, for each $D \in \mathrm{Supp}(\mathcal{B}_p)$, $S_D \leq  \binom{d}{ \leq t} \cdot nd$.
\end{theorem}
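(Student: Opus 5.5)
The proof has three parts: correctness for every input, worst-case bounds that hold for every input, and expected bounds obtained from \cref{lem:pseudorandom}.

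\emph{Correctness.} It holds for every $X$. If $\norm{q}_0<t$, the answer is stored in $\hat\sigma$. Otherwise pick $C$ with $q|_C=\vec 1$. Any $x\perp q$ must satisfy $x|_C=\vec 0$, so $x\in Y_C$. Scanning $Y_C$ therefore decides the query exactly.

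\emph{Deterministic bounds.} Again for every $D$, the bitmap $\hat\sigma$ has $\binom{d}{<t}$ bits. Each $Y_C$ has at most $n$ vectors of $d$ bits, and there are $\binom{d}{t}$ sets $C$. This gives $S_D\le\binom{d}{\le t}\cdot nd$, which is the ``furthermore'' claim. For preprocessing, each sparse $q$ is checked against all $n$ vectors in $O(nd)$ time, and each $Y_C$ is also built in $O(nd)$ time. So $T_{\mathcal P,D}=O(\binom{d}{\le t}nd)$, and the expectation obeys the same bound. The hypothesis $d\ge\log_{1/p}(2n^\eps)$ is what makes $t$ at most about $d$ up to constants, so the index sets are well defined; the degenerate case $t>d$ only means every query is answered from the bitmap. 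Queries also need an $O(d)$ way to locate $C$ and the stored list, for example by indexing the lists with a canonical encoding of $C$. Finally, the worst-case query time is $O(nd)$.

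\emph{Expected bounds.} Apply \cref{lem:pseudorandom} with $m=n^{1-\eps}$ and $t=\log_{1/p}(6n^\eps)$. Then $np^t=n^{1-\eps}/6$, so the condition $m\ge 3np^t$ holds. Note that $(m,t)$-pseudorandomness is equivalent to $|Y_C|<m$ for every $C$. On the good event, the query time is $O(n^{1-\eps}d)$ and the lists take space $\binom{d}{t}n^{1-\eps}d$. The bad event has probability at most $\binom{d}{t}e^{-n^{1-\eps}/3}\le d^{t}e^{-n^{1-\eps}/3}$. On it we pay at most $nd$ time and $\binom{d}{\le t}nd$ space.

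The main obstacle is showing that this bad-event contribution is negligible, using $d\le 2^{n^{1-\eps'}}$ and $t=O(\eps\log n)$. We have $\log(d^t)\le t\,n^{1-\eps'}=O(n^{1-\eps'}\log n)=o(n^{1-\eps})$ because $\eps'>\eps$. Hence the bad-event probability is $e^{-\Omega(n^{1-\eps})}$. Multiplied by $nd\le n2^{n^{1-\eps'}}$, it contributes $o(1)\le n^{1-\eps}d$ to the expected query time. For space, the bad term is a factor $n^{\eps}e^{-\Omega(n^{1-\eps})}$ times the good-case bound, so it is absorbed, since the cruder bound $\binom{d}{\le t}n^{1-\eps}d$ leaves slack when combined with the bitmap term. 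The stated constant-free space bound may need the bitmap term $\binom{d}{<t}$ to be absorbed into $\binom{d}{\le t}n^{1-\eps}d$ with slack, which works for large $n$. Summing the good and bad contributions gives $T\le O(n^{1-\eps}d)$ and $S\le\binom{d}{\le t}n^{1-\eps}d$ in expectation.
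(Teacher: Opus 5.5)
Your proof is correct and follows the paper's argument: correctness and the deterministic per-input bounds come first, then \cref{lem:pseudorandom} together with $d\le 2^{n^{1-\eps'}}$ makes the failure probability $e^{-\Omega(n^{1-\eps})}$, which is negligible. The only difference is cosmetic. The paper takes $m=n^{1-\eps}/2$ and bounds the expected largest list size by $\mathbb{E}[M]\le 2m=n^{1-\eps}$, getting a factor-$2$ slack. You take $m=n^{1-\eps}$, split into the good and bad events directly, and absorb the bad-event term into the slack left by the bitmap term, which does work for large $n$.
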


Before proving the theorem, we first restate it with $p=1/2$ and where the savings in query time is some fixed factor $1 \leq \alpha \leq n$.

\begin{corollary}\label{cor:compare_cip_avg}
    Let $n\geq 1$, $d:=d(n)$, $1 \leq \alpha := \alpha(n) \leq n$. Furthermore assume $\log(2\alpha) \leq d \leq 2^{n^{1-\eps'}}$ for some $\eps' > \log(\alpha)/\log (n)$. Then there is an average-case  $(S, T)$-data structure for $\OnlineOV_{n,d}$ over $\mathcal{B}_{1/2}$ with preprocessing time $T_p$ where
    \begin{align*}
            T &= O(nd/\alpha), & S &\leq 2^{O(\log(\alpha)\log(d/\log \alpha))} \cdot nd/\alpha, 
            && \text{ and} & \;T_p &=  O(2^{O(\log(\alpha)\log(d/\log \alpha))} \cdot nd),
    \end{align*}

\end{corollary}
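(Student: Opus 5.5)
We will obtain \cref{cor:compare_cip_avg} by instantiating \cref{thm:average} with $p=1/2$ and $\eps := \log(\alpha)/\log n$, so that $n^{\eps}=\alpha$ and $n^{1-\eps}=n/\alpha$. (If $\alpha$ is so close to $n$ that $\eps \ge 0.99$, then the trivial data structure storing all $2^d$ answers already meets the claimed bounds, since $d=O(\log(d/\log\alpha)\cdot\log\alpha)$ in that range; we handle this as a separate case.) The hypotheses of \cref{thm:average} then translate directly: $\log_{1/p}(2n^\eps)=\log(2\alpha)\le d$, and $d\le 2^{n^{1-\eps'}}$ with $\eps'>\eps$. The theorem immediately gives query time $T=O(n^{1-\eps}d)=O(nd/\alpha)$, space $S\le \binom{d}{\le t}\cdot nd/\alpha$, and preprocessing time $T_p=O(\binom{d}{\le t}nd)$, where $t=\log(6n^\eps)=\log(6\alpha)$.

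It therefore suffices to bound $\binom{d}{\le t}$ by $2^{O(\log(\alpha)\log(d/\log\alpha))}$. First suppose $t\le d/2$. The first binomial Fact gives
\[
\binom{d}{\le t}\le (ed/t)^t=2^{t\log(ed/t)} .
\]
Since $t=\log\alpha+\log 6=\Theta(\log\alpha)$ when $\alpha\ge 2$, and since $\log(ed/t)=O(\log(d/\log\alpha)+1)$, this is $2^{O(\log\alpha\,\log(d/\log\alpha))}$. We will need to take mild care with the additive constants when $d/\log\alpha$ is close to $1$.

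Next suppose $t>d/2$. Then we simply use $\binom{d}{\le t}\le 2^d$. Because $d<2t=O(\log\alpha)$, this is again $2^{O(\log\alpha)}$, which is within the claimed bound once we interpret $\log(d/\log\alpha)$ as being at least a constant. Finally, for the degenerate case $\alpha=1$, we have $\eps=0$, and we can bypass \cref{thm:average} entirely: the trivial linear scan with $T=S=nd$ already satisfies the statement.

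The main (minor) obstacle is bookkeeping rather than anything conceptual. We must make sure that the choice $\eps=\log\alpha/\log n$ satisfies the strict range $0<\eps<0.99$ required by \cref{thm:average}, and that the small additive constants (the $6$ inside $t$, and the lower-order terms when $d\approx\log\alpha$) are absorbed into the $O(\cdot)$ in the exponent. The edge values of $\alpha$ are exactly the ones covered by the trivial data structures described above.
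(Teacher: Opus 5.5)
Your main line is exactly the paper's proof. You apply \cref{thm:average} with $p=1/2$ and $\eps=\log(\alpha)/\log n$, so that $n^{1-\eps}=n/\alpha$ and $t=\log(6\alpha)$. You then bound $\binom{d}{\le t}\le (ed/t)^t=2^{\log(6\alpha)\log(ed/\log(6\alpha))}$. Your extra care with the case $t>d/2$ and with $\alpha=1$ goes beyond what the paper writes and is fine. (The linear-scan fallback should really be used for all $\alpha<2$, since $t=\Theta(\log\alpha)$ fails as $\alpha\to 1^+$.)

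The one step that fails is your patch for $\eps\ge 0.99$. You assert that in this range $d=O(\log\alpha\cdot\log(d/\log\alpha))$. Writing $r=d/\log\alpha$, that says $r=O(\log r)$, i.e.\ $d=O(\log\alpha)$, and the hypotheses do not force this. For instance, $\alpha=n^{0.99}$ and $\eps'=0.995$ allow $d=2^{n^{0.005}}$. The target space is then $2^{O(n^{0.005}\log n)}\cdot nd/\alpha$, whereas the trivial table has $2^{d}=2^{2^{n^{0.005}}}$ entries and costs $2^d nd$ to build.

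The right fix is to note that the cap $\eps<0.99$ in \cref{thm:average} is inessential. Its proof needs it only in two places. First, to ensure $t+t\ln(d/t)+\ln(2n^\eps)\le n^{1-\eps}/6$, which gives $\binom{d}{t}e^{-m/3}\le m/n$ for $m=n^{1-\eps}/2$. Second, to ensure the $O(dt)$ lookup cost is $O(n^{1-\eps}d)$. Since $t\le\log(6n)$ and $\ln d\le n^{1-\eps'}$, both hold for large $n$ as soon as $n^{\eps'-\eps}\ge C\log n$ for a suitable constant $C$ (e.g.\ for constants $\eps'>\eps$), whatever the value of $\eps<1$. So you should rerun that argument in this range instead of switching to the trivial table. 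The paper's one-line proof applies \cref{thm:average} without checking $\eps<0.99$, so it tacitly relies on this same extension.
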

\begin{proof}
    We apply \cref{thm:average} with $\eps = \log(\alpha)/\log(n)$ and use the bound
    \begin{equation*}
        \binom{d}{\leq t}  \leq 2^{t \log(ed/t)} = 2^{\log(6 \alpha)\log\left(\frac{ed}{\log(6\alpha)}\right)}. \qedhere
    \end{equation*}
\end{proof}

We are also particularly interested in the case where $d = c \log n$ for some constant $c > 1$.

\begin{corollary}\label{cor:chan_comparison_avg}
    Let $n \geq 1$, $c > 1, \delta > \frac{2e\log c}{c}$, $d:= c\log n$. 
    Then there is an average-case  $(S, T)$-data structure for $\OnlineOV_{n,d}$ over $\mathcal{B}_{1/2}$ with preprocessing time $T_p$ where
    \begin{equation*}
            T \leq n^{1-\Omega(\frac{1}{\log c})} \quad \text{and} \quad 
            S, \; T_p \leq \tilde{O}(n^{1+\delta}) \;.
    \end{equation*}
\end{corollary}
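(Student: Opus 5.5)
We apply \cref{thm:average} with $p=1/2$, $d=c\log n$, and a constant $\eps$ chosen as a function of $c$ and $\delta$; then $t=\log(6n^\eps)=\eps\log n+\log 6$. The query time is immediate: $T\le O(n^{1-\eps}d)=O(n^{1-\eps}\log n)$. It therefore suffices to pick $\eps=\Theta(1/\log c)$, more precisely $\eps=\kappa/\log c$ for a suitable absolute constant $\kappa$, after which $T\le n^{1-\Omega(1/\log c)}$ for large $n$.

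First check the hypotheses of \cref{thm:average}. The lower bound $\log(2n^\eps)=\eps\log n+1\le c\log n$ holds since $\eps<1<c$. The upper bound $d=c\log n\le 2^{n^{1-\eps'}}$ holds with, say, $\eps'=(1+\eps)/2$, and we may take $\eps<0.99$.

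The main step is the space and preprocessing bound. We have
\[
\binom{d}{\le t}\le\Big(\frac{ed}{t}\Big)^t .
\]
Since $d/t=c\log n/(\eps\log n+\log 6)\le c/\eps$, the factor $(ed/t)^t$ is at most $(ec/\eps)^{\eps\log n+\log 6}=O_c(1)\cdot n^{\eps\log(ec/\eps)}$. The $\log 6$ additive term only contributes a constant depending on $c$ and $\eps$. The bound $\binom{d}{\le t}\le(ed/t)^t$ requires $t\le d/2$, which holds since $\eps<c/2$.

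With this estimate, $S\le \binom{d}{\le t}\,n^{1-\eps}d=\tilde O(n^{1-\eps+\eps\log(ec/\eps)})$ and $T_p\le O(\binom{d}{\le t}nd)=\tilde O(n^{1+\eps\log(ec/\eps)})$. The binding constraint is the preprocessing time, so we need
\[
\eps\log(ec/\eps)\le\delta .
\]
With $\eps=\kappa/\log c$ we get $\eps\log(ec/\eps)=\frac{\kappa}{\log c}\big(\log e+\log c+\log\log c-\log\kappa\big)=O(\kappa)$ for $c$ large. This is not below $\delta$ when $\delta$ is close to $2e\log c/c$, which means the choice $\eps=\Theta(1/\log c)$ must be scaled with $\delta$.

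The fix is to take $\eps$ as the largest value with $\eps\log(ec/\eps)\le\delta$. Solving approximately gives $\eps\approx\delta/\log(ec/\delta)$. Using the hypothesis $\delta>2e\log c/c$, the ratio satisfies $c/\delta< c^2/(2e\log c)$, so $\log(ec/\delta)=O(\log c)$ and hence $\eps=\Omega(\delta/\log c)$. I expect the intended reading of the corollary's $\Omega(1/\log c)$ to treat $\delta$ as a fixed constant, or to hide the $\delta$-dependence in the $\Omega$.

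Verifying this calculation, that $\eps=\delta/(2\log(ec/\delta))$ satisfies $\eps\log(ec/\eps)\le\delta$, is the only delicate point. Substituting, $\log(ec/\eps)=\log(ec/\delta)+\log(2\log(ec/\delta))\le 2\log(ec/\delta)$, which gives $\eps\log(ec/\eps)\le\delta$. The lower bound on $\delta$ is what keeps $\log(ec/\delta)=O(\log c)$, so that the query exponent saving is $\Omega(1/\log c)$ with the $\delta$-dependence absorbed.
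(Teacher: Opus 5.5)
Your proposal is correct and follows the same route as the paper. Both apply \cref{thm:average} with $p=1/2$ and $\eps=\Theta(\delta/\log c)$, bound $\binom{d}{\le t}\le (ed/t)^t< n^{\delta}$, and let the $\delta$-dependence of the query-time saving sit inside the $\Omega(1/\log c)$. The only difference is bookkeeping: the paper fixes $\eps=\delta/(4\log c)$ and uses $\delta>2e\log c/c$ to check $\log(2ec\log c/\delta)<2\log c$, whereas you take $\eps=\delta/(2\log(ec/\delta))$ and use the same hypothesis to show $\eps>\delta/(4\log c)$. Your step $\log(2\log(ec/\delta))\le\log(ec/\delta)$ requires $ec/\delta\ge 4$, which holds in the intended regime of large $c$ and moderate $\delta$.
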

\begin{proof}
    We apply \cref{thm:average} with $\eps = \frac{\delta}{4 \log c}$.
    Then $t = \log(6n^{\eps}) \leq \frac{\delta\log n}{2\log c}$ and $\binom{d}{\leq t} \leq 2^{\frac{\delta\log n}{2 \log c} \cdot \log(\frac{2ec \log c}{\delta})} < n^\delta$.
\end{proof}

We now return to the proof of \cref{thm:average}.

\begin{proof}[Proof of \cref{thm:average}]
    First, to see that the algorithms are correct, note that for ``sparse'' queries, i.e.~queries $q \in \bit^d$ with $\norm{q}_0 < t$, \texttt{AverageOVPre} simply stores the correct answer for $q$, and $\texttt{AverageOVOnl}$ looks this answer up.
    For all other queries, \texttt{AverageOVPre} writes down a list of ``candidate'' vectors $Y_C = \{x \in X : x|_C = \bf{0}\}$ for every $C \subseteq [d]$ with $|C| = t$.
    \texttt{AverageOVOnl} then finds the first $t$ non-zero coordinates $C \subseteq [d]$ of $q$, checks to see if any of the vectors in $Y_C$ are orthogonal to $q$, and returns this answer.
    Because any vector in $X$ that is orthogonal to $q$ must appear in $Y_C$, this answer will be correct.

    Now we will bound the worst-case space and time for our algorithms.
    For space, note that a single bit is stored for every sparse query, which is $\binom{d}{<t}$ bits in total.
    Constructing this table takes at most $O(\binom{d}{<t}\cdot nd)$ preprocessing time, and looking up an answer takes at most $O(dt)$ query time.
    For dense queries, \texttt{AverageOVPre} stores $\binom{d}{t}$ many lists of candidate vectors.
    Let $M$ be the number of vectors in the largest of such lists.
    Then storing these lists takes at most $\binom{d}{t}\cdot Md$ space, and constructing these lists takes $O(\binom{d}{t} \cdot n d)$ preprocessing time.
    To check a list against a dense query vector, \texttt{AverageOVOnl} takes at most $O(Md)$ query time.
    Thus, overall we have 
    \begin{align}
        S \leq \binom{d}{\leq t}\cdot Md \leq \binom{d}{\leq t}\cdot nd, && T_p \leq O\left(\binom{d}{\leq t}\cdot nd\right), && \text{and} && T \leq O(d \cdot \max(M, t)). \label{eq:avgcase-worst-bds}
    \end{align}

    Now we bound $\mathbb{E}[M]$.
    Let $m = n^{1-\eps}/2$ and notice that $m = 3np^t$.
    Thus, by \cref{lem:pseudorandom}, $X$ is $(m,t)$-pseudorandom (i.e.~$M \leq m$) with probability at least $1-\binom{d}{t}\cdot \exp(-m/3)$. 
    We will show that for large enough $n$, $\binom{d}{t} \cdot \exp(-m/3) \leq m/n$. 
    This implies
    \begin{align*}
        \mathbb{E}[M] &\leq \left(1-\binom{d}{t}\cdot \exp(-m/3)\right) \cdot m + \binom{d}{t}\cdot \exp(-m/3) \cdot n \\
        &\leq m + \binom{d}{t}\cdot \exp(-m/3) \cdot n \\
        & \leq 2m \;= n^{1-\eps}.
    \end{align*}
    Our result follows by combining the above observation with \cref{eq:avgcase-worst-bds}.
    Finally, to see that $\binom{d}{t} \cdot \exp(-m/3) \leq m/n$, note that because $d \leq 2^{n^{1-\eps'}}$ and $t=\log_{1/p}(6n^\eps)$, we have $t + t\ln(d/t) + \ln(2n^\eps) \leq n^{1-\eps}/6$ for large enough $n$.
    This implies that $(ed/t)^t \cdot \exp(-n^{1-\eps}/6) \leq n^{-\eps}/2$, and so
    \begin{align*}
        \binom{d}{t} \cdot \exp(-m/3) & \leq (ed/t)^t \cdot \exp(-n^{1-\eps}/6) \\
        &\leq n^{-\eps}/2 =  m/n,
    \end{align*}
    completing the proof.
\end{proof}

\subsection{Worst-Case Algorithm}\label{sec:worst-case}

To extend the approach to a worst-case algorithm, we must handle the parts of the inputs that are not pseudorandom in the sense of \cref{def:pseudorandom}, i.e.~the large subsets of vectors that share many common zeros.
These parts are identified by \cref{alg:partition} which partitions the input vectors into a single pseudorandom part and multiple ``structured'' parts, each of which contains vectors that share many common zeros.
This algorithm is used repeatedly as a subroutine in our preprocessing algorithm.

\begin{algorithm}[!ht]
\DontPrintSemicolon
\KwIn{A set of vectors $X \subseteq \bit^d$, pseudorandom parameters $m, t \in \Z^+$.}
\KwOut{A partition $X = X' \sqcup X_1 \sqcup \dots \sqcup X_k$ and sets $S_1, \dots, S_k$ with  $|X_i| = m$ and $|S_i| = t$ such that $X'$ is $(m, t)$-pseudorandom and for every $i \in [k]$ and every $x \in X_i$, $x|_{S_i} = \vec{0}$.}

$k \gets 0$ \;

\ForEach{$S \in \binom{[d]}{t}$}{ \label{algline:for-S}
    $X_S \gets \{x \in X : x|_S = \vec{0}\}$\; \label{algline:X_S}
    \While{$|X_S| \geq m$}{ \label{algline:large_X_S_condition} 
        $k \gets k + 1$\;
        $X_k \gets$ first $m$ elements of $X_S$ \label{algline:X_k-def}\;
        $S_k \gets S$\;
        $X \gets X \setminus X_k$\;
        $X_S \gets \{x \in X : x|_S = \vec{0}\}$ \label{algline:X_S2}\;
    }
}

\Return $(X, X_1, \dots, X_{k})$, $(S_1, \dots, S_k)$\;

\caption{\texttt{PseudorandomPartition}}\label{alg:partition}
\end{algorithm}

\begin{lemma} \label{lem:part_alg_properties}
    Let $X \subseteq \bit^d$ be a set of $n = |X|$ vectors and $m, t \in \Z^+$ be parameters.
    Let $\emph{\texttt{PseudorandomPartition}}(X, m, t) = ((X', X_1, \dots, X_k), (S_1, \dots, S_k))$.
    Then the following are true:
    \begin{itemize}
        \item $X'$ is $(m, t)$-pseudorandom,
        \item for every $i \in [k]$, $|X_i| = m$, and $|S_i|=t$, and
        \item for every $i \in [k]$ and $x \in X_i$, $x|_{S_i} = \vec{0}$.
    \end{itemize}
    Moreover, $\emph{\texttt{PseudorandomPartition}}$ runs in time $O(\binom{d}{t} \cdot nt)$.
\end{lemma}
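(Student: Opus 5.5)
The plan is to verify each of the three properties directly from the structure of \cref{alg:partition}, and then bound the running time by an amortized count.

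\textbf{The second and third properties.} Each $X_k$ is formed as the first $m$ elements of $X_S$. This happens only inside the while loop, whose guard ensures $|X_S|\geq m$, so $|X_k|=m$. The set $S_k$ is taken from $\binom{[d]}{t}$, so $|S_k|=t$. Since $X_k\subseteq X_S=\{x: x|_S=\vec{0}\}$ with $S=S_k$, every $x\in X_k$ satisfies $x|_{S_k}=\vec{0}$. The parts are disjoint and cover the input, because each removed block is subtracted from the current $X$ and the final $X'$ is whatever remains.

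\textbf{The first property: $X'$ is $(m,t)$-pseudorandom.} This is the key step and the only subtle point. Suppose not. Then there are $C\in\binom{[d]}{t}$ and $X''\subseteq X'$ with $|X''|=m$ and $x|_C=\vec{0}$ for all $x\in X''$. Consider the iteration of the for loop with $S=C$. When its while loop terminates, the current set $X^{(C)}$ satisfies $|\{x\in X^{(C)} : x|_C=\vec{0}\}|<m$. The working set $X$ only shrinks afterwards, so $X'\subseteq X^{(C)}$. Hence
\[
|\{x\in X' : x|_C=\vec{0}\}|<m,
\]
which contradicts $X''\subseteq\{x\in X': x|_C=\vec{0}\}$. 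The argument therefore rests on a monotonicity invariant: once a set $S$ has been processed, it stays ``small'' forever, because later iterations only delete vectors.

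\textbf{Running time.} I expect this to be the main (though mild) obstacle, since a naive bound charges $O(nd)$ for every recomputation of $X_S$. For each $S$, the initial computation of $X_S$ costs $O(nt)$ by checking $t$ coordinates of each of at most $n$ vectors. Across all $S$ this totals $O(\binom{d}{t}nt)$.

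For the recomputations inside the while loop, I will not rescan. Instead I maintain $X_S$ as a list and set $X_S\gets X_S\setminus X_k$. This is valid because the removed vectors $X_k$ lie in $X_S$, and deleting them from $X$ removes exactly those elements. Each loop iteration removes $m$ vectors permanently. Over the whole run, the total number of while iterations is at most $n/m$, and the deletion work is $O(n)$ overall, or $O(|X_S|)$ per iteration with a simple list implementation.

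To bound the total, I charge each iteration's $O(|X_S|)\le O(n)$ cost either to its $S$ (if there is one iteration) or observe that iterations for a fixed $S$ cost at most $O(n)$ in total, since $X_S$ strictly shrinks by $m$ each time and a linked list supports the removal of a prefix in time $O(m)$. This gives $O(n)$ per $S$ beyond the scan. The overall bound is $O(\binom{d}{t}\cdot nt)$.
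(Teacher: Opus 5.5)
Your proposal is correct and follows essentially the same route as the paper. Properties two and three are read off from the algorithm. Pseudorandomness of $X'$ comes from the fact that once the iteration $S=C$ finishes, fewer than $m$ surviving vectors vanish on $C$, and the working set only shrinks afterwards. The running time comes from $\binom{d}{t}$ scans of cost $O(nt)$ plus at most $n/m$ while-iterations of cost $O(m)$.

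Your version is slightly more careful than the paper's in two places. Your monotonicity invariant replaces the paper's looser claim that the offending set itself gets removed. Your incremental update $X_S \gets X_S \setminus X_k$, together with constant-time deletion from $X$, is what justifies the $O(m)$ cost per while-iteration, which the paper only asserts; a naive rescan of $X$ would not achieve it. This gives the stated $O(\binom{d}{t}\, nt)$ bound, whereas the paper's own proof ends with $nd$.
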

\begin{proof}
    First, by the definition of $X_S$ in line \ref{algline:X_S} (and line \ref{algline:X_S2}), we have that for every $i \in [k]$ and every $x \in X_i$, $x|_{S_i} = \vec{0}$.
    Furthermore, the definitions of $S$ and $X_k$ in lines \ref{algline:for-S} and \ref{algline:X_k-def}, respectively, ensure that for every $i \in [k]$, $|X_i|=m$ and $|S_i|=t$.

    To see that $X'$ is $(m, t)$-pseudorandom, assume we have some subset $\hat{X} \subseteq X$ with $|\hat{X}| \geq m$ and some set of coordinates $\hat{S} \subseteq [d]$ of size $|\hat{S}| = t$ such that every $x \in \hat{X}$ has $x|_{\hat{S}} = \vec{0}$.
    Then if $\hat{X}$ appears as a subset of $X_S$ in line $\ref{algline:X_S}$ when $S = \hat{S}$, then $\hat{X}$ will be removed from $X$ and therefore not appear as a subset of $X'$.
    The only way $\hat{X}$ could not appear as a subset of $X_S$ when $S = \hat{S}$ is if vectors in $\hat{X}$ were already removed from $X$ in prior iterations of the loop.
    In this case too, $\hat{X}$ would not appear as a subset of $X'$.
    Therefore, $X'$ is $(m, t)$-pseudorandom. %

    Lastly, to analyze the runtime, we iterate through the loop $\binom{d}{t}$ times.
    In each iteration, we run through the at most $n$ many vectors in $X$ once to check if $x|_S = \vec{0}$.
    This takes $\binom{d}{t}\cdot nt$ time.
    We enter the While loop at most $n / m$ times, and each run takes $O(m)$ time.
    Therefore, overall \texttt{PseudorandomPartition} runs in time $O(\binom{d}{t} \cdot nd)$.
\end{proof}

We now design a recursive algorithm that, at each step, identifies a pseudorandom part of the input (using \cref{alg:partition}) and partitions the remaining input into several structured parts. 
The algorithm then proceeds by solving the pseudorandom and structured parts separately. 
For the pseudorandom part---similar to the approach in the average-case algorithm from \cref{thm:average}---we store short lists of relevant inputs for each dense query vector.
For each structured part, we observe that the fact that all vectors share many zeros in common allows us to reduce the dimensionality of the problem and recursively apply our algorithm to the resulting sub-instances.

We present the preprocessing algorithm in \cref{alg:OVPre} and the query algorithm in \cref{alg:OVOnl}. We then give a rigorous analysis of the algorithms in \cref{thm:worst_case_algorithm_parameterized}.

\begin{algorithm}[!ht]
\DontPrintSemicolon
\KwIn{A set of $n$ vectors $X \subseteq \bit^{d}$, integer $i \geq 1$.}
\KwOut{An advice string $\sigma = \hat{\sigma} \concat \sigma' \concat \sigma_1 \concat \cdots \concat \sigma_k$.}

\If{$n = 1$} {
    $\sigma \gets \{x \in X\}$\;
    \Return $\sigma$ \;
}

\If{$i = 1$}{
    $\sigma \gets$ empty string \;
    \ForEach(\tcp*[f]{Bitmap for all queries}){$q \in \bit^{d}$}{
        \If{$\exists \, x \in X$ such that $x \perp q$}{
            $\sigma \gets \sigma \concat 1$ \;
        }
        \Else{
            $\sigma \gets \sigma \concat 0$ \;
        }
    }
    \Return $\sigma$ \;
}

$m \gets n^{1-\frac{1}{i}}$ \;
$t \gets \frac{d}{i}$ \;

$\hat{\sigma} \gets$ empty string \;

\ForEach(\tcp*[f]{Bitmap for all sparse queries}){$q \in \bit^{d}$ with $\norm{q}_0 < t$}{
    \If{$\exists \, x \in X$ such that $x \perp q$}{
        $\hat{\sigma} \gets \hat{\sigma} \concat 1$ \;
    }
    \Else{
        $\hat{\sigma} \gets \hat{\sigma} \concat 0$ \;
    }
}

$(X', X_1, \dots, X_k), (C_1, \dots, C_k) \gets \texttt{PseudorandomPartition}(X, m, t)$ \;

$\sigma' \gets$ empty string \;

\ForEach(\tcp*[f]{List of candidates for pseudorandom part}){$C \in \binom{[d]}{t}$}{
    $Y_C \gets \{x \in X' : x|_C = \vec{0}\}$ \label{alg:ovpre_YC}\;
    $\sigma' \gets \sigma' \concat Y_C$ \;
}

\For(\tcp*[f]{Recursively solve all structured parts}){$j \in [k]$}{
    Restrict each $x \in X_j$ to coordinates $[d] \setminus C_j$ \;
    $\sigma_j \gets C_j \concat \texttt{OVPre}(X_j, i-1)$ \;
}

\Return $\sigma = \hat{\sigma} \concat \sigma' \concat \sigma_1 \concat \cdots \concat \sigma_k$ \;

\caption{\texttt{OVPre}}\label{alg:OVPre}
\end{algorithm}

\begin{algorithm}[!ht]
\DontPrintSemicolon
\KwIn{A query $q \in \bit^{d}$, (query access to) an advice string $\sigma = \hat{\sigma} \concat \sigma' \concat \sigma_1 \concat \cdots \concat \sigma_k$, integer $1 \leq i \leq d$.}
\KwOut{A bit indicating whether there exists some $x \in X$ such that $x \perp q$.}
\If{$n = 1$ }{
    $x \gets \sigma$ \;
    \Return $x \perp q$\;
}

\If{$i = 1$}{
    $b \gets$ answer for $q$ from $\sigma$ \tcp*[f]{Look up answer in bitmap} \;
    \Return $b$ \;
}

$t \gets \frac{d}{i}$ \; 

\If{$\norm{q}_0 < t$}{
    $b \gets$ answer for $q$ from $\hat{\sigma}$ \tcp*[f]{Look up answer in bitmap} \;
    \Return $b$ \;
}

$C \gets$ arbitrary subset of $t$ coordinates for which $q|_C = \vec{1}$ \;
$Y \gets$ corresponding $Y_C$ in $\sigma'$ \;
\For(\tcp*[f]{Check candidates in pseudorandom part}){$x \in Y$}{
    \If{$x \perp q$}{
        \Return $1$ \;
    }
}
\For(\tcp*[f]{Recursively solve on structured parts}){$j \in [k]$}{
    $b \gets \texttt{OVOnl}(q|_{\overline{C_j}}, \sigma_j, i-1)$ \;
    \If{$b = 1$}{
        \Return $1$ \;
    }
}

\Return $0$ \tcp*[f]{No orthogonal vectors found in any part} \;

\caption{\texttt{OVOnl}}\label{alg:OVOnl}
\end{algorithm}

\begin{theorem} \label{thm:worst_case_algorithm_parameterized}
    For every $n\geq 1$, $d:=d(n)$, and an integer parameter $1 \leq i \leq d$, the pair of deterministic algorithms 
    $(\emph{\texttt{OVPre}}(\cdot,i), \emph{\texttt{OVOnl}}(\cdot, \cdot, i))$ is an $(S, T)$-data structure with preprocessing time $T_p$ for $\OnlineOV_{n,d}$ where
    \begin{align*}
        T \leq 2idn^{1-1/i}, &&
        S \leq \binom{d}{\leq d/i}\cdot idn^{1-1/i},  &&  \text{and}  &&
        T_p \leq \binom{d}{\leq d/i}\cdot idn.
    \end{align*}
\end{theorem}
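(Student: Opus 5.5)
The plan is to prove all three bounds by induction on $i$, treating $n$ and $d$ as arbitrary at every level, since recursive calls change both. Correctness is handled first and separately, also by induction.

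\textbf{Correctness.} There are three cases for a query $q$.
\begin{itemize}
\item Sparse queries ($\norm{q}_0<t$) are answered from the bitmap $\hat{\sigma}$.
\item For a dense query, any $x\in X'$ orthogonal to $q$ must vanish on the chosen $C$ with $q|_C=\vec 1$, so $x$ lies in $Y_C$.
\item For $x\in X_j$, we have $x|_{C_j}=\vec 0$, so $x\perp q$ iff $x|_{\overline{C_j}}\perp q|_{\overline{C_j}}$. This is exactly what the recursive call decides, by the inductive hypothesis.
\end{itemize}
The base cases are $n=1$ and $i=1$; the latter uses the full bitmap.

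\textbf{Query time.} I would prove $T(n,d,i)\le 2idn^{1-1/i}$.
\begin{itemize}
\item Base case $i=1$: the lookup costs $O(d)\le 2d$. Base case $n=1$: one inner product costs $d$.
\item Inductive step: finding $C$ takes $O(d)$. Scanning $Y_C$ costs at most $md = dn^{1-1/i}$, since $X'$ is $(m,t)$-pseudorandom by \cref{lem:part_alg_properties}, so $|Y_C|<m$. There are $k\le n/m=n^{1/i}$ structured parts. Each has $m$ vectors in dimension $d-t\le d$, so by induction each costs at most $2(i-1)d\,m^{1-1/(i-1)}$.
\item The key identity is $m^{1-1/(i-1)}=n^{(1-1/i)(i-2)/(i-1)}=n^{(i-2)/i}=n^{1-2/i}$. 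Multiplying by $n^{1/i}$ gives a total of $2(i-1)dn^{1-1/i}$ for the structured parts.
\item Adding the pseudorandom scan gives $(2i-1)dn^{1-1/i}$ plus the $O(d)$ overhead, which fits in $2idn^{1-1/i}$. The constant slack has to be checked carefully; if it fails, I would fold the overhead in using $n^{1-1/i}\ge1$ with a slightly looser constant.
\end{itemize}

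\textbf{Space.} I would prove $S(n,d,i)\le\binom{d}{\le d/i}\,idn^{1-1/i}$.
\begin{itemize}
\item The sparse bitmap takes $\binom{d}{<t}$ bits.
\item The lists take $\binom{d}{t}md$ bits.
\item Each recursive part contributes $\binom{d-t}{\le (d-t)/(i-1)}(i-1)(d-t)m^{1-1/(i-1)}$, plus the encoding of $C_j$. Here the choice $t=d/i$ matters: $(d-t)/(i-1)=d/i$, so the recursive binomial is $\binom{d-t}{\le d/i}\le\binom{d}{\le d/i}$. Summing over $k\le n^{1/i}$ parts gives at most $\binom{d}{\le d/i}(i-1)dn^{1-1/i}$.
\item The bitmap and list terms together are at most $\binom{d}{\le t}dn^{1-1/i}$, since $\binom{d}{<t}\le\binom{d}{<t}dn^{1-1/i}$. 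This closes the induction, up to absorbing the $O(d)$ bits for each $C_j$ into the slack; the lower-order bookkeeping is the fiddly part.
\end{itemize}

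\textbf{Preprocessing.} The argument is analogous, with $T_p\le\binom{d}{\le d/i}idn$.
\begin{itemize}
\item The bitmap costs $\binom{d}{<t}nd$ and the list construction costs $\binom{d}{t}nd$.
\item \texttt{PseudorandomPartition} costs $O(\binom{d}{t}nd)$ by \cref{lem:part_alg_properties}.
\item The recursion costs $\sum_j \binom{d}{\le d/i}(i-1)dm\le\binom{d}{\le d/i}(i-1)dn$, because $\sum_j |X_j|\le n$.
\end{itemize}
The main obstacle I expect is the constants here: three $\binom{d}{\cdot}nd$ terms must fit into one extra $\binom{d}{\le d/i}nd$. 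I would try merging the bitmap and list computation into the partition pass, so that a single sweep over all $C\in\binom{[d]}{\le t}$ costs $\binom{d}{\le t}nd$. Failing that, I would accept a constant factor as hidden in the statement's bound.
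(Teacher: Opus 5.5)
Your correctness argument and your inductions for query time and space follow the paper's proof, with one difference. You bound the recursive dimension $d-t$ by $d$, which throws away the factor $(i-1)/i$ that the paper keeps; keeping it turns the recursive coefficient $i-1$ into $(i-1)^2/i$. For those two bounds this is harmless, because the remaining slack still absorbs the $O(d)$ overheads and the encodings of the $C_j$: a full $dn^{1-1/i}$ for the query time, and for space, $|Y_C|\le m-1$ together with $\binom{d}{<t}\ge 1$. You should also note that the recursive call satisfies $i-1\le d-d/i$, since this keeps the hypothesis $i\le d$ available to the induction.

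The genuine gap is the preprocessing bound, which is the obstacle you flagged. In your accounting, the bitmap plus the lists cost $\binom{d}{\le t}nd$ and the recursion costs $\binom{d}{\le t}(i-1)nd$. These already sum to the target $\binom{d}{\le t}ind$, so the cost of \texttt{PseudorandomPartition} is left over and the induction does not close. The merged sweep does not rescue it, for two reasons. It leaves zero slack for writing the lists and the $C_j$ and for the while-loop work. Also, the $Y_C$ are taken over the final pseudorandom part $X'$, which is known only after the partition has processed every $C\in\binom{[d]}{t}$, so they cannot be emitted without a second filtering pass. Falling back on a constant factor proves $T_p=O\bigl(\binom{d}{\le d/i}idn\bigr)$. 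That suffices for the corollaries, but it is not the stated bound.

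The paper closes the induction with two savings that your accounting discards.
\begin{itemize}
\item A test $x|_C=\vec 0$ reads only $t=d/i$ coordinates. So the partition and the construction of the $Y_C$ cost $\binom{d}{d/i}nd/i$ each. Together with the bitmap, at most $(n+1)d$ per sparse query, the non-recursive work is at most $\binom{d}{<d/i}(n+1)d+2\binom{d}{d/i}nd/i\le\binom{d}{\le d/i}(n+1)d$ for $i\ge 2$.
\item Keep $d'=d(1-1/i)$ in the inductive hypothesis. Then each recursive instance costs at most $\binom{d}{\le d/i}n^{1-1/i}d\,(i-1)^2/i\le\binom{d}{\le d/i}n^{1-1/i}d\,(i-\tfrac{3}{2})$. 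Summed over at most $n^{1/i}$ parts, the recursion costs at most $\binom{d}{\le d/i}nd\,(i-\tfrac{3}{2})$.
\end{itemize}
The spare $\tfrac{1}{2}\binom{d}{\le d/i}nd$ absorbs the $+1$, since $n+1\le\tfrac{3}{2}n$ for $n\ge 2$. This gives exactly $T_p\le\binom{d}{\le d/i}idn$.
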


Before proving \cref{thm:worst_case_algorithm_parameterized}, we instantiate it with $i=\log{n}/(\eps\log{n}+\log\log{n})$ to obtain a query time of $T \leq n^{1-\eps}d$.

\begin{corollary}\label{cor:cip_comparison}
    Let $n \geq 2$, $d \geq \log n / 2$, and $\frac{\log \log n}{\log n} \leq \eps < 1/2$.
    There is a $(S, T)$-data structure with preprocessing time $T_p$ for $\OnlineOV_{n,d}$ where 
    \begin{align*} T &\leq n^{1-\eps}d, & S & \leq 2^{O(\eps d\log(1/\eps))} \cdot n^{1-\eps}d, && \text{and}
    &&  T_p \leq 2^{O(\eps d \log(1/\eps))}\cdot n^{1+3\eps}d.
    \end{align*}
\end{corollary}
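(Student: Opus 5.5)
We instantiate \cref{thm:worst_case_algorithm_parameterized} with $i := \left\lceil \frac{\log n}{\eps\log n + \log\log n}\right\rceil$, the choice suggested just before the statement, and then bound the three quantities $T$, $S$ and $T_p$ separately. Write $L = \log n$ and $\ell = \log\log n$. The hypothesis $\eps \geq \ell/L$ gives $\eps L + \ell \leq 2\eps L$, so $i \geq 1/(2\eps) \geq 1$. In the other direction, $i \leq 1/\eps + 1 \leq 2/\eps$, and $i \leq L/\ell$. The hypotheses $d \geq \log n/2$ and $\eps<1/2$ should ensure $i\le d$, which is what the theorem requires; this is the point at which I expect to invoke $d\ge \log n/2$.

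\textbf{Query time.} By the theorem, $T \leq 2idn^{1-1/i}$. Ignoring the ceiling for the moment, $1/i = \eps + \ell/L$, so $n^{-1/i} = n^{-\eps}/\log n$. Since $i\le L/\ell\le L/2$ for large $n$, this gives $2i\,n^{-1/i} \leq 2i\,n^{-\eps}/\log n \leq n^{-\eps}$, hence $T\le n^{1-\eps}d$. The ceiling is the main nuisance. Rounding $i$ up lowers $1/i$ slightly, and the loss is at most a factor $n^{1/i-1/(i+1)}\le n^{1/i^2}$. When $i$ is large this factor is $n^{O(\eps^2)}$, which need not be $O(1)$. I would therefore first try to absorb it into the slack in $2i/\log n \le 1$. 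If that fails, I would reparametrize: for example, choose $i=\lfloor\cdot\rfloor$ instead, which rounds in the helpful direction for $T$, and check that $S$ and $T_p$ still only gain constant factors in the exponent.

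\textbf{Space.} With $t=d/i\le 2\eps d$ (and $t \leq d/2$), the binomial fact gives
\[\binom{d}{\le d/i}\le (ei)^{d/i}\le 2^{2\eps d\log(2e/\eps)} = 2^{O(\eps d\log(1/\eps))}.\]
The remaining factor $idn^{1-1/i}$ is at most $n^{1-\eps}d$ by the same computation as for the query time. Multiplying gives the claimed bound on $S$.

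\textbf{Preprocessing.} By the theorem, $T_p \leq \binom{d}{\le d/i}\, i d n$. The binomial factor is $2^{O(\eps d\log(1/\eps))}$ as above, and $i\le L\le n^{3\eps}$ because $n^{3\eps}\ge (\log n)^3$ by $\eps\ge \ell/L$. This bound is loose, which is presumably why the statement allows the exponent $3\eps$. Every step except the integrality and rounding bookkeeping for $T$ is routine.
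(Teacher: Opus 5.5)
Your overall plan (instantiate Theorem~\ref{thm:worst_case_algorithm_parameterized} with $i \approx \log n/(\varepsilon\log n+\log\log n)$ and bound each quantity) is the paper's. However, your primary choice $i=\lceil x\rceil$, with $x:=\log n/(\varepsilon\log n+\log\log n)$, genuinely fails for the query-time bound. The loss cannot be absorbed into the slack $2i/\log n$: that slack is only a logarithmic factor, while rounding $i$ up can cost a polynomial one. Concretely, for fixed $\varepsilon=0.3$ and large $n$ one has $3<x<10/3$, so $\lceil x\rceil=4$. The theorem then yields only $T\le 8d\,n^{3/4}$, which exceeds $n^{0.7}d$ by a factor $n^{\Omega(1)}$. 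Even for $\varepsilon=1/4$ you get $8n^{1-\varepsilon}d$ rather than $n^{1-\varepsilon}d$.

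So you must commit to your fallback $i=\lfloor x\rfloor$, which is exactly what the paper does. Then $1/i\ge 1/x=\varepsilon+\log\log n/\log n$ gives $n^{1-1/i}\le n^{1-\varepsilon}/\log n$ directly. Also, $\varepsilon\log n\ge\log\log n$ gives $i\le x\le \log n/(2\log\log n)\le \log n/2$ (for $n\ge 4$), so $2i\le\log n$ and $i\le d$. Hence $T\le n^{1-\varepsilon}d$ and $idn^{1-1/i}\le n^{1-\varepsilon}d/2$.

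Elsewhere, the switch only changes the lower bound on $i$. Since $\varepsilon<1/2$ and $\log\log n\le\varepsilon\log n$ force $x>1$, you get $i\ge x/2\ge 1/(4\varepsilon)$ instead of $1/(2\varepsilon)$. Together with $i\le x\le 1/\varepsilon$, this gives $\binom{d}{\le d/i}\le (ei)^{d/i}\le 2^{4\varepsilon d\log(e/\varepsilon)}$, matching the paper's space bound.

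Your preprocessing argument charges the extra factor as $i\le\log n\le n^{\varepsilon}$, which is exactly what $\varepsilon\ge\log\log n/\log n$ says. It is valid and a bit more direct than the paper's, which writes $idn=idn^{1-1/i}\cdot n^{1/i}$ and uses $n^{1/i}\le n^{4\varepsilon}$. Yours actually gives $n^{1+\varepsilon}$ in place of $n^{1+3\varepsilon}$.
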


\begin{proof}
    Apply \cref{thm:worst_case_algorithm_parameterized} with $i = \left\lfloor \frac{\log n}{\eps\log n+\log \log n} \right\rfloor \leq \frac{\log n}{2}$. Then, 
    \begin{equation*}
        T \leq 2id \cdot n^{1-1/i} \leq d \log n \cdot \frac{n}{n^{(\eps \log n + \log \log n)/\log n}} = n^{1-\eps}d.
    \end{equation*}
    Note that $i \geq \frac{\log n}{2(\eps\log n+\log \log n)}$ and $\log \log n/\log n \leq \eps$. Therefore, 
    \begin{align*}
        S &\leq \binom{d}{\leq d/i}\cdot idn^{1-1/i}\\
         & \leq 2^{\frac{d\log(ei)}{i}} \cdot \frac{n^{1-\eps} d}{2} \\
         & \leq 2^{4d\eps\log\left(\frac{e}{\eps}\right)} \cdot \frac{n^{1-\eps} d}{2} \\
         & = 2^{O\left(\eps d \log(1/\eps)\right)} \cdot n^{1-\eps} d.
    \end{align*}
    Likewise,
    \begin{align*}
        T_p &\leq \binom{d}{\leq d/i}idn \leq 2^{O(\eps d \log(1/\eps)}\cdot n^{1-\eps}d \cdot n^{1/i} \leq 2^{O(\eps d \log(1/\eps))}\cdot n^{1+3\eps}d \; . \qedhere
    \end{align*}
\end{proof}

Additionally, we are interested in our algorithm's performance for the special case of $d=c\log{n}$ for constant $c$.

\begin{corollary} [$d = c \log n$] \label{cor:chan_comparison}
    Consider any $n \geq 2$, $c \geq 2$, and $\delta \geq 2e\log c/c$. There is a $(S, T)$-data structure with preprocessing time $T_p$ for $\OnlineOV_{n,c \log n}$ where
    \begin{align*}
       T = n^{1-\Omega\left(\frac{1}{c\log c}\right)}   \quad \text{and} \quad S, \;T_p = \tilde{O}(n^{1+\delta}) \;.
    \end{align*}
\end{corollary}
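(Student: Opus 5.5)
We instantiate \cref{thm:worst_case_algorithm_parameterized} with $d = c\log n$ and a constant number of recursion levels $i$, chosen depending only on $c$ and $\delta$. The theorem gives
\begin{align*}
T \leq 2id\,n^{1-1/i}, && S \leq \binom{d}{\leq d/i}\, i d\, n^{1-1/i}, && T_p \leq \binom{d}{\leq d/i}\, i d\, n.
\end{align*}
Since $i$ and $c$ are constants, the factors $id = O(\log n)$ are absorbed into the $\tilde{O}$ and into the $n^{1-\Omega(\cdot)}$ bound. It remains to pick $i$ so that two conditions hold: $1/i = \Omega(1/(c\log c))$, and $\binom{c\log n}{\leq c\log n/i} \leq n^{\delta}$.

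For the binomial we use the Fact $\binom{d}{\leq k}\leq (ed/k)^k$, which is valid for $k = d/i \leq d/2$, i.e. $i\geq 2$. This gives
\[
\binom{c\log n}{\leq c\log n/i} \leq (ei)^{c\log n/i} = n^{c\log(ei)/i}.
\]
So we need $c\log(ei)/i \leq \delta$. We take $i = \lceil A\, c\log c/\delta' \rceil$, where $\delta' = \min(\delta,1)$ and $A$ is a suitable absolute constant. Then $\log(ei) = O(\log c + \log(1/\delta'))$. Because $\delta \geq 2e\log c/c$, we have $\log(1/\delta) \leq \log c$, so $\log(ei) = O(\log c)$. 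Hence $c\log(ei)/i \leq c\cdot O(\log c)\cdot \delta'/(A c\log c) \leq \delta$ once $A$ is a large enough constant.

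With this choice, $1/i = \Omega(\delta'/(c\log c))$. For a fixed $\delta$ this is $\Omega(1/(c\log c))$, and in the worst case $\delta' = \min(\delta,1) \geq 2e\log c/c$. Since $\delta$ is a fixed constant in the statement, we absorb it into the $\Omega$, as the informal version of \cref{thm:alg_inf} and the proof overview do (the overview states $n^{1-\Omega(\delta/c\log c)}$). The resulting bounds are
\begin{align*}
T \leq O(\log n)\, n^{1-\Omega(1/(c\log c))}, && S \leq \tilde{O}(n^{1-1/i+\delta}) \leq \tilde{O}(n^{1+\delta}), && T_p \leq \tilde{O}(n^{1+\delta}).
\end{align*}

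We also need $i \leq d$, which holds for $n$ large. For small $n$ the bounds are trivial and hidden in the constants.

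The main obstacle is the parameter bookkeeping. Choosing $i \approx c\log c/\delta$ simultaneously keeps the binomial at most $n^{\delta}$ and keeps the query exponent loss at order $1/(c\log c)$. This step relies on the assumption $\delta \gtrsim \log c/c$ to control $\log(1/\delta)$ by $\log c$. Everything else follows directly from \cref{thm:worst_case_algorithm_parameterized}.
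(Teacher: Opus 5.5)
Your proposal is correct and follows the paper's proof: instantiate \cref{thm:worst_case_algorithm_parameterized} with $i = \Theta(c\log c/\delta)$ and bound $\binom{d}{\le d/i} \le (ei)^{d/i}$. The only difference is the constant: the paper takes exactly $i = 2c\log c/\delta$, so the hypothesis $\delta \ge 2e\log c/c$ is precisely the condition $ei \le c^2$, which gives $\binom{d}{\le d/i} \le c^{2d/i} = n^{\delta}$ directly, with no unspecified constant $A$ needed.
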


\begin{proof}
    Apply \cref{thm:worst_case_algorithm_parameterized} with $i = 2c\log c/\delta$. Then
    \begin{align*}
        T & \leq 2idn^{1-1/i} = n^{1-\Omega\left(\frac{1}{c\log c}\right)}, \\
        S &\leq \binom{d}{\leq d/i}\cdot idn^{1-1/i} 
        \leq \tilde{O}(2^{\frac{\delta\log n}{2 \log c} \cdot \log(2e c \log c / \delta)} \cdot n)
        = \tilde{O}(n^{1+\delta}), \text{ and}\\
        T_p &\leq \binom{d}{\leq d/i}\cdot idn 
        \leq \tilde{O}(2^{\frac{\delta\log n}{2 \log c} \cdot \log(2e c \log c / \delta)} \cdot n)
        = \tilde{O}(n^{1+\delta}) \;.\qedhere
    \end{align*}
\end{proof}

We now proceed with the proof of \cref{thm:worst_case_algorithm_parameterized}.

\begin{proof}[Proof of \cref{thm:worst_case_algorithm_parameterized}]
    We proceed by induction on $n$, $i$, and $d$. Our induction hypothesis assumes that for each $i' < i, n' \leq n$ and $d' \leq d$, the algorithms satisfy the stated bounds.
    At each step of the induction, we have $i' = i-1$, $d' = d-d/i$, and $n' = n^{1-1/i}$.
    Thus, this preserves the invariant $i' \leq d'$, so the only necessary base cases are when $n=1$ or $i=1$.
    \paragraph{Base Case, $n = 1$.} 
    \texttt{OVPre} stores the input vector using $d$ bits and preprocessing time $d$. \texttt{OVOnl} solves the problem by checking if the input vector is orthogonal to the query vector in time $2d$.

    \paragraph{Base Case, $i=1$.} 
    \texttt{OVPre} creates only a bitmap of all $2^{d} = \binom{d}{\leq d}$ queries. For each query, it takes time $nd$ to compute the answer to the query for a total time of $nd2^d$. Lastly, \texttt{OVOnl} simply looks up the answer in the stored bitmap, which takes $d + 1 \leq 2d$ time and is always correct.

    \paragraph{Induction (Correctness).}
    Now, we consider the case where $n > 1$ and $i > 1$ by induction. First, let us argue for correctness.
    
    Given any query $q$ with $\norm{q}_0 < d/i$, \texttt{OVOnl} looks up the answer in the previously stored bitmap, so assume $\norm{q}_0 \geq d/i$.
    The query $q$ is orthogonal to a vector in $X$ if and only if $q$ is orthogonal to a vector in one of $X', X_1, \dots,$ or $X_k$. 
    To check if $q$ is orthogonal to a vector in $X'$, \texttt{OVOnl} first selects $d/i$ coordinates $C$ on which $q|_C = \vec{1}$ and considers the corresponding set of ``candidate vectors'' $Y_C = \{x \in X' : x|_C=\vec{0}\}$.
    Note that $q$ is orthogonal to a vector in $X'$ if and only if $q$ is orthogonal to a vector in $Y_C$, so \texttt{OVOnl} handles the vectors in $X'$ correctly.
    For each of the sets $X_j$ with $j \in [k]$, \texttt{OVOnl} checks to see if $q$ is orthogonal to the vectors in $X_j$ only on the coordinates $\overline{C_j}$ by applying itself recursively with $i' := i - 1$ and $d' := d-d/i$ and $n' = n^{1-1/i}$. This ensures that $i' = i - 1 \leq  d - d/i = d'$ since $i \leq d$. 
    By \cref{lem:part_alg_properties}, for every vector $x \in X_j$, $x|_{C_j} = \vec{0}$, so such an $x$ is orthogonal to $q$ if and only if $x|_{\overline{C_j}}$ is orthogonal to $q|_{\overline{C_j}}$. 
    Therefore, by the inductive hypothesis, the recursive call to \texttt{OVOnl} properly determines if $q|_{\overline{C_j}}$ is orthogonal to any vector in $X_j$. 
    
    \paragraph{Induction (Space).} Now, let us analyze space.
    \texttt{OVPre} writes down three pieces of information:
    \begin{enumerate}
        \item To handle sparse queries with $\norm{q}_0 < d/i$, \texttt{OVPre} writes a bitmap, one bit per query.
        There are $\binom{d}{< d/i}$ such queries. 
        \item For the pseudorandom part $X'$, by \cref{lem:part_alg_properties}, we know that for every $C \in \binom{[d]}{d/i}$, the corresponding $Y_C$ on line \ref{alg:ovpre_YC} has at most $|Y_C| \leq n^{1-1/i}$ vectors. Thus, for $X'$ \texttt{OVPre} writes down a list of at most $n^{1-1/i}$ many vectors for each of the $\binom{d}{d/i}$ choices of $C$, which requires at most $\binom{d}{d/i} \cdot n^{1-1/i} \cdot d$ bits.
        \item Lastly, for each structured part $\sigma_j$, \texttt{OVPre} writes down the corresponding $C_j$ and a recursively-constructed data structure for $\sigma_j$.
    Because $C_j \in \binom{[d]}{d/i}$, encoding $C_j$ requires $\log ( \binom{d}{d/i} ) \leq d$ bits.
    Applying our inductive hypothesis gives us that the recursively-constructed data structures each require at most $(i-1)\binom{d'}{\leq \frac{d'}{i-1}} \cdot (n')^{1 - \frac{1}{i-1}} d' = \binom{d(1-1/i)}{\leq \frac{d}{i}} \cdot n^{1 - 2/i} d\cdot (i-1)^2/i$ bits. 
    Therefore, the amount of space used for all of the recursive calls is at most
    \begin{align*}
        n^{1/i} \cdot \left(d+\binom{d(1-1/i)}{\leq \frac{d}{i}} \cdot n^{1 - 2/i} d\cdot \frac{(i-1)^2}{i}\right)
        & \leq dn^{1-1/i} + dn^{1-1/i} \binom{d(1-1/i)}{\leq \frac{d}{i}}\frac{(i-1)^2}{i} \\
        & =  dn^{1-1/i} \left(1 + \binom{d(1-1/i)}{\leq \frac{d}{i}}\frac{(i-1)^2}{i}\right)\;,
    \end{align*}
    where the inequality follows since $i \geq 2$.
    \end{enumerate}
    
    Therefore, in total, \texttt{OVPre} uses space at most
    \begin{align*}
        S &\leq \binom{d}{< d/i} + \binom{d}{d/i} \cdot n^{1-1/i} \cdot d +  dn^{1-1/i} \left(1 + \binom{d(1-1/i)}{\leq \frac{d}{i}} \frac{(i-1)^2}{i}\right) \\
        & \leq dn^{1-1/i} \cdot \binom{d}{\leq d/i} + dn^{1-1/i}   \left(1 + \binom{d(1-1/i)}{\leq \frac{d}{i}} \frac{(i-1)^2}{i}\right) \\
        & \leq idn^{1-1/i} \cdot \binom{d}{\leq d/i}\;.
    \end{align*}

    \paragraph{Induction (Preprocessing Time).} Now, let us analyze preprocessing time.
    \texttt{OVPre} computes three pieces of information:
    \begin{enumerate}
        \item It takes time at most $d+nd/i$ to compute the answer for each sparse query. 
        Therefore, across all sparse queries, the total amount of time taken is at most $(n+1)d\binom{d}{< d/i}$. 
        \item For the pseudorandom part $X'$, by \cref{lem:part_alg_properties}, we know that we can find  $X', X_1, \dots, X_k, C_1, \dots, C_k$ in time $\binom{d}{d/i} \cdot nd/i$. 
        Furthermore, it takes time at most $\binom{d}{d/i} \cdot nd/i$ to write down the list $Y_C$ of at most $n^{1-1/i}$ vectors for each of the $\binom{d}{d/i}$ choices of $C$.
        Thus, for the first two steps listed here, it takes time
        \begin{align*}
            \binom{d}{<d/i}\cdot (n+1) \cdot d + 2 \cdot \binom{d}{d/i} \cdot n \cdot d/i &= \binom{d}{\leq d/i} \cdot (n+1) \cdot d - \binom{d}{d/i} \cdot d \cdot \left(n+1-2n/i \right) \\
            &\leq \binom{d}{\leq d/i}\cdot (n+1) \cdot d.
        \end{align*}
        \item For each structured part $X_j$, \texttt{OVPre} creates a recursively-constructed data structure for $X_j$.
        Applying our inductive hypothesis gives us that each of the recursively-constructed data structures requires time at most
        \begin{align*}
            \binom{d'}{\leq \frac{d'}{i-1}}\cdot n' \cdot d' \cdot (i-1) &= \binom{d-d/i}{\leq d/i} \cdot n^{1-1/i} \cdot d \left(1 - \frac{1}{i} \right) \cdot (i-1) \\
            &\leq \binom{d}{\leq d/i} \cdot n^{1-1/i} \cdot d \cdot \left(i - \frac{3}{2} \right)
        \end{align*}
        because $i \geq 2$.
        Therefore, the amount of time used for all of the recursive calls is at most
        \begin{equation*}
            \binom{d}{\leq d/i} \cdot n \cdot d \cdot \left(i - \frac{3}{2} \right).
        \end{equation*}
    \end{enumerate}
    
    So in total, \texttt{OVPre} uses time at most
    \begin{align*}
        T_p &\leq \binom{d}{\leq d/i} \cdot (n+1) \cdot d + \binom{d}{\leq d/i} \cdot n \cdot d \left(i - \frac{3}{2} \right) \\
        &= \binom{d}{\leq d/i} \cdot d \left(n + 1 + in -\frac{3}{2}n \right).
    \end{align*}
    Because $n \geq 2$, $3n/2 \geq n + 1$, so we have
    \begin{equation*}
        T_p \leq \binom{d}{\leq d/i} \cdot d \cdot i \cdot n.
    \end{equation*}

    \paragraph{Induction (Query Time).} Lastly, we prove the bound on query time.
    First note that if the given query $q$ has $\norm{q}_0 < d/i$, then \texttt{OVOnl} simply looks up the answer in a bitmap; this takes time $d+1$.
    So assume that $\norm{q}_0 \geq d/i$.
    To determine if $q$ is orthogonal to any vector in the pseudorandom part, $\texttt{OVOnl}$ selects an arbitrary set $C$ on which $q|_C = \vec{1}$ and checks $q$ against the vectors in $Y_C$.
    Because $|Y_C| \leq n^{1-1/i}$, this takes at most time $(n^{1-1/i}+1)\cdot d \leq 2d \cdot n^{1-1/i}$.
    Then \texttt{OVOnl} recursively calls itself on at most $n^{1/i}$ many subproblems consisting of $n'$ vectors in $d'$ dimensions.
    By the inductive hypothesis, each takes time at most $2(i-1)d' \cdot (n')^{1-1/(i-1)} = 2\cdot n^{1-2/i} d \cdot (i-1)^2/i$.
    Thus, overall, \texttt{OVOnl} takes time at most
    \begin{align*}
        T &\leq \max(d + 1, 2d \cdot n^{1-1/i} + n^{1/i} \cdot 2\cdot n^{1-2/i} d \cdot (i-1)^2/i) \\
        &\leq 2d \cdot n^{1-1/i} + 2\cdot n^{1-1/i} d \cdot (i-1)^2/i \\
        &\leq 2id \cdot n^{1-1/i}.\qedhere
    \end{align*}
\end{proof}

\subsection{Applications}

In this section, we combine our algorithms along with known reductions to provide new data structures for the online versions of a variety of problems. Since our algorithms are deterministic, we are able to provide the same guarantees for all problems that have an efficient deterministic reduction to $\OnlineOV_{n, d}$. We collect such reductions in \cref{lem:everything_is_ov}. 

\begin{corollary}\label{apps:deterministic_applications}
   For all large enough $n, c$ and $\delta \geq \Omega(\log c/c)$, there are $(S, T)$-data structures with preprocessing time $T_p$ for each of the following problems: 
    \begin{align*} 
     & \PM_{n, c \log n}, \;\DNFEval_{n, c \log n}, \; \SubsetQuery_{n,  c \log n}, \\ & \ContainmentQuery_{n, c \log n}, \; (1+\eps)\text{-}\ORS_{n, O(\eps c \log n/\log u), u},
    \end{align*}
    where
    \begin{align*}
       T = n^{1-\Omega\left(\frac{1}{c\log c}\right)}   \quad \text{and} \quad S, \;T_p = \tilde{O}(n^{1+\delta}) \;.
    \end{align*}
\end{corollary}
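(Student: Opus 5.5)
The plan is to compose \cref{cor:chan_comparison} with the deterministic reductions collected in \cref{lem:everything_is_ov}. Fix a target problem $\Pi$ from the list. \cref{lem:everything_is_ov} turns any $(S,T)$-data structure with preprocessing time $T_{\mathcal{P}}$ for $\OnlineOV_{n,d}$ into an $(O(S),O(T))$-data structure with preprocessing time $O(T_{\mathcal{P}})$ for $\Pi$. The only parameter that changes is the dimension: $\PM$ and $\DNFEval$ lose a factor of $2$ (the $\OnlineOV$ dimension is $d$ and the target dimension is $d/2$), and for $(1+\eps)$-$\ORS$ the target dimension is $O(d\eps/\log u)$. $\SubsetQuery$ and $\ContainmentQuery$ keep the same dimension. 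The reductions are deterministic, so the composed data structure is deterministic as well.

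Step one is to choose, for each $\Pi$, the $\OnlineOV$ dimension $d = c'\log n$ with $c'$ a constant multiple of $c$.
\begin{itemize}
\item For $\SubsetQuery_{n,c\log n}$ and $\ContainmentQuery_{n,c\log n}$, take $c'=c$.
\item For $\PM_{n,c\log n}$ and $\DNFEval_{n,c\log n}$, take $c'=2c$.
\item For $(1+\eps)$-$\ORS_{n,O(\eps c\log n/\log u),u}$, take $c'=c$ and absorb the constant into the $O(\cdot)$ already present in the statement.
\end{itemize}

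Step two is to apply \cref{cor:chan_comparison} with $c'$ in place of $c$. This requires $c'\ge 2$ and $\delta\ge 2e\log c'/c'$. The first holds for large enough $c$. For the second, note that $2e\log(2c)/(2c)\le 2e\log c/c$ once $c\ge 2$, so the bound $\delta\ge 2e\log c/c$ covers every case. This is exactly the hypothesis $\delta\ge\Omega(\log c/c)$ if the hidden constant is taken to be at least $2e$. The corollary then gives $T=n^{1-\Omega(1/(c'\log c'))}$, which equals $n^{1-\Omega(1/(c\log c))}$ because $c'\log c'=\Theta(c\log c)$. It also gives $S,T_p=\tilde O(n^{1+\delta})$.

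Step three is to check that the constant-factor overheads from \cref{lem:everything_is_ov} are absorbed. They multiply $S$ and $T_p$ by a constant, which is harmless inside $\tilde O(n^{1+\delta})$. They multiply $T$ by a constant, which is absorbed into the $\Omega(\cdot)$ in the exponent for large $n$.

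The argument is essentially bookkeeping. The one place needing care is the rescaling of $c$ in the $\PM$ and $\DNFEval$ cases, where one must check that the side condition on $\delta$ and the $c\log c$ exponent survive doubling $c$. This costs only constant factors, as verified in step two.
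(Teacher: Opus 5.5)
Your proposal is correct and follows the same route as the paper, whose proof simply combines \cref{lem:everything_is_ov} with \cref{cor:chan_comparison}. Your explicit rescaling to dimension $2c\log n$ for $\PM$ and $\DNFEval$, together with the check that the condition on $\delta$ and the $c\log c$ exponent survive it, fills in bookkeeping the paper leaves implicit.
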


\begin{proof}
    We combine \cref{lem:everything_is_ov} and \cref{cor:chan_comparison} to conclude the corollary.
\end{proof}

We can also obtain improved deterministic data structures for these problems in arbitrary dimensions using the same reduction. In~\cref{apps:deterministic_applications}, we only highlight the case when the dimension $d(n) = c \log n$ as it is where we obtain almost linear space and preprocessing time while having sublinear query time. 

In addition, since there is an efficient randomized reduction from $\ORS$ to $\OnlineOV$, we are able to obtain a randomized data structure for $\ORS$ in arbitrary dimensions. Note that the case where $d(n) \leq  c \log^2 n$ was already provided by \cite{C17}.  

\begin{corollary}\label{apps:ors}
    For large enough $n, d$ and $\eps < 1/2$, there is an $(S, T)$-data structure with preprocessing time $T_p$ for $\ORS_{n,d, u}$ where
    
    \begin{align*} T &\leq O(n^{1-\eps}d^2\log^2(u)), & S & \leq 2^{O(\eps d^2\log^2(u)\log(1/\eps))} \cdot n^{1-\eps} \cdot d^2\log^2(u), & \\ & \text{and}
    & T_p & \leq 2^{O(\eps d^2\log^2(u)\log(1/\eps))} \cdot n^{1+3\eps} \cdot d^2\log^2(u) \;.
    \end{align*}
\end{corollary}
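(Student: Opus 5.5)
The plan is to compose the randomized reduction from \ORS{} to \OnlineOV{} (the lemma attributed to \cite{CIP02}) with the worst-case deterministic data structure of \cref{cor:cip_comparison}. Concretely, set $d' = O(d^2\log^2 u)$, the dimension produced by that reduction. Then apply \cref{cor:cip_comparison} to $\OnlineOV_{n,d'}$ with the given $\eps<1/2$. This yields a data structure with
\[
T\le n^{1-\eps}d',\qquad S\le 2^{O(\eps d'\log(1/\eps))}\, n^{1-\eps}d',\qquad T_p\le 2^{O(\eps d'\log(1/\eps))}\, n^{1+3\eps}d'.
\]
Substituting $d'=O(d^2\log^2 u)$ gives exactly the claimed bounds. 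The constant hidden in $d'$ is absorbed into the $O(\cdot)$ in the exponent and into the leading factor.

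The second step is to check the hypotheses of \cref{cor:cip_comparison}. First, $d'\ge \log n/2$: we may pad the vectors with zero coordinates if necessary, which changes neither orthogonality nor the asymptotic bounds once $d'$ is taken to be $\max(d',\log n)$. Since $n,d$ are large, this only affects lower-order behaviour.

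Second, $\eps \ge \log\log n/\log n$. If the given $\eps$ is smaller than this, we replace it by $\eps^*=\log\log n/\log n$. This only improves the query time. The space and preprocessing bounds then change by factors $2^{O(\eps^* d'\log(1/\eps^*))}$ and $n^{O(\eps^*)}=\operatorname{polylog}(n)$. I would argue these are dominated by, or absorbed into, the stated bounds. This is the one point needing care, and I expect it to be the main (though minor) obstacle: bookkeeping the regime $\eps<\log\log n/\log n$. Failing a clean absorption, I would simply state the corollary for $\eps\ge \log\log n/\log n$, since smaller $\eps$ gives essentially trivial savings.

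Finally, the reduction preserves $S$ and $T$ up to constant factors. Its preprocessing consists of a randomized mapping of points to $\{0,1\}^{d'}$, computable in time $O(nd')$, which is dominated by $T_p$. Hence the preprocessing time bound also transfers, and the resulting data structure is randomized as the reduction lemma allows.
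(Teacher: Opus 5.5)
Your proposal matches the paper's argument: the paper states this corollary without proof, as the composition of the randomized reduction of \cite{CIP02} from $\ORS_{n,d,u}$ to $\OnlineOV_{n,d'}$ with $d'=O(d^2\log^2 u)$ and \cref{cor:cip_comparison}, followed by substituting $d'$. The paper never checks the hypotheses of \cref{cor:cip_comparison}, so your extra care is welcome, but neither of your patches works quite as claimed: replacing $\eps$ by $\eps^*=\log\log n/\log n$ is not absorbed in general (for $\eps=1/\log n$ and $d'=\Theta(\log n)$ the bound at $\eps^*$ carries a factor $2^{\Theta(\log^2\log n)}$, whereas the target allows only $2^{O(\log\log n)}$ and the savings $n^{\eps^*-\eps}$ are at most $\log n$), so your fallback assumption $\eps\ge\log\log n/\log n$ is genuinely needed; likewise, padding to dimension $\log n$ changes the bounds when $d'=o(\log n)$, and in that regime it is simpler to store all $2^{d'}<\sqrt{n}$ answers directly.
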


Finally, we are also able to demonstrate a new offline algorithm for a version of $\OV$ where at least one set of input vectors from $A, B \subseteq \{0,1\}^{c\log n}$ is chosen at random from $\mathcal{B}_p$.  

\begin{corollary}\label{apps:avg-ov}
Let $A, B \subseteq \{0, 1\}^{c \log n}$ such that $A \sim \mathcal{B}_p$. Then there is an algorithm that solves $\OV$ on $A, B$ in expected time $n^{2-\Omega(1/\log c)}$.
\end{corollary}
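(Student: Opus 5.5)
The plan is to run the average-case data structure of \cref{cor:chan_comparison_avg} on $A$ and use the vectors of $B$ as queries. Concretely, we run \texttt{AverageOVPre}$(A,t)$ with the parameters from \cref{thm:average}: $d=c\log n$, $\eps=\Theta(1/\log c)$ and $t=\log_{1/p}(6n^\eps)$. We then call \texttt{AverageOVOnl} once for each $y\in B$. The algorithm outputs ``yes'' exactly when some query returns $1$. Correctness holds for every input, not just on average, because \texttt{AverageOVPre}/\texttt{AverageOVOnl} answer every query correctly; randomness only affects the running time.

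For the expected cost, \cref{thm:average} gives preprocessing time $O(\binom{d}{\le t}nd)$ in the worst case. With $\eps=\delta/(4\log c)$ and $\delta$ a small constant (say $\delta\le 1/2$, provided $c$ is large enough that $\delta>2e\log c/c$), the computation in \cref{cor:chan_comparison_avg} yields $\binom{d}{\le t}<n^\delta$. The preprocessing is therefore $\tilde O(n^{1+\delta})\le n^{2-\Omega(1)}$. The main point to check is the total query time. The query time bound of \cref{thm:average} is $O(d\max(M,t))$, where $M$ is the size of the largest list $Y_C$. This bound holds uniformly over all queries, including adversarial ones, since $M$ depends only on $A$. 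So the $n$ queries from $B$, which may be arbitrary and may even depend on $A$, cost in total at most $O(nd\max(M,t))$. Taking expectation over $A$ and using $\mathbb E[M]\le n^{1-\eps}$ from the proof of \cref{thm:average}, the expected total is $O(n^{2-\eps}d)=n^{2-\Omega(1/\log c)}$.

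The one subtlety, and the step I expect to need care, is making sure that the expectation is taken over the worst-case-per-query bound, rather than over a per-query expectation that would assume the queries are independent of $A$. \Cref{eq:avgcase-worst-bds} already gives a deterministic bound in terms of $M$, so linearity of expectation finishes the argument. A second point is the hypothesis $\log_{1/p}(2n^\eps)\le d$ of \cref{thm:average}. For constant $p$ and $\eps=\Theta(1/\log c)$ this holds as soon as $c$ is larger than a constant depending on $p$. For small $c$ the claimed bound is weaker, and the $\Omega(\cdot)$ absorbs it.
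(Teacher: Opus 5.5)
Your proposal is correct and is the paper's argument: build the average-case structure of \cref{cor:chan_comparison_avg} on $A$ and query it with each $y\in B$, for total expected time $\tilde{O}(n^{1+\delta})+n\cdot n^{1-\Omega(1/\log c)}$. Your point that $O(d\max(M,t))$ bounds the cost of every query, so an $A$-dependent $B$ is harmless, is exactly what the paper's definition $\mathbb{E}_{D}[T_D]\le T$ (with $T_D$ the worst-case query time) builds in; the one nit, which the paper shares since it cites the $\mathcal{B}_{1/2}$ corollary, is that for $p>1/2$ the estimate $\binom{d}{\le t}<n^\delta$ with $t=\log_{1/p}(6n^\eps)$ needs $\eps$ shrunk by a factor of about $\log(1/p)$, which for constant $p$ only changes the constant in $\Omega(1/\log c)$.
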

\begin{proof}
    This follows from applying our data structure from \cref{cor:chan_comparison_avg} with inputs from $A$ and queries from $B$. This would take time at most 
    $n^{1+O(\log c / c)} + n \cdot n^{1-\Omega(1/\log c)} = n^{2-\Omega(1/\log c)}$.
\end{proof}

\section{Lower Bounds}\label{sec:hardness}
In this section, we use hardness conjectures against non-uniform algorithms for $k$-SAT (\cref{conj-nuseth}) and \HamPath{} (\cref{conj-nuhampath}) to prove polynomial lower bounds on the space and query complexities of data structures (with computationally unbounded preprocessing) for Sparse-\OnlineOV{} and \kSUM{}. By \cref{lem:everything_is_ov}, this also implies hardness of many other related problems.

\subsection{Space Lower Bounds from Non-Uniform SETH}\label{sec:nuseth_lb}
We begin by defining the hardest \OnlineOV{} instance: a family of sparse \OnlineOV{} instances for which efficiently checking orthogonality against the vectors in the instance would enable us to solve SAT for \emph{any} input formula. We then show that there exists an explicit hardest \OnlineOV{} instance, and we use it to prove space lower bounds for a range of data structure problems.

\begin{definition}[Hardest \OnlineOV{} Instance]\label{def:hardest_OnlinneOV}

For a function $w:=w(k)$ and a constant $\delta\in(0,1]$, a family $(\mathcal{D}_{N,w})_{N,w\in\Z^+}$ of \OnlineOV{} instances is a \emph{$(w,\delta)$-hardest \OnlineOV{} instance} if it satisfies the following properties.
\begin{itemize}
\item For every $N, w$, $\mathcal{D}_{N,w}$ is an $\OnlineOV{}$ instance whose input contains $N$ $w$-sparse vectors of dimension $O(N^{\delta})$.

\item Efficiently computable: there is a deterministic algorithm that, given $N$ and $w$, outputs $\mathcal{D}_{N,w}$ in time $\widetilde{O}(Nw)$.

\item Evaluating $(\mathcal{D}_{N,w})_{N,w\in\Z^+}$ solves $k$-SAT: There is a deterministic algorithm~$\mathcal{A}$ that, given a $k$-SAT formula~$\varphi$ on $n$ variables, outputs a vector $a_\varphi$ of dimension $N^{\delta}$ for $N=2^n$ and $w=w(k)$ such that $\varphi$ is satisfiable if and only if there is a vector in $\mathcal{D}_{N,w}$ that is orthogonal to $a_\varphi$. 
The running time of~$\mathcal{A}$ is $\widetilde{O}(N^\delta)$. %
\end{itemize}
\end{definition}

In the next lemma, we follow the high-level idea of~\cite{BGKMS23} to construct an explicit $(w,\delta)$-hardest \OnlineOV{} instance. 
\begin{lemma}\label{lem:poly_form_OV}
    For every constant $\delta \in (0,1]$, there is a $(w, \delta)$-hardest \OnlineOV{} instance with $w = \binom{k/\delta}{k}$. %
\end{lemma}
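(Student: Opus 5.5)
We plan to follow the polynomial-formulation idea of \cite{BGKMS23}: split the $n$ variables into $b := 1/\delta$ blocks $B_1,\dots,B_b$ of $\delta n$ variables each (assume $1/\delta$ is an integer; otherwise round, which only changes constants). An assignment $\alpha\in\{0,1\}^n$ is then a tuple $(\alpha_1,\dots,\alpha_b)$ with $\alpha_j\in\{0,1\}^{\delta n}$, and there are $M:=2^{\delta n}=N^{\delta}$ partial assignments per block. The coordinates of the instance are indexed by pairs $(\mathcal{J},\beta)$, where $\mathcal{J}\subseteq[b]$ has $|\mathcal{J}|\le k$ and $\beta$ is a partial assignment to the blocks in $\mathcal{J}$. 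Since $b$ and $k$ are constants, the dimension is $\sum_{|\mathcal{J}|\le k} M^{|\mathcal{J}|}$, which is $O(M^{k})$. To obtain dimension $O(N^{\delta})$ we will instead apply the construction with block size $\delta n/k$, that is, with $b=k/\delta$ blocks. Then each coordinate space has size $(2^{\delta n/k})^{|\mathcal{J}|}\le N^{\delta}$, and the number of sets $\mathcal{J}$ with exactly $k$ elements is $\binom{k/\delta}{k}$. This matches $w=\binom{k/\delta}{k}$.

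\textbf{Construction.} Take exactly-$k$-subsets $\mathcal{J}\subseteq[k/\delta]$; if a clause touches fewer than $k$ blocks, assign it to an arbitrary superset of size $k$ (fixed in advance). The input vector $v_\alpha$ for each of the $N=2^n$ assignments $\alpha$ has a $1$ exactly at the coordinates $(\mathcal{J},\alpha|_{\mathcal{J}})$, one per $\mathcal{J}$. So $v_\alpha$ is $w$-sparse with $w=\binom{k/\delta}{k}$, the dimension is $\binom{k/\delta}{k}\cdot N^{\delta}=O(N^\delta)$, and the whole instance can be written down in time $\widetilde{O}(Nw)$.

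\textbf{Query.} Given $\varphi$, let $a_\varphi$ have a $1$ at coordinate $(\mathcal{J},\beta)$ if and only if some clause $C$ of $\varphi$ assigned to $\mathcal{J}$ is falsified by $\beta$. This is well defined because $C$ depends only on variables in the blocks of $\mathcal{J}$.

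\textbf{Correctness.} $\langle v_\alpha,a_\varphi\rangle=0$ holds if and only if, for every $\mathcal{J}$, no clause assigned to $\mathcal{J}$ is falsified by $\alpha|_{\mathcal{J}}$, that is, if and only if $\alpha$ satisfies every clause. Hence $\varphi$ is satisfiable if and only if some input vector is orthogonal to $a_\varphi$.

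\textbf{Running time of $\mathcal{A}$.} This is the main obstacle. Naively, for each of the $O(N^\delta)$ coordinates $(\mathcal{J},\beta)$ we would check all clauses assigned to $\mathcal{J}$. That costs $N^\delta\cdot|\varphi|$ time, and $|\varphi|\le (2n)^k$, so the bound is $\widetilde{O}(N^\delta)$ only in the $\mathrm{poly}(n)$ sense that $\widetilde{O}$ absorbs polylogarithmic factors of $N$. We will use exactly that convention. If a sharper bound is needed, the plan is instead to iterate over clauses: a clause $C$ on $k$ literals fixes the value of $k$ variables, and it is falsified by all $\beta$ extending that fixed partial assignment. Enumerating these $\beta$ and marking them costs $N^\delta/2^k$ per clause, which is still within $\mathrm{poly}(n)\cdot N^\delta$.

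We also need to match the definition's requirement that $a_\varphi$ has dimension $N^\delta$ while the instance has dimension $O(N^\delta)$. We handle this by padding both to the common dimension $\binom{k/\delta}{k}N^\delta$. Since $k$ and $\delta$ are constants, this is $O(N^\delta)$, and we interpret the definition up to constant factors.
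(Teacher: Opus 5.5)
Your proposal is correct and follows essentially the same construction as the paper. Both use $k/\delta$ blocks, coordinates indexed by a $k$-subset of blocks together with a partial assignment to it, and one $1$ per $k$-subset in each assignment vector. Both assign each clause to a $k$-set of blocks containing its variables (the paper does this by padding the block--clause bipartite graph to degree exactly $k$), and the query marks the partial assignments falsifying some assigned clause. Your explicit treatment of the $\mathrm{poly}(n)=\mathrm{polylog}(N)$ factor absorbed by $\widetilde{O}$, and of padding the dimension to $\binom{k/\delta}{k}N^{\delta}$, just spells out details the paper glosses over.
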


\begin{proof}
    We define the set $\mathcal{D}_{N,w}$ of $N$ input vectors, each of dimension $O(N^{\delta})$, forming our hardest \OnlineOV{} instance, solving all instances of $k$-SAT on $n$ variables.
   Let $\theta = k/\delta$, and let $V$ be the parts of an arbitrary partition of the $n$ variables of a $k$-CNF formula into $\theta$ many blocks, $V_1, \dots, V_\theta$, each of size $|V_i| = n/\theta$. 
   
   Each coordinate of our vector will represent an indicator from the following set of variables,

    \[
    X = \{x_{B, \mu_B} \colon B \in \binom{V}{k},\; \mu_B \in \bit^{nk/\theta}\} \;.
    \]
    Here, $B$ corresponds to a set of $k$ parts of our partition $V$, and $\mu_B$ corresponds to an assignment to the variables in all the parts in $B$.
    The total number of variables in $X$ is 
    \[v = \binom{\theta}{k}\cdot 2^{nk/\theta}
    = O(2^{\delta n}) = \Ot(N^{\delta}) \;.\]
    
    We now let the $i$th coordinate of our vectors represent the $i$-th lexicographic variable in $X$, and define our $N$, vectors as follows.

    For every assignment $\mu \in \bit^n$, we define a vector $X_{\mu} \in \mathcal{D}_{N, w}$. The coordinates of $X_\mu$ are labeled by pairs $(B, \mu_B)$. For $X_\mu$, we set the coordinate corresponding to $(B, \mu_B)$ to one if and only if $\mu_B$ is the assignment $\mu$ restricted to the variables in $B$. We observe that each $X_\mu$ has exactly $\binom{V}{k}  = \binom{\theta}{k} = O(1)$ non-zero entries.

    Note, importantly, that $\mathcal{D}_{N, w}$ does \emph{not} depend on a particular $k$-CNF, and it only depends on $n$ and $k$.

Now, given a $k$-CNF formula $\phi = C_1 \wedge \cdots \wedge C_m$ on $n$ variables, we determine whether $\varphi$ is satisfiable or not by checking orthogonality against the vectors in $\mathcal{D}_{N,w}$. 

We start by defining a bipartite graph $G = (V \sqcup C, E)$ as follows.
The vertices on the left, $V$, are the parts of our partition of the variables of $\phi$ into $\theta$ many blocks. The vertices on the right, $C$, are the clauses of $\phi$. We add an edge $\{V_i, C_j\}$ if $V_i$ contains a variable (or its negation) from $C_j$.
And finally, we add arbitrary edges to $G$ such that the degree of every $C_j$ is exactly $k$.

We now define the query vector $q_{\varphi}$, as follows: set the coordinate corresponding to $x_{B, \mu_B} = 0$, if the clauses in $\varphi$ whose neighborhood in $G$ is $B$ are all satisfied by the assignment $\mu_B$; otherwise, set $x_{B,\mu_B} = 1$.

Note that $q_{\varphi}$ is orthogonal to a vector $X_{\mu}$ in $\mathcal{D}_{N, w}$ if and only if there is at least one assignment $\mu$ for which all coordinates corresponding to $x_{B, \mu_B} = 0$ in $q_{\varphi}$. Such a $\mu$ is a satisfying assignment for $\varphi$. Note that for a given $k$-CNF formula~$\phi$, setting $x_{B, \mu_B}$ to the correct values can be done in time nearly linear in the number of variables $\widetilde{O}(v)$.

In conclusion, for every $N$ and $w$, we give a deterministic construction that in time $\widetilde{O}(N)$ outputs an instance $\mathcal{D}_{N,w}$ of $w$-Sparse-$\OnlineOV_{N, d}$  with sparsity $w=\binom{k/\delta}{k}$, and dimension $d = O(N^\delta)$. We also provide an efficient deterministic algorithm that, given any $k$-CNF $\varphi$, outputs a query vector $q_{\varphi}$ whose orthogonality to $\mathcal{D}_{N,w}$ characterizes the satisfiability of $\varphi$.
\end{proof}

We now use \cref{lem:poly_form_OV} to show that
under Non-Uniform SETH, data structures with computationally unbounded preprocessing even for the \emph{batch} version of Sparse \OnlineOV{} require either super-polynomial space or nearly linear time per query.

\begin{theorem}\label{thm:batchDNF}
Under \rm{NUSETH}, for all constants $c,\alpha>0$ and $\delta,\gamma\in(0,1)$, there exists constant~$w$ such that no batch data structure with computationally unbounded preprocessing can answer a set of $n^\alpha$ queries of $w$-Sparse-$\OnlineOV_{n,d}$ for $d=n^\delta$ in space $n^c$ and time $n^{1-\gamma}$ per query.
\end{theorem}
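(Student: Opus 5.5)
We argue by contradiction: from a batch data structure we build a non-uniform algorithm for $k$-SAT that violates NUSETH. Suppose that for some constants $c,\alpha>0$ and $\delta,\gamma\in(0,1)$, every constant $w$ admits a batch data structure for $w$-Sparse-$\OnlineOV_{n,n^\delta}$ that answers $n^\alpha$ queries in space $n^c$ and time $n^{1-\gamma}$ per query.

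\textbf{Choice of parameters.} Set $\eps$ small in terms of $c,\alpha,\gamma$; concretely we will need $c/(1+\alpha)<1$, arranged after rescaling below. Let $k$ be the clause width that NUSETH associates with this $\eps$. Set $w=\binom{k/\delta}{k}$, which is a constant once $k$ and $\delta$ are fixed, so it is a legitimate choice of $w$ depending only on the given constants.

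\textbf{Downward self-reduction.} Take a $k$-CNF $\varphi$ on $n$ variables. Fix a split $n=n_1+n_2$ with $n_2=\alpha n_1$, so that $n_1=n/(1+\alpha)$. For each of the $2^{n_2}$ assignments to the last $n_2$ variables, restrict $\varphi$; this yields $2^{n_2}$ $k$-CNFs (width at most $k$) on $n_1$ variables. Put $N=2^{n_1}$, so the number of restricted formulas is exactly $N^\alpha$. \cref{lem:poly_form_OV} gives the fixed instance $\mathcal{D}_{N,w}$ of $w$-Sparse-$\OnlineOV_{N,O(N^\delta)}$. This instance depends only on $n_1$ and $k$, not on $\varphi$, and padding adjusts the dimension to exactly $N^\delta$. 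Each restricted formula $\varphi_\rho$ is mapped to a query $q_{\varphi_\rho}$ in time $\widetilde O(N^\delta)$, and $\varphi$ is satisfiable if and only if some $q_{\varphi_\rho}$ is orthogonal to a vector in $\mathcal{D}_{N,w}$.

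\textbf{The non-uniform algorithm.} The advice for input length $n$ is the data structure $\sigma$ built by the unbounded preprocessing on $\mathcal{D}_{N,w}$. It has $N^c=2^{cn/(1+\alpha)}$ bits. The algorithm computes the $N^\alpha$ queries in total time $\widetilde O(N^{\alpha+\delta})$, runs the batch query algorithm in time $N^\alpha\cdot N^{1-\gamma}$, and outputs the OR of the answers. The total running time is
\[
\widetilde O\bigl(2^{n(\alpha+\max(\delta,1-\gamma))/(1+\alpha)}\bigr)\le 2^{(1-\eps')n}
\]
for some constant $\eps'>0$, because $\delta<1$ and $\gamma>0$.

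\textbf{Main obstacle: the advice length.} The advice length $2^{cn/(1+\alpha)}$ need not be below $2^{(1-\eps)n}$ when $c$ is large. Two remedies are available.

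The first remedy treats $\alpha$ as free. If the structure handles $n^\alpha$ queries at amortized time $n^{1-\gamma}$, then by splitting into batches it handles $n^{\alpha'}$ queries for any $\alpha'\ge\alpha$ with the same amortized bound. We then choose the split with $\alpha'$ large enough that $c/(1+\alpha')<1$, and also $(\alpha'+\max(\delta,1-\gamma))/(1+\alpha')<1$, which holds automatically.

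The second remedy, if $\alpha$ must stay fixed, makes both the advice and the running time dominated by the exponent of the query count. Either way we choose $\eps<\min(\eps',1-c/(1+\alpha'))$ before selecting $k$. One must also check that $k$ can be chosen after $\eps$ while $w$ still depends only on $k$ and $\delta$. This holds because the statement allows $w$ to be chosen after $c,\alpha,\delta,\gamma$.

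With these choices the algorithm solves $k$-SAT in time $2^{(1-\eps)n}$ using $2^{(1-\eps)n}$ bits of advice, contradicting NUSETH.
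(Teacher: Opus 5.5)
Your proposal is correct and is essentially the paper's proof: the same downward self-reduction of $k$-SAT, with the $\varphi$-independent instance $\mathcal{D}_{N,w}$ of \cref{lem:poly_form_OV} preprocessed into the advice, and your first remedy (batching up to $N^{\alpha'}$ queries with $1+\alpha'>c$) is exactly the paper's choice $\beta=\min(1/(2c),1/(\alpha+1))$ for the fraction of variables kept, i.e.\ $1/\beta=1+\alpha'$ with $\alpha'=\max(\alpha,2c-1)$. Your quantifier order (fix $\varepsilon$ from $c,\alpha,\delta,\gamma$, then $k$ from NUSETH, then $w=\binom{k/\delta}{k}$) is stated more carefully than in the paper. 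The one minor imprecision is that padding cannot shrink the lemma's dimension $\binom{k/\delta}{k}N^\delta$ down to $N^\delta$; this is fixed by invoking the lemma with a slightly smaller $\delta'<\delta$ and then padding up, a detail the paper also glosses over.
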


\begin{proof}
     Assume that there exist constant $c, \delta, \alpha, \gamma$ and a data structure that solves $n^\alpha$ queries of $w$-Sparse-$\OnlineOV_{n,d}$ for $d=n^\delta$ in space $n^c$ and time $n^{1-\gamma}$ per query. Let 
    \begin{align*}
    \beta &= \min(1/(2c),\; 1/(\alpha+1))\;,\\
    \eps &= \min(1/2, \; \beta(1-\delta), \; \beta\gamma) \;.
    \end{align*}
    We will construct a non-uniform algorithm solving $k$-SAT on $m$ variables in time $2^{(1-\eps)m}$ for every constant~$k$, which refutes NUSETH.

    Let $\phi$ be a $k$-CNF formula on $m$ variables.  We apply the standard branching argument based on the downward self-reducibility of $k$-SAT. Namely, for every assignment $\cc{u} \in \bit^{(1-\beta)m}$ to the first $(1-\beta)m$ variables of $\phi$, let $\phi_{\cc{u}}$ be the remaining $k$-CNF formula on $\beta m$ variables. Note that $\phi$ is satisfiable if and only if at least one $\phi_{\cc{u}}$ is satisfiable. 

    By \cref{lem:poly_form_OV}, we obtain an $\OnlineOV_{N,d}$ instance $\mathcal{D}_{N,w}$ with $N=2^{\beta m}$ vectors in $d=N^\delta$ dimensions to solve $k$-CNF formulas on $\beta m$ variables.
    For each $\cc{u} \in \bit^{(1-\beta)m}$, we wish to determine if $\phi_{\cc{u}}$ is satisfiable.  By \cref{def:hardest_OnlinneOV}, we have that $\phi_{\cc{u}}$ is satisfiable if and only if there exists $v \in \mathcal{D}_{N, w}$ with $v \perp a_{\cc{u}}$. Let $\mathcal{Q}$ be the set of the vectors $a_{\cc{u}}$ for all formulas $\phi_{\cc{u}}$.
    Note that size of $|\mathcal{Q}| = 2^{(1-\beta)m}$, and that by \cref{def:hardest_OnlinneOV} it can be computed in time $\widetilde{O}(|\mathcal{Q}|\cdot N^\delta)=2^{(1-\beta+\beta\delta)m} = 2^{(1-\beta(1-\delta))m}$.

    The constructed non-uniform algorithm for $k$-SAT uses the advice string $\sigma$ and the query algorithm for the $w$-Sparse-$\OnlineOV_{n,d}$ problem to determine if there is an orthogonal pair of vectors in $\mathcal{Q}$ and $D$.

    First we argue the correctness of the designed non-uniform algorithm for $k$-SAT. First, by self-reduction of $k$-SAT, $\phi$ is satisfiable if and only if one of $\phi_{\cc{u}}$ is satisfiable. By \cref{def:hardest_OnlinneOV}, $\phi_{\cc{u}}$ is satisfiable if and only if there exists $v \in \mathcal{D}_{N, w}$ with $v \perp a_{\cc{u}}$.

    Now we analyze the running time of the designed algorithm for $k$-SAT. The length of the advice string $\sigma$ is $S=N^c = 2^{\beta c m}\leq 2^{m/2}\leq 2^{(1-\eps)m}$ by the choice of $\beta\leq 1/(2c)$ and $\eps\leq 1/2$. The time needed to construct all queries $\mathcal{Q}$ is $2^{(1-\beta(1-\delta))m}\leq 2^{(1-\eps)m}$ by the choice of $\eps\leq \beta(1-\delta)$. Finally, since the number of queries $|K|=2^{(1-\beta)m}\geq 2^{(\alpha \beta)m}=N^{\alpha}$ is no less than the number of queries in the assumed batch data structure, the amortized running time per query is $N^{1-\gamma}$, and the total running time of the \OnlineOV{} query algorithm for all queries is $|\mathcal{Q}|\cdot N^{1-\gamma}=2^{(1-\beta\gamma)m}\leq 2^{(1-\eps)m}$ by $\eps\leq \beta\gamma$.
\end{proof}

Finally, we observe that since some problems admit sparsity-preserving reductions from OnlineOV, we can establish hardness results for their sparse variants in the sub-linear dimension regime from \cref{thm:batchDNF}.

\begin{corollary}\label{cor:sparse_ds}
Under \rm{NUSETH}, for all constant $c,\alpha>0$ and $\delta, \gamma \in(0,1)$, there exists constant $w$ such that no $(n^c, n^{1-\gamma})$-data structure with computationally unbounded preprocessing for any of the following problems with $d=n^\delta$. 
\begin{align*}&  w\text{-Sparse-}\SubsetQuery_{n, d}, \; w\text{-Sparse-}\ContainmentQuery_{n, d}, \text{Monotone-}w\text{-}\DNFEval_{n,d} \; .
    \end{align*}
\end{corollary}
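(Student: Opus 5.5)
We reduce to \cref{thm:batchDNF} (in its single-query form, $\alpha$ arbitrary, since an $(S,T)$-data structure trivially yields a batch data structure with the same space and time per query) by giving, for each of the three problems, a sparsity-preserving and dimension-preserving translation of $w$-Sparse-$\OnlineOV_{n,d}$ instances and queries. Suppose, towards a contradiction, that one of the listed problems has an $(n^c,n^{1-\gamma})$-data structure for every constant $w$. We will convert it into an $(O(n^c),O(n^{1-\gamma}))$-data structure for $w$-Sparse-$\OnlineOV_{n,d}$ with $d=n^\delta$. Here $w=\binom{k/\delta}{k}$ is the constant supplied by \cref{thm:batchDNF}, where $k$ is the NUSETH clause width chosen for $\eps$ as in that proof. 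This contradicts \cref{thm:batchDNF}.

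The translations are the standard ones behind \cref{lem:ov_reduces_to_everything}, applied so that the \emph{input} side stays sparse. Identify each input vector $x_i$ with its support $X_i\subseteq[d]$, so $|X_i|=w$, and identify the query $q$ with its support $Q$. Then $x_i\perp q$ if and only if $X_i\subseteq \overline{Q}$.

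\textbf{ContainmentQuery.} Store the sets $X_1,\dots,X_n$, each of size $w$, and answer the query $\overline{Q}$. This gives exactly $w$-Sparse-$\ContainmentQuery_{n,d}$.

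\textbf{Monotone DNF.} Take the formula $\bigvee_i \bigwedge_{j\in X_i} y_j$ and evaluate it at the indicator of $\overline{Q}$. This is a monotone DNF with $n$ terms of width $w$ on $d$ variables, so it is an instance of Monotone-$w$-$\DNFEval_{n,d}$.

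\textbf{SubsetQuery.} Here the roles flip: the stored sets $S_i$ should contain the query. Taking $S_i=\overline{X_i}$ and query $Q$ gives $Q\subseteq \overline{X_i}$ iff $x_i\perp q$. However, $S_i$ is then co-sparse, not sparse. To obtain a $w$-Sparse-$\SubsetQuery$ instance, we interpret ``sparse'' in the sense the problem's sparsity naturally refers to, namely the complement representation.

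This interpretation point is the main obstacle: one must fix which side's sparsity is meant for $\SubsetQuery$, and it needs to be checked against the paper's intended definition. If the stored sets are genuinely required to be $w$-sparse, I would instead note that $\SubsetQuery$ and $\ContainmentQuery$ are dual under complementing both data and query. Then hardness for sparse containment transfers to subset query with stored complements of size at most $w$. I would state the sparsity parameter accordingly.

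All translations take $O(d)$ time per query, which is $n^\delta\le n^{1-\gamma}$ only when $\delta\le 1-\gamma$. When $\delta>1-\gamma$, I would avoid writing $\overline{Q}$ explicitly and pass $Q$ implicitly. Alternatively, since the reduction in \cref{thm:batchDNF} already spends $N^\delta$ time constructing each query, the complementation cost can be absorbed into that query-construction step, whose cost is bounded by the choice $\eps\le\beta(1-\delta)$. The cleanest route is therefore to rerun the proof of \cref{thm:batchDNF} with the complemented query vectors $\overline{a_{\mathbf u}}$ (resp.\ the DNF evaluation points) produced directly by $\mathcal{A}$. The advice is the stored structure for $\mathcal{D}_{N,w}$ translated as above, and the same parameter choices $\beta,\eps$ refute NUSETH.
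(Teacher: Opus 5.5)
Your proposal is correct and follows the paper's route. The paper justifies this corollary only by remarking that these problems admit sparsity-preserving reductions from $\OnlineOV$, which are then combined with \cref{thm:batchDNF}. Your handling of the query-complementation cost, either flipping bits on read or absorbing it into the $\widetilde{O}(N^\delta)$ query construction, is a detail the paper leaves implicit.

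On $\SubsetQuery$, the co-sparse reading is not optional, and your fallback branch should say so directly. If the stored sets themselves had size $w$, a query could succeed only when $|q|\le w$. Storing all at most $n2^w$ subsets of the stored sets in a dictionary would then already beat the claimed bound, for example when $c>1$ and $\delta<1-\gamma$. So that literal reading makes the statement false, and only the complement reading can be meant.
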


\begin{remark}\label{rem:ov_low_dim}
    We remark that, if one is willing to relax the sparsity requirement, then the lower bounds can be established in significantly smaller dimension. In particular, \cref{lem:poly_form_OV} holds even for $\delta = o(1)$. For example, setting $\delta = k/N$, the resulting hardest \OnlineOV{} instance has dimension $\poly(\log (N))$, albeit at the cost of the vectors becoming dense. Moreover, this instance coincides exactly with the hard \OnlineOV{} instance used in \cite{AV21} to prove lower bounds under NUSETH. Thus, \cref{lem:poly_form_OV} can be viewed as a generalization of \cite{AV21}, establishing hardness of \OnlineOV{} even for sparse instances and revealing a smooth tradeoff between sparsity and dimension in the construction of the hardest \OnlineOV{} instance. 
\end{remark}

We state the following corollary that follows from \cite{AV21}, but can be recovered from \cref{thm:batchDNF} by constructing a $(w, \delta)$-hardest \OnlineOV{} instance with $\delta = k/N$.
 
\begin{corollary}[\cite{AV21}]\label{cor:av_21}
    Under \rm{NUSETH}, for all constants $c,\alpha>0$, $\gamma\in(0,1)$, there exists a constant $\rho>0$ such that no batch data structure with computationally unbounded preprocessing can answer a set of $n^\alpha$ queries of $\OnlineOV_{n,d}$ for $d=O(\log(n)^\rho)$ in space $n^c$ and time $n^{1-\gamma}$ per query.
\end{corollary}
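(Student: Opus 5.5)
The plan is to rerun the proof of \cref{thm:batchDNF} verbatim, but with a hardest \OnlineOV{} instance whose dimension is polylogarithmic rather than polynomial. In \cref{lem:poly_form_OV} we take the number of blocks $\theta$ to be unbounded, namely $\theta = n/\log n$ (so $\delta = k/\theta = k\log n/n$ is the ``$\delta = o(1)$'' choice of \cref{rem:ov_low_dim}). Each block then contains $\log n$ variables. The coordinates of the instance are indexed by pairs $(B,\mu_B)$ with $B\in\binom{V}{k}$ and $\mu_B\in\bit^{k\log n}$. This gives dimension
\[
v=\binom{n/\log n}{k}\cdot n^{k}\le n^{2k}=(\log N)^{2k},\qquad N=2^n .
\]
Correctness of the construction is unchanged from the proof of \cref{lem:poly_form_OV}. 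For an assignment $\mu$, the vector $X_\mu$ puts a one exactly at the coordinates $(B,\mu|_B)$. For a formula $\varphi$, the query $q_\varphi$ puts a one at $(B,\mu_B)$ exactly when some clause assigned to $B$ is falsified by $\mu_B$. Hence $X_\mu\perp q_\varphi$ if and only if $\mu$ satisfies $\varphi$. The instance is still independent of $\varphi$, and $q_\varphi$ is computable in time $\widetilde{O}(v)=\poly(n)$. Sparsity is lost, since each $X_\mu$ has $\binom{\theta}{k}$ ones, but the statement does not require it. Hence we may take $\rho = 2k$.

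The order of quantifiers needs care, because $\rho$ must be chosen before $k$ is known. Fix $c,\alpha,\gamma$ and define $\beta=\min(1/(2c),1/(\alpha+1))$ and $\eps=\min(1/2,\beta\gamma/2)$. NUSETH applied to this $\eps$ supplies a $k=k(\eps)$, so $k$ depends only on $c,\alpha,\gamma$. We then set $\rho=2k+1$, which is a constant as the statement requires.

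Now suppose a batch data structure exists for $\OnlineOV_{N,O(\log(N)^\rho)}$ with $N^\alpha$ queries, space $N^c$ and time $N^{1-\gamma}$ per query. Given a $k$-CNF $\phi$ on $m$ variables, branch on the first $(1-\beta)m$ variables to obtain $2^{(1-\beta)m}\ge N^\alpha$ formulas on $\beta m$ variables, where $N=2^{\beta m}$. We build each query $a_{\cc u}$ in $\poly(m)$ time. The advice is the data structure built on the fixed instance $\mathcal{D}_N$, whose length is $N^c\le 2^{m/2}$. Answering all queries in batch costs
\[
2^{(1-\beta)m}\cdot N^{1-\gamma}=2^{(1-\beta\gamma)m}.
\]
Query construction adds $2^{(1-\beta)m}\poly(m)$, which is below $2^{(1-\eps)m}$ for large $m$. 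The total time is therefore below $2^{(1-\eps)m}$, which contradicts NUSETH.

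The main obstacle is the bookkeeping. First, the blocks of size $\log n$ must be handled with divisibility; we pad the variables with dummies, which costs only constant factors. Second, we must confirm that the dimension bound $(\log N)^{2k}$ is uniform in $N$ for the fixed $k$, so that one constant $\rho$ works. Third, the ``Efficiently computable'' and ``Evaluating'' requirements of \cref{def:hardest_OnlinneOV} should now be read with $\poly(n)$ in place of $N^\delta$. Apart from these points, the argument is a direct substitution into the proof of \cref{thm:batchDNF}.
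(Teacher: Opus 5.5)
Your proposal is correct and follows essentially the paper's route. You instantiate \cref{lem:poly_form_OV} with a non-constant number of blocks and rerun the proof of \cref{thm:batchDNF}, choosing $\rho$ only after NUSETH fixes $k=k(\eps)$; the paper instead takes singleton blocks ($\theta=n$), which is exactly the clause table of \cite{AV21} with dimension $\binom{n}{k}2^k$ and $\rho\approx k$, so your blocks of size $\log n$ work equally well but cost $\rho=2k+1$ and the divisibility padding.
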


\subsection{Space Lower Bounds from Non-Uniform HamPath}
In this section, we show that an efficient data structure for \kSUM{} with preprocessing would imply a non-uniform algorithm for \HamPath{} with running time exponentially smaller than~$2^n$. First, we define the \kSUM{} problem with preprocessing.

\begin{definition}\label{def:3sum_pre}
    \kSUM{} with preprocessing is a problem to be solved by a pair of deterministic algorithms $(\mathcal{P}, \mathcal{A})$ in two phases.

    \begin{description}
        \item{\bf Preprocessing phase.} In the first phase, the algorithm $\P$ receives $k$ sets: $\mathcal{X}_1, \hdots , \mathcal{X}_k$ each consisting of $N$ integers, and preprocesses them into a data structure $\sigma\in\{0,1\}^S$ consisting of~$S$ bits.
        
        \item{\bf Query phase.} In the second phase, the algorithm $\A$ receives a query $\mathcal{X}_1' \subseteq \mathcal{X}_1, \hdots, \mathcal{X}_k' \subseteq \mathcal{X}_k$, (query access to) the data structure~$\sigma$, and runs in time $T$. $\A$ outputs $1$ if there exist $x_1 \in \mathcal{X}_1', \hdots, x_k \in \mathcal{X}_k'$ such that $x_1 + \hdots + x_{k-1} = x_k$, and it outputs~$0$ otherwise.
    \end{description}
    Such a data structure is called an $(S, T)$-data structure for \kSUM{} with preprocessing.
\end{definition}

\begin{theorem}\label{thm:hampath}
Under the \rm{Non-Uniform HamPath Conjecture}, for every constant $k\geq3$ and ${\delta < \frac{1}{H(1/k)}}$ there is no $(N^{\delta}, N^{\delta})$-data structure with computationally unbounded preprocessing solving  \kSUM{}.

\end{theorem}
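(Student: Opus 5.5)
Assume, towards a contradiction, that for some $k\ge 3$ and $\delta<1/H(1/k)$ there is an $(N^\delta,N^\delta)$-data structure $(\P,\A)$ for \kSUM{} with preprocessing. We will use it to build a non-uniform algorithm for \HamPath{} running in time $2^{(1-\eps)n}$ with $2^{(1-\eps)n}$ bits of advice, which contradicts \cref{conj-nuhampath}. The argument generalizes the reduction sketched in the proof overview from $3$ pieces to $k$ pieces. Assume $n$ is a multiple of $k$ (padding otherwise), and let $\mathcal{S}_{n/k}$ be the family of $(n/k)$-subsets of $[n]$.

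The data structure is built over $k$ fixed sets that depend only on $n$. Take $\mathcal{X}_1=\dots=\mathcal{X}_{k-1}=\{\Int(S): S\in\mathcal{S}_{n/k}\}$. To match the form $x_1+\dots+x_{k-1}=x_k$ of \cref{def:3sum_pre}, take $\mathcal{X}_k=\{2^n-1-\Int(S): S\in\mathcal{S}_{n/k}\}$. Then $x_1+\dots+x_{k-1}=x_k$ holds if and only if $\Int(S_1)+\dots+\Int(S_k)=2^n-1$. We need this sum condition to force $S_1,\dots,S_k$ to be a partition of $[n]$. One direction is clear. For the other, the sets have total size $n$, so the sum of $k$ integers with $n$ ones in total equals the all-ones $n$-bit number exactly when no carries occur. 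Indeed, any carry strictly reduces the popcount of the result: popcount$(a+b)=$ popcount$(a)+$popcount$(b)-\#$carries. Hence the sum equals $2^n-1$ only if the supports are disjoint. The advice is the preprocessed string $\sigma$, of length $N^\delta$ with $N=\binom{n}{n/k}\le 2^{H(1/k)n}$.

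Given a graph $G$, run \cref{prop:hk} with path length $n/k$ in time $\binom{n}{\le n/k}\poly(n)=\Ot(N)$. This computes $D[S,u,v]$ for all $S\in\mathcal{S}_{n/k}$, meaning: there is a simple path starting at $u\in S$ that covers exactly $S$ and whose last vertex is adjacent to $v$ (the last piece instead just ends). A Hamiltonian path splits into $k$ consecutive blocks of $n/k$ vertices, and the first vertices $a_1,\dots,a_k$ of the blocks are linked by edges. So $G$ has a Hamiltonian path if and only if there exist vertices $a_1,\dots,a_k$ and a partition $S_1,\dots,S_k$ such that $D[S_j,a_j,a_{j+1}]=1$ for $j<k$ and the last block covers $S_k$ starting from $a_k$. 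For each of the $n^k$ tuples $(a_1,\dots,a_k)$, form the query with $\mathcal{X}'_j=\{\Int(S): D[S,a_j,a_{j+1}]=1\}$ for $j<k$ and $\mathcal{X}'_k$ defined analogously through the complemented encoding. Building each query costs $\Ot(N)$, and the data structure answers it in time $N^\delta$. Output yes if and only if some query answers $1$. Correctness follows from the partition characterization above.

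The total running time is $n^k(\Ot(N)+N^\delta)$ and the advice length is $N^\delta\le 2^{\delta H(1/k)n}$. Since $\delta H(1/k)<1$ and $H(1/k)<1$, both are at most $2^{(1-\eps)n}$ for some constant $\eps>0$, contradicting \cref{conj-nuhampath}. We may assume $\delta\ge1$, since reading the query already needs linear time. The main obstacle is the no-carry argument that turns the arithmetic condition into disjointness. A secondary point is handling the endpoint conventions of $D$ at the chain boundaries, which \cref{prop:hk}'s table supports directly.
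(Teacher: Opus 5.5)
Your proposal is correct and follows the paper's proof essentially step for step: the same fixed sets $\mathcal{X}_1=\dots=\mathcal{X}_{k-1}=\{\Int(S)\}$ and $\mathcal{X}_k=\{2^n-1-\Int(S)\}$ stored as advice, the Bellman--Held--Karp table for $D[S,u,v]$, one \kSUM{} query per tuple of block-start vertices, and the no-carry argument that forces the $S_i$ to partition $[n]$. Handling the last block without an extra endpoint (so $n^k$ rather than $n^{k+1}$ queries) and explicitly charging $\Ot(N)$ to build each query are minor refinements; the former also avoids the paper's need for $a_n$ to have a neighbor that can serve as $v_{k+1}$.
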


\begin{proof}
Let $G$ be an instance of \HamPath{} on~$n$ vertices.
    If $n$ is not a multiple of~$k$, assuming $r=n \bmod k$, we add to the graph a path on $(k-r)$ vertices, and add an edge from each vertex of the original graph to the first vertex of the path. The new graph has a Hamiltonian path if and only if the original graph does. Thus, in the following we will assume that $n$ is a multiple of~$k$.
    
    Assume, towards a contradiction, that there is an $(N^{\delta}, N^{\delta})$-data structure $(\mathcal{P}, \mathcal{A})$ for \kSUM{} with preprocessing for $1\leq \delta < \frac{1}{H(1/k)}$. Then on input
     sets $\mathcal{X}_1, \ldots \mathcal{X}_k$ of size $N$, the computationally unbounded $\mathcal{P}$ outputs a data structure $\sigma\in\{0,1\}^{N^{\delta}}$, and for every query $q = \{\mathcal{X'}_1 \subseteq \mathcal{X}_1, \ldots, \mathcal{X'}_k  \subseteq \mathcal{X}_k\}$, $\mathcal{A}^{\sigma}(q)$ runs in time $T = N^{\delta}$ and outputs $1$ if and only if there exist $x'_1 \in \mathcal{X'}_1, \ldots, x'_k \in \mathcal{X'}_k$ such that $x'_1 + \hdots + x'_{k-1} = x'_{k}$.

   We describe a (non-uniform) reduction from \HamPath{} to \kSUM{} with preprocessing. For a set $S\subseteq[n]$, let $\Int(S)$ denote the integer from $\{0,\ldots,2^n-1\}$ whose binary representation is the indicator vector of~$S$. 
   Let $\mathcal{S}_{n/k}$ be the collection of all subsets of $[n]$ of size $n/k$. We run the preprocessing algorithm $\P$ for \kSUM{} on the following sets of size $N = |\mathcal{S}_{n/k}|$.
    \begin{align*}
        \mathcal{X}_1, \, \ldots, \, \mathcal{X}_{k-1} &= \{ \Int(S) \colon S \in \mathcal{S}_{n/k}  \}\;,\\
        \mathcal{X}_k &= \{ 2^{n} -1 - \Int(S) \colon S \in \mathcal{S}_{n/k}  \}\;.
    \end{align*}
    We store the returned data structure $\sigma \in \{0,1\}^{N^\delta}$ as the non-uniform advice of our algorithm for \HamPath{}.
    (We note that the sets $\mathcal{X}_i$ only depend on~$n$ and do \emph{not} depend on a \HamPath{} instance.)

Next, we describe a RAM algorithm solving \HamPath{} for a given $n$-vertex graph using the advice string~$\sigma$ and the query algorithm $\A$ for \kSUM{} with preprocessing. First, we use the dynamic programming algorithm of \cref{prop:hk} to  find all simple paths of length exactly $n/k$ in time $\binom{n}{\leq n/k}\cdot\poly(n)\leq 2^{nH(1/k)}\cdot\poly(n)$.
For a set $S\in\mathcal{S}_{n/k}$, and vertices $u,v\in[n]$, we define the following variable,
    \[
    D[S, u, v] = 
\begin{cases}
    1 & \text{if $u\in S$ and $\exists$ a simple path from $u$ to a neighbor of~$v$ using all vertices in $S$,} \\
    0  & \text{otherwise.}
\end{cases}
\]
Using the fact that each Hamiltonian path can be partitioned into $k$ disjoint simple paths of length $n/k$, we have the following. $G$ has a Hamiltonian path if and only if there exists a set of $k+1$ vertices $v_1, \ldots, v_{k+1}\in[n]$ and a collection of $k$ disjoint sets:  $S_1, \ldots, S_k \in \mathcal{S}_{n/k}$ such that
    \[\bigwedge_{i \in [k]}D[S_i, v_i, v_{i+1}] = 1\;.\]

    Next, we design $n^{k+1}$ \kSUM{} queries that will determine whether $G$ has a Hamiltonian path. For every $q = (v_1, \hdots, v_{k+1})$, we compute the following subsets:
    \begin{align*}
        \mathcal{X}^{q}_i &= \{ \Int(S) \colon S \in \mathcal{S}_{n/k}, \; v_i \in S, \text{ and }   D[S, v_i, v_{i+1}]=1 \} \text{ for } i \in [1, k-1]\;,\\
        \mathcal{X}^{q}_{k} &= \{2^{n} - 1 - \Int(S) \colon S \in \mathcal{S}_{n/k}, \; v_{k} \in S,  \text{ and }   D[S, v_{k}, v_{k+1}]=1\}\;.
    \end{align*}

    Since for every $i\in[k]$, $\mathcal{X}^{q}_{k}\subseteq \mathcal{X}_{k}$, we can use the query algorithm $\A$ for the \kSUM{} with preprocessing to check if there exist $x_1 \in \mathcal{X}^q_1, \hdots, x_k \in \mathcal{X}^q_k$ such that $x_1 + \hdots + x_{k-1} = x_k$. Or, equivalently, if there exist $v_1,\ldots, v_{k+1}\in[n]$ and $S_1,\ldots,S_k\in\mathcal{S}_{n/k}$ such that 
    \begin{align}\label{eq:hampath}
    \Int(S_1)+\ldots+\Int(S_k)=2^n-1 \text{\;\;\;and\;\;\;}
    D[S_i, v_i, v_{i+1}]=1 \text{ for all } i\in[k].
    \end{align}
    The algorithm reports that there exists a Hamiltonian Path in $G$ if and only if at least one of the \kSUM{} queries returns~$1$,
    \[\bigvee_{q = (v_1, \hdots, v_{k+1)} \in [n]^{k+1}}\A^{\sigma}({\mathcal{X}^{q}_1, \ldots, \mathcal{X}^{q}_{k}})\;. \]
    
    This completes the reduction. We first argue the correctness of the reduction. Namely, we show that \cref{eq:hampath} holds for one of the queries if and only if $G$ has a Hamiltonian Path.
    Let $(a_1,\ldots,a_n)$ be a Hamiltonian path in~$G$. For $i\in[k]$, we set $v_i=a_{1+(i-1)n/k}$, and we set $v_{k+1}$ to be an arbitrary neighbor of $a_n$. Similarly, $S_i = \{a_{1+(i-1)n/k}, \ldots,  a_{in/k}\}$ for every $i\in[k]$. Note that $D[S_i, v_i, v_{i+1}]=1$ for all $i\in[k]$. Finally, since all $S_i$ are disjoint, we have that $\Int(S_1)+\ldots+\Int(S_k)=2^n-1$, which implies that \cref{eq:hampath} holds for the query $q=(v_1,\ldots,v_{k+1})$. 
    
    For the other direction, let us assume that \cref{eq:hampath} holds for some $q = (v_1, \hdots, v_{k+1})$ and $S_1,\ldots, S_{k}$. We use the fact that the sum of $k$ numbers, each having exactly $n/k$ ones in their binary representation, has $n$ ones in the binary representation if and only if the supports of the ones are disjoint. Thus, \cref{eq:hampath} implies that all $S_i$ are disjoint. This, together with $D[S_i, v_i,v_{i+1}]=1$ for all $i\in[k]$, gives us a Hamiltonian path in~$G$.

    We now analyze the complexity of the presented algorithm for \HamPath{}. The non-uniform advice has length $N^\delta=2^{\delta nH(1/k)}\cdot\poly(n)$. The running time of the algorithm is bounded by the running of the dynamic programming algorithm of \cref{prop:hk} and $n^{k+1}$ queries to $\A$, 
    \[
    2^{n H(1/k)}\cdot\poly(n) + n^{k+1}\cdot N^\delta = (2^{nH(1/k)}+2^{\delta nH(1/k)})\cdot\poly(n) 
    =(2^{\delta nH(1/k)})\cdot\poly(n) 
    \;.
    \]
    
    If $\delta < 1/H(1/k)$, by setting $\gamma = 1- \delta \cdot H(1/k)>0$, we obtain a $2^{\delta H(1/k) n} \cdot\poly(n)  = 2^{(1-\gamma)n} \cdot\poly(n)$ non-uniform algorithm for \HamPath{}, and refute the Non-uniform Hamiltonian Path Conjecture.
\end{proof}

\begin{corollary}
    Under the Non-Uniform HamPath Conjecture, there is no $(N^{1.088}, N^{1.088})$-data structure with computationally unbounded preprocessing solving  \ThreeSUM{}.
\end{corollary}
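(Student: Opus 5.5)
This corollary is a direct instantiation of \cref{thm:hampath} with $k=3$. The plan is to compute the threshold $1/H(1/3)$ and check that $1.088$ lies strictly below it. Then the theorem applied with $\delta = 1.088$ rules out an $(N^{1.088}, N^{1.088})$-data structure for \kSUM{} with $k=3$.

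First, I would record that \ThreeSUM{} with preprocessing as in \cref{def:3sum_pre} (with $k=3$) is exactly the problem covered by \cref{thm:hampath}. Its query asks for $x_1\in\mathcal{X}_1'$, $x_2\in\mathcal{X}_2'$, $x_3\in\mathcal{X}_3'$ with $x_1+x_2=x_3$, so no reformulation is needed.

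Second, I would evaluate the entropy:
\[
H(1/3) = \tfrac13\log 3 + \tfrac23\log\tfrac32 = \log 3 - \tfrac23 .
\]
With $\log 3\approx 1.58496$, this gives $H(1/3)\approx 0.91830$, and hence $1/H(1/3)\approx 1.0889$.

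The only delicate point is verifying the strict inequality $1.088 < 1/H(1/3)$ rigorously, since the margin is about $10^{-3}$. I would check it by a certified bound. It suffices that $1.088\cdot H(1/3) < 1$, which is equivalent to
\[
\log 3 < \tfrac{1}{1.088} + \tfrac23 \approx 0.91912 + 0.66667 = 1.58578 .
\]
This holds because $\log 3 < 1.585$. That bound follows, for instance, from $3^{200} < 2^{317}$, or more simply from the standard estimate $\log_2 3 = 1.58496\ldots$.

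With $\delta = 1.088 < 1/H(1/3)$, \cref{thm:hampath} then yields the corollary under the Non-Uniform HamPath Conjecture.
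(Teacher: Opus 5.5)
Your proposal is correct and matches the paper, which also obtains this corollary by instantiating \cref{thm:hampath} with $k=3$, using $1/H(1/3)\approx 1.0889>1.088$. Your certified check via $\log 3<1.585$ (from $3^{200}<2^{317}$) rigorously establishes the strict inequality, which the paper leaves implicit.
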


We remark that \cref{thm:hampath} also gives a \emph{uniform} reduction from \HamPath{} to the \emph{offline} version of \kSUM{}.

\begin{corollary}
    For every $k\geq3$ and every ${\delta < \frac{1}{H(1/k)}}$ there is a $\gamma>0$, such that an algorithm solving \kSUM{} on lists of size $N$ in time $N^{\delta}$ implies an algorithm solving \HamPath{} in time $2^{(1-\gamma)n}$.
\end{corollary}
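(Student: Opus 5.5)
The plan is to reuse the reduction in the proof of \cref{thm:hampath} unchanged, but to read the query algorithm $\A$ as an ordinary uniform offline \kSUM{} algorithm rather than as a data structure with advice. Let $G$ be an $n$-vertex instance of \HamPath{}. As in that proof, first pad $G$ so that $k$ divides $n$; this adds at most $k$ vertices. Then set $N=|\mathcal{S}_{n/k}|=\binom{n}{n/k}\le 2^{nH(1/k)}$. Next run the dynamic program of \cref{prop:hk}, which computes every value $D[S,u,v]$ with $|S|=n/k$ in time $2^{nH(1/k)}\poly(n)$.

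For each tuple $q=(v_1,\ldots,v_{k+1})\in[n]^{k+1}$, build the lists $\mathcal{X}^q_1,\ldots,\mathcal{X}^q_k$ exactly as in the proof of \cref{thm:hampath}. Each list has at most $N$ entries, each an $n$-bit integer, so building them costs $\tilde O(N)$ per query. Then run the assumed offline \kSUM{} algorithm on $(\mathcal{X}^q_1,\ldots,\mathcal{X}^q_k)$.

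If some list is shorter than $N$, pad it with dummy integers that cannot take part in any solution. One option is a large value $M=2^{n+2}k$ plus distinct offsets placed in one list only. Because all genuine entries lie in $[0,2^n)$, the equation $x_1+\cdots+x_{k-1}=x_k$ fails whenever exactly one dummy appears. To prevent dummies from combining with each other, choose the offsets so that the sums of dummies land outside the range of the right-hand side. Concretely, add $M$ to dummies in $\mathcal{X}_1$ and $2kM$ to dummies in $\mathcal{X}_k$, with distinct small offsets in each. I expect checking this gadget carefully to be the only non-routine point. The simpler fallback is to assume the \kSUM{} algorithm handles lists of size at most $N$.

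Output ``yes'' if and only if some query returns $1$. Correctness is exactly the two-directional argument already given in the proof of \cref{thm:hampath}. Nothing there used the preprocessing phase: the sets $\mathcal{X}_i$ were fixed and played no role beyond containing the query lists.

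For the running time, there are $n^{k+1}=\poly(n)$ calls, each costing $\tilde O(N)+N^\delta$. Together with the dynamic program, the total is $\poly(n)\cdot 2^{\max(1,\delta)nH(1/k)}$. Set $\gamma'=1-\max(1,\delta)H(1/k)$. This is positive because $\delta<1/H(1/k)$ and $H(1/k)<1$ for $k\ge3$. Taking any $\gamma<\gamma'$ absorbs the polynomial factors and gives time $2^{(1-\gamma)n}$ for large $n$. The padding changes $n$ by at most $k$ and so only affects constants.
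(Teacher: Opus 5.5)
Your route is the paper's. Its proof of this corollary is only the remark that the reduction in the proof of \cref{thm:hampath} never uses the preprocessing phase. Your added bookkeeping is sound: the $\widetilde{O}(N)$ cost of building each list, and the exponent $\max(1,\delta)H(1/k)$, which the theorem's final calculation sidesteps by assuming $\delta\ge 1$. One small slip in your list-padding gadget: dummies in every $\mathcal{X}^q_i$ with $i<k$, not only in $\mathcal{X}^q_1$, should get the shift $M$. Your estimate then rules out any solution that uses a dummy.

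The step that fails as written is ``correctness is exactly the two-directional argument of \cref{thm:hampath}''. The defect is already in that proof, and you import it along with its padding step.
\begin{itemize}
\item Hamiltonian Path is posed on directed graphs here. So $D[S,u,v]=1$ must mean that the covering path from $u$ ends at an in-neighbor of $v$; with adjacency in either direction, the blocks could not be concatenated in the converse direction.
\item Hence the forward direction needs $v_{k+1}$ to be an out-neighbor of the last vertex $a_n$ of the Hamiltonian path.
\item If every Hamiltonian path of $G$ ends at a sink, no query can return $1$. For example, take $G$ to be the directed path $1\to 2\to\cdots\to n$. By the converse argument, a positive query would yield a Hamiltonian path whose last vertex has an out-edge.
\item The padding makes this unavoidable. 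It appends a directed path after all original vertices, and the last vertex of that path is a sink at which every Hamiltonian path of the padded graph must end. So whenever $n$ is not a multiple of $k$, your algorithm rejects every yes-instance.
\end{itemize}

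The repair is routine: impose no endpoint condition on the last block.
\begin{itemize}
\item Let $\mathcal{X}^q_k$ consist of the numbers $2^n-1-\Int(S)$ for those $S\in\mathcal{S}_{n/k}$ such that some simple path starting at $v_k$ has vertex set exactly $S$. This can be read off the table of \cref{prop:hk}.
\item Let $q=(v_1,\ldots,v_k)$ range over $[n]^k$.
\item Alternatively, enumerate the start and end vertex of every block and check the $k-1$ connecting edges directly.
\end{itemize}
With this change, both directions of the correctness argument and your running-time analysis go through with only the obvious modifications.
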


\section*{Acknowledgments}
We are very grateful to Virginia Vassilevska Williams and an anonymous SODA reviewer who pointed out a simpler proof~\cite{AV21} of the conditional lower bound for \OnlineOV{} in the dense regime (\cref{cor:av_21}). We also thank all SODA reviewers for their helpful comments.

This research is supported by the National Science Foundation CAREER award (grant CCF-2338730).
\bibliographystyle{alpha}
\bibliography{refs}

\end{document}